\documentclass[10pt]{iopart}
%\newcommand{\gguide}{{\it Nonlinear stability of self-gravitating massive fields:
%\\ 
%a wave-Klein Gordon-model}}
%Uncomment next line if AMS fonts required
\usepackage{iopams}  
\usepackage{slashed}
\usepackage{graphicx}
\usepackage{booktabs,rotating,tabularx}
\usepackage{mathrsfs}  
 
 \expandafter\let\csname equation*\endcsname\relax

\expandafter\let\csname endequation*\endcsname\relax

%
% \makeatletter
%\let\@fnsymbol\@arabic
%\makeatother

%%%%%%%%%%%%%%%%%%%%
% ? \renewcommand{\thefootnote}{\arabic{footnote}}

\usepackage{amsthm}
\usepackage{amsmath, amsfonts, amssymb, amscd}
\usepackage{xcolor} 

\newtheorem{theorem}{Theorem}[section]
\newtheorem{proposition}[theorem]{Proposition}
\newtheorem{lemma}[theorem]{Lemma}

\newtheorem{definition}[theorem]{Definition}
\numberwithin{equation}{section}
\DeclareFontFamily{U}{BOONDOX-calo}{\skewchar\font=45 }
\DeclareFontShape{U}{BOONDOX-calo}{m}{n}{
<-> s*[1.05] BOONDOX-r-calo}{}
\DeclareFontShape{U}{BOONDOX-calo}{b}{n}{
<-> s*[1.05] BOONDOX-b-calo}{}
\DeclareMathAlphabet{\mathcalboondox}{U}{BOONDOX-calo}{m}{n}
\SetMathAlphabet{\mathcalboondox}{bold}{U}{BOONDOX-calo}{b}{n}
\DeclareMathAlphabet{\mathbcalboondox}{U}{BOONDOX-calo}{b}{n}

\usepackage{chngcntr}
\counterwithin{figure}{section}

%

%=======================================================================

\newcommand \crochet {\mathbf X}
% {\la r-t \ra}

\newcommand \Fenergy {\mathcalboondox F}           % {\mathbf F} 
\newcommand \Eenergy {\mathcalboondox E}    % {\mathbf E} 
\newcommand \Time {\mathbf T}

%====================================
 
\newcommand \E {\mathcal{E}} % for Euclidean
 %fusioned to the above one

\newcommand \ME {\mathcal{EM}} % for merging- Euclidean
\newcommand \EM {\mathcal{EM}} % for merging- Euclidean
% 
 % for Euclidean-hyperboloidal

\renewcommand \H {\mathcal H} % for hyperbolpidal

\newcommand \M {\mathcal M} % for merging
\newcommand \Mcal {\mathcal M} % fusioned to \M

\newcommand \N {\mathcal N} % for null

\newcommand \Lcal {\mathcal L} % exponent for the energy on the light cone 

% 2nd GROUP. 3 dimensional slices. Main letter are in $\mathscr$ with explicative exponents.
\newcommand \Mscr       {\mathscr M} 
\newcommand \Mext       {\Mscr^{\mathcal E}} % see explications below

\newcommand \MH 		{\Mscr^{\mathcal{H}}} 
%fusioned to the \MH

\newcommand \MME 		{\Mscr^{\mathcal{EM}}}  
% to be fussioned to \MME

\newcommand \MM 		{\Mscr^{\mathcal{M}}} 
\newcommand \Mtran      {\Mscr^{\mathcal{M}}}  %fussioned to \MM

\newcommand \MMEnear  	{\Mscr^{\textbf{near}}} 
 
%This group contains risk of ambiguity. i.e., \ME  is occupied by \mathcal{ME} for a exponent. So we have to introduce \Mext

\newcommand \Mnear {\Mscr^\textbf{near}}
\newcommand \Mfar{\Mscr^\textbf{far}}

\newcommand \near {\textbf{near}}

%=========================================================================
% definnition des domaines en espace temps et des hypersurfaces

 % never applied
%==========================================================================
% 4 dimensional domains 
  % this is applied only twice.

%==============================================================================

%\newcommand \red {\color{red}}
%\newcommand \blue {\color{blue}}
 
%==============================================================================

%===============================================================================
%3rd GROUP, tensor, vector fields and the families of vectors.
  % Lorentzian group
% \newcommand \Zfrak {\mathfrak Z}  % general croup of applicable ones
  % 

  % Rotations
  % Partial derivatives
 % vectors for estimates based on integration along curves
%--------------------------------------------------------------------------------
\newcommand \delH {\del^{\mathcal H}} 
\newcommand \delsH {\slashed \del^{\mathcal H}}
\newcommand \PsiH {{\Psi^{\mathcal H}}{}}

% never applied
 %never applied
\newcommand {\usH} {\slashed{u}^{\mathcal{H}}}
 %fusioned to the above one

%--------------------------------------------------------------------------------

\newcommand \delN {\del^{\mathcal{N}}}
\newcommand \delsN {\slashed \del^{\mathcal{N}}}
\newcommand \PsiN {{\Psi^{\mathcal N}{}}}  
\newcommand \PhiN {{\Phi^{\mathcal N}}{}}

\newcommand \HN {H^{\mathcal{N}}{}}

%--------------------------------------------------------------------------------

\newcommand \delsE {\slashed \del^\E}

\newcommand \delsEH {\slashed {\del}^{\mathcal{EH}}}
\newcommand \delsME {\slashed {\del}^{\mathcal{EM}}} % this is modified form $^{\mathscr{ME}}$ to $^{\mathcal{ME}}$

 % never applied
%\newcommand \delsE {{\slashed \del}^{\mathcal E}{}}
% \newcommand \sdelH {{\slashed \del}^{\mathcal H}{}}

\newcommand \rhoH {{\mathbf r}^{\mathcal{H}}}    % {\rho^{\mathcal{H}}}
\newcommand \rhoE {{\mathbf r}^{\mathcal{E}}}                % {\rho^{\mathcal{E}}}

%-----------------------------------------------------------------------------

%------------------------------------------------------------------------------
%good conponents for tensors

%------------------------------------------------------------------------------
% high-order derivatives
% This is for high-order derivative. Should not apply mathcal (confliction to explicative exponants 1st group). We have applied $Z$ with ord(Z) and co-deg(Z). Apply this 
%==============================================================================

%==============================================================================
%4th GROUP, linear decomposition. This may be not necessary.

 % Hyperboloidal region, linear combination with homogeneous coefficients
 % Merging-Euclidean region, linear combination with homogeneous coefficients

%==============================================================================

%5th GROUP, Functional densities of the solution

%fusioned to the above one

%\Bbb is occupied. Here renew-defined
%fusioned to the above one

\newcommand \Fbb {\mathbb F}
% fusioned to the above one

\newcommand \Pbb {\mathbb P}

\newcommand \Qbb {\mathbb Q}
%fusioned to the above one

\newcommand \Sbb {\mathbb S} 

%fusioned to the above one
% never applied

\newcommand \SbbME{\mathbb S^{\mathcal{EM}}}
 %fusioned to the above one

%fusioned to the above one

%never applied
%never applied

  	% interaction terms 

%6th GROUP, Energies, integrated functionals of solution
\newcommand \Ebf {\mathbf E}

\newcommand \la \langle
\newcommand \ra \rangle

%============================================================

% 7th GROUP, text in indices

%
\newcommand \init {\textbf{init}}

%=============================================================
% 8th good components, slashed letters

%------------------------------------here--------------------------------------------------------------

% 

%\newcommand \ZH  {\Zscr^\H}

%\newcommand \ZME  {\Zscr^\ME}

%\newcommand \Zint  {\Zscr^{\textrm{int}}}
%\newcommand \Zext  {\Zscr^{\textrm{ext}}}
% {\ME}}
% {\Zscr^{\text{ext}}}
%\newcommand \Zc{\Zscr^{\text{com}}} 
%
 
\newcommand \dive {\text{div} \hskip.05cm}

%=============================================================

\newcommand \del \partial 
%------------  

%==================== 
 \newcommand \delu \delH 
 
% 

%--------------------------------------------------------------------------

%--------------------------------
\newcommand \delEH {\del^{\mathcal{E \hskip-.06cm H}}} 

%%%%%%%%%%%%%%%%%%%%%%%%%%%%%%%%%%%%%%%%%%%%

\newcommand \Hb{\overline H}
 
%========================================================
%

% {\slashed{\au}}
%
%\newcommand \gt{\tilde{a}}

% {\slashed{\tilde{g}}}
%\newcommand \gu{\underline{a}}

% {\slashed{\gu}}
%
%\newcommand \ut{\tilde{u}}
\newcommand \uts   {{\slashed u}^\Ncal}
% {\slashed{\ut}}
%\newcommand \uu{\underline{u}}

% {\slashed{\uu}}
%
%\newcommand \dels {\slashed{\del}}
\newcommand \delus  \delsH
% {\slashed{\delu}}

\newcommand \delts  \delsN

% {\slashed{\delt}}
%
\def\blfootnote{\gdef\@thefnmark{} \@footnotetext}
%
%---------------------------------------------------------------------------------------------------------------------

%------------------------------------------------------------------------------------------------------------------------------

%

\newcommand \vep \epsilon
% \varepsilon}

%
\newcommand \Sch {S}

\newcommand \source{\textbf{sour}}

\newcommand \Boxt {\widetilde \Box}

\newcommand \Lscr{\mathscr{L}}

%
 
%

%

% 

%

%------------------------------------------------------------------------------

%

 %

% {\widetilde{u}}
 
%

%======================================== 

%%%%%% \newcommand \deluh{\widehat{\delu}}

%

%

%

%

%

\newcommand \BoxChapeau {\widehat \Box} 
%
%----------------------------------------------------------------------------

%                      {\widetilde Z}

%
% \newcommand \Lscr {\mathscr L} 
%

%---------------------------------------------------------------------------------------------------------------------------------- 

%
\newcommand \Hcal {\mathcal H}

\newcommand \Hu {\underline{H}}

\newcommand {\gu}{\underline{g}}

\newcommand \RR{\mathbb{R}}
\newcommand \Tbf	{\mathbf{T}}
\newcommand \Abf	{\mathbf{A}}
\newcommand \Bbf	{\mathbf{B}}

\newcommand {\eps} \epsilon

\let\oldmarginpar\marginpar
\renewcommand\marginpar[1]{\- \oldmarginpar[\raggedleft\footnotesize #1]%
{\raggedright\footnotesize #1}}

\newcommand \TJ T

\newcommand \gd g  
 
\newcommand \Rd  R 
\newcommand \Rbd R 
 
\newcommand \omegad  \omega 
\newcommand \nablabd \nabla  
\newcommand \Gammad \Gamma 
\newcommand \Gammab \Gamma 
\newcommand \Deltab \Delta 

\newcommand \coef \kappa

\newcommand \Ocal {\mathcal O}
\newcommand \Ncal {\mathcal N}

%

%===========================================================================

%------------------------------------------------------------ 

%----------------------------------------------------------------------------------------------------------
% correct ? 

% \slashed\del^\Ncal}      

%------------------------

\renewcommand \gu {g^\Hcal{}}

\renewcommand \BoxChapeau {\widetilde \Box}

%-----------------------------
%

% 

\renewcommand \Hu {H^\Hcal{}}

\newcommand \Gbf {\mathbf G}

\newcommand \Rbf {\mathbf R}
\newcommand \lbf {\mathbf l}

%============================================================================ 

%%%%%%%%%%%%%%% \newcommand \cdeux {\ell}

\newcommand \ord {\textbf{ord}}

% co-deg}}

\newcommand \Mink {\textbf{Mink}}
\newcommand \rank {\textbf{rank}}
\newcommand \gMink {g_\textbf{Mink}}
\renewcommand \Sch {\textbf{Sch}}

\newcommand \glue {{}}                 % \textbf{glue}}

\newcommand \easy {\textbf{easy}}

\newcommand \LOmega {Y_\text{rot}} % {\Xi}

\newcommand \exposant \rho
\newcommand \vecnnu \nu
\newcommand \notrelapse L

%============================================================================= 

% \newcommand \Eenergy {\mathscri E}

\newcommand \be {\begin{equation}}

\newcommand\ee {\end{equation}}
\newcommand \gbf{\mathbf g}
\newcommand \bse {\begin{subequations}}
\newcommand \ese {\end{subequations}}
%----------------------------------------------------------------------------------------------------------
\let\oldaligned\aligned
\def\aligned{\oldaligned\relax}
 %\usepackage[mathcal]{euscript}
%\usepackage{mathrsfs} 

%\usepackage{ulem}  
%\usepackage{subfig}
%\usepackage{mathrsfs}
%\usepackage{rotating}
%\usepackage{verbatim} 
%\usepackage{tensor}    
%% 
%\usepackage{dsfont} 
%\usepackage{extarrows} 
%\usepackage{lineno} 
%\usepackage {bbm}
%
%\linespread{1.}
\newcommand \bei {\begin{itemize}} 
\newcommand\eei {\end{itemize}}

%----------------------------------------------------------------------------------------------
%\usepackage{graphicx}
%\makeatletter
%\def\widebreve#1{\mathop{\vbox{\m@th\ialign{##\crcr\noalign{\kern\p@}%
%  \brevefill\crcr\noalign{\kern0.1\p@\nointerlineskip}%
%  $\hfil\displaystyle{#1} \hfil$\crcr}}} \limits}
%%
%\def\brevefill{$\m@th \setbox \z@\hbox{}%
% \scalebox{0.6}{\rotatebox[origin=c]{90}{(}} \kern1pt $}
%\makeatletter 
%%=============================================================================

%===================================================================================
\begin{document}

\title[Nonlinear stability of self-gravitating massive fields]{Nonlinear stability of self-gravitating massive fields. 
\\ 
A wave-Klein-Gordon model}

%-----------------------------------------------------
% correct spelling:   LeFloch  or  LeFLOCH 
%
% WRONG spelling:    Le Floch    or  Lefloch
%----------------------------------------------------

\author{Philippe G. LeFloch$^1$ and and Yue Ma$^2$}

\address{$^1$ Laboratoire Jacques-Louis Lions and Centre National de la Recherche Scientifique, Sorbonne Universit\'e, 
4 Place Jussieu, 75252 Paris, France. 
\\
$^2$ School of Mathematics and Statistics, Xi'an Jiaotong University, Xi'an, 710049 Shaanxi, People's Republic of China.
}
\ead{contact@philippelefloch.org, yuemath@mail.xjtu.edu.cn}
\vspace{10pt}
\begin{indented}
\item[] December 2022 
\end{indented}

\begin{abstract} Significant advances were made in recent years on the global evolution problem for self-gravitating
massive matter in the small-perturbative regime close to Minkowski spacetime. To study the coupling between a Klein-Gordon equation and Einstein's field equations, we introduced the ``Euclidean-hyperboloidal foliation method'', which is based on the construction of a spacetime foliation adapted to the derivation of sharp decay estimates for wave and Klein-Gordon equations 
in a curved spacetime. We give here an outline of this method, together with a full proof for a wave-Klein-Gordon model which retains  some main  challenges arising with the Einstein-matter system. 
\end{abstract}

%====================================================================================

{\small

\setcounter{tocdepth}{1}

\tableofcontents

}

%====================================================================================
% Uncomment for keywords
%\vspace{2pc}
%\noindent{\it Keywords}: XXXXXX, YYYYYYYY, ZZZZZZZZZ
%
% Uncomment for Submitted to journal title message
% \submitto{\CQG}
%
% Uncomment if a separate title page is required
%\maketitle
% 
% For two-column output uncomment the next line and choose [10pt] rather than [12pt] in the \documentclass declaration
%\ioptwocol 
%%%%%%%%%%%%
%====================================================================================

\section{Introduction}

\subsection{Global evolution of self-gravitating massive field}

\paragraph{Einstein-matter system.}

We are interested in four-dimen\-sional spacetimes $(\M, \gbf)$ where $\M$ is the manifold $\M \simeq [0, + \infty) \times \RR^3$, and $\gbf$ is a Lorentzian metric with signature $(-, +, +, +)$. The  Levi-Civita connection of this metric is denoted by $\nabla$ from which we determine the Ricci curvature tensor $\textbf{Ric}$ and the scalar curvature $\mathbf{R}$, respectively. The components of tensors such as $\gbf$ and $\textbf{Ric}$ are denoted\footnote{Throughout, Greek indices describe $0, 1,2,3$ and we use the standard convention of implicit summation over repeated indices, as well as raising and lowering indices with respect to the metric $g_{\alpha\beta}$ and its inverse denoted by $g^{\alpha\beta}$.} 
 by $g_{\alpha\beta}$ and $R_{\alpha\beta}$, respectively.  
In this notation, we impose 
Einstein's field equations 
\begin{equation} \label{eq-1-einstein-massif}
\Gbf= 8\pi \, \Tbf \quad \mbox{ in } (\Mscr,\gbf), 
\end{equation}
in which the left-hand side is Einstein's curvature tensor $\Gbf := \Rbf - {1 \over 2} \Rbf \, \gbf$. In the right-hand side, the energy-momentum tensor $\Tbf$ with components $T_{\alpha\beta}$ depends upon the nature of the matter  
under consideration and, specifically, we are interested in real-valued, {massive scalar fields} $\phi: \Mscr \to \RR$ described by the energy-momentum tensor
\begin{equation} \label{eq:Talphabeta}
T_{\alpha\beta} := \nabla_\alpha \phi \nabla_\beta \phi - \Big( {1 \over 2} \nabla_\gamma \phi \nabla^\gamma \phi + U(\phi) \Big) g_{\alpha\beta}. 
\end{equation}
Here, the potential $U=U(\phi)$ is a prescribed real-valued function satisfying 
\begin{equation} \label{eq:Uofphi}
U(\phi) = {1\over 2} c^2  \phi^2 + \Ocal(\phi^3)
\end{equation}
for some constant $c>0$, referred to as the {\sl mass} of the scalar field.

From the (twice contracted) Bianchi identities applied to \eqref{eq-1-einstein-massif} we deduce the matter evolution equations  
\begin{equation} \label{Eq1-15bis}
\nabla^\alpha T_{\alpha\beta} =0 \quad \mbox{ in } (\Mscr, \gbf).
\end{equation}
By denoting the wave operator by $\Box_g := \nabla^\alpha \nabla_\alpha$, it follows that the field $\phi$ satisfies a {\sl nonlinear Klein-Gordon equation} in a curved spacetime:
\begin{equation} \label{eq-KGG}
\Box_g \phi - U'(\phi) = 0 \quad \mbox{ in } (\Mscr, \gbf). 
\end{equation}
 For instance, with the choice $U(\phi)={1\over 2} c^2 \phi^2$,  \eqref{eq-KGG} is nothing but the linear Klein-Gordon equation $\Box_g \phi - c^2 \phi = 0$.
 
The initial value problem of interest here is formulated geometrically by prescribing an initial data set that is close to data associated with a asymptotically Euclidean, spacelike hypersurface of the (vacuum)  Minkowski spacetime.
 For suitably regular initial data, it is known that the Einstein equations \eqref{eq-1-einstein-massif} together with \eqref{eq-KGG} uniquely determines ``locally in time'' the spacetime geometry and the evolution of the matter field. 
Our challenge is precisely to the {\sl global evolution problem} and to investigate the global evolution of a massive matter field in the near-Minkowski regime. 

%---------------------------------------------------------------------

\paragraph{Wave-Klein-Gordon formulation.}

The equations under consideration are geometric in nature, and the degrees of gauge freedom must be fixed before tackling the nonlinear stability problem of interest by techniques of mathematical analysis. We assume the existence of global coordinate functions $x^\alpha: \Mscr \to \RR$ satisfying the wave gauge conditions ($\alpha=0,1,2,3$)
\begin{equation} \label{eq:wcooE}
\Box_g x^\alpha = 0. 
\end{equation}
In this gauge, the Einstein equations \eqref{eq-1-einstein-massif} take the form a nonlinear wave system of second-order partial differential equations, supplemented with second-order differential constraints. The main unknowns are then the metric coefficients  $g_{\alpha\beta}$ in the chosen coordinates, together with the real-valued field $\phi$. It is well-known that the constraints are preserved during the time evolution (cf., for instance, \cite{YCB}) and therefore it is sufficient to check them on the initial data set. 

Specifically, by introducing the modified wave operator $\BoxChapeau_{\gd} := \gd^{\alpha'\beta'} \del_{\alpha'} \del_{\beta'}$ (which takes the wave gauge into account), 
\eqref{eq-1-einstein-massif} and \eqref{eq:Talphabeta} can be restated as a nonlinear wave-Klein-Gordon system with unknowns $g_{\alpha\beta}$ and $\phi$, namely 
\begin{equation} \label{MainPDE-limit}
\aligned
\BoxChapeau_g g_{\alpha\beta} 
& = \Fbb_{\alpha\beta}(g, g;\del g,\del g) 
-16\pi \, \big( \del_{\alpha}\phi\del_{\beta}\phi + U(\phi)g_{\alpha\beta} \big),
\\
\BoxChapeau_g \phi  - U'(\phi) & = 0, 
\endaligned
\end{equation}
 supplemented  with the wave gauge conditions 
\begin{equation} \label{eq:gamnul3}
\aligned
\Gamma^\alpha & :=   g^{\alpha\beta} \Gamma_{\alpha\beta}^\lambda = 0, 
\qquad
\Gamma_{\alpha \beta}^{\lambda}
:= {1 \over 2} \,  g^{\lambda \lambda'} \big(\del_\alpha g_{\beta \lambda'}
+ \del_\beta g_{\alpha \lambda'} - \del_{\lambda'} g_{\alpha \beta} \big),
\endaligned
\end{equation}
together with Einstein's Hamiltonian and momentum constraints. We refer to~\cite{YCB} for this standard formulation.

%---------------------------------------------------------------------

\paragraph{Nonlinear stability theory.}  

Our main result for the Einstein-massive field~\cite{PLF-YM-main} is based on earlier partial advances in \cite{PLF-YM-CRAS}--\cite{PLF-YM-two} and establishes that initial data sets that are sufficiently close to (vacuum) Minkowski data
generates a global-in-time solution to the Einstein-Klein-Gordon system in wave gauge \eqref{MainPDE-limit}--\eqref{eq:gamnul3}. This global existence result for the set of partial differential equations (PDEs) translates into a geometric result,
 and the associated (globally hyperbolic) Cauchy development is proven to be future causally geodesically complete\footnote{That is, every affinely parameterized geodesic (of null or timelike type) can be extended toward the future (for all values of its affine parameter).} and, in fact, to approach the Minkowski geometry in all timelike, null, and spacelike directions. In other words, we prove that for a large family of initial data sets satisfying smallness conditions in {\sl energy and pointwise norms}, the matter field disperses in the infinite future and the formation of, for instance, black holes or gravitational singularities in the future development is avoided. 
 
 We emphasize that, as our project came under completion, we learned that Ionescu and Pausader~\cite{IP}--\cite{IP3}
 simultaneously solved the same problem by a different method, which is based on the notion of spacetime resonances.  

Let us recall that the nonlinear stability problem {\sl in the vacuum} was solved by 
Christodoulou and Klainerman via a gauge-invariant method \cite{CK}; see also Bieri~\cite{Bieri} for weaker decay conditions. 
Later on, Lindblad and Rodnianski discovered a proof in wave coordinates \cite{LR2}. For further contributions in the vacuum regime we refer to Hintz and Vasy~\cite{HintzVasy1,HintzVasy2}. 
On the other hand, the global dynamics of self-gravitating {\sl massive} matter fields has received far less attention, even in the regime of small perturbations of Minkowski spacetime. 
For other important contributions on various matter fields we refer to works by Bigorgne, Fajman, Joudioux, Lindblad, Smulevici, Taylor, and Wang~\cite{Bigorgne2,FJS,FJS3,LTay,Smulevici,Wang}. 
 
%------------------------------------------------------------------------------------------------------------------------------------------------------

\subsection{Brief outlook on the method}
 
\paragraph{Bootstrap strategy.}

We rely on a bootstrap strategy which is based on the following arguments. 
\begin{itemize} 

\item[] {\sl Blow-up criterion.} A sufficiently regular, local-in-time solution cannot approach its {\sl maximal} time of existence, say $s^*$, at a time at which the energy (at a sufficiently high order) remains bounded. 
Otherwise, we would be able to extend this solution beyond $s^*$ by applying a local-in-time existence argument
 and this would contradict the fact that $s^*$ is chosen to be maximal.

\item[] {\sl Continuity criterion.}
The functional norms under consideration, which determine the regularity of the initial data, depend continuously upon the time variable, as long as a local-in-time solution exists. 

\item[] {\sl Formulating improved bound criterion.}
Suppose that on a time interval $[s_0,s_1]$ the solution satisfies inequalities expressed in terms of 
an energy functional at a (sufficiently) high-order of differentiation and, possibly, some other functionals of the solution. (This later part is irrelevant for the model problem treated next, but important in the treatment of the Einstein-Klein-Gordon system.)  Suppose that we can prove that these inequalities remain valid in a {\sl stronger} form with {\sl strictly smaller} constants. In these circumstances,  we deduce that the solution {\sl extends beyond} $s_1$.  

\item[] {\sl Deriving the improved bounds.}
Indeed, if $[s_0,s_1]$ is the maximal interval on which the set of inequalities holds, then at the ``final'' time $s_1$ thanks to the continuity criterion, at least one of the inequalities under consideration must become an {\sl equality.} However, in the case when we can prove that stronger inequalities holds on the same time interval, it follows that none of the inequalities can be an equality at $s_1$. This leads us to the conclusion that all of our inequalities hold for $s<s^*$, which is impossible when $s^*< + \infty$ in view of the 
blow-up criterion above.  

\end{itemize}

%------------------------------------------------------------------------------------------------------------------------------------------------------ 

\paragraph{Euclidean-hyperboloidal framework.}

The basic features of our method are as follows. 

\begin{itemize}

\item {\it  Foliation.}  
Our framework is based on a foliation labelled by a parameter $s$, which is asymptotically Euclidean in the vicinity of spacelike infinity while timelike infinity is covered with slices that are asymptotically hyperboloidal. The foliation (cf.~\eqref{equ-foliation-def}, below)
is described via the introduction of a  coefficient $\xi=\xi(s,r)$ (cf.~\eqref{equation-xi-def}, below)
that interpolates between the interior domain
 in which $\xi= 1$ (for $r <\rhoH(s)$, a time-dependent radius) and an exterior domain
  in which $\xi = 1$ (for $r >\rhoE(s)$, a larger radius).  Here, we work in 
in a global coordinate chart $(t,x^a)$ with $a=1,2,3$, and $r^2 = \sum_a (x^a)^2$. 
  The two foliations are merged across a transition  region 
  associated with the interval $[\rhoH(s), \rhoE(s)]$. The time variable
  $s$ is connected to the standard Cartesian time $t$ 
  in such a way that is
   coincides with the standard hyperbolic time $\sqrt{t^2 - r^2}$ in the interior domain while it is of the order of $\sqrt{t}$ in the exterior.

\item {\it  Calculus rules and hierarchy.} To the proposed foliation we associated several frames, 
 including the semi-hyperboloidal frame $\delH$ (in \eqref{equa-shf}, below)  
 and the semi-null frame $\delN$ (in~\eqref{equa-snf}, below).   
 These vector fields are used to define
high-order operators and differentiate with the evolution equations of interest, as well as to decompose tensor fields such as the metric.  
The necessary calculus rules enjoyed by the vector fields and the operators of interest are provided in our theory  \cite{PLF-YM-main}
 as a series of technical lemmas.
  For instance, ordering properties allow us to work with ordered admissible operators $Z=\del^I L^J \Omega^K$, while commutator estimates are used to commute vectors fields with differential operators.  
A key {\sl hierarchy structure} enjoyed by quasi-linear commutators was uncovered, as stated in Propositions~\ref{lm 2 dmpo-cmm-H} and~\ref{prop1-12-02-2020-interior}. 
At this juncture, the notation $|u|_{p,k}$ (given in \eqref{equa-notation-pk}, below)
is very convenient to keep track of, both, the total order of differentiation (which we call the {\sl order,}  
denoted by $p$) and the total number of boosts or rotations (which we call the {\sl rank,} denoted by $k \leq p$). 
% 

%-------------------------

\item {\it  Functional inequalities.} New weighted Sobolev, Poincar\'e, and Hardy inequalities 
are required which are adapted to our Euclidean--hyperboloidal foliation.  This includes a Sobolev inequality for
 the hyperboloidal domain in Proposition~\ref{prop:glol-Soin} (which involves the boost vectors) 
 and a Sobolev inequality for the Euclidean-merging domain in Proposition~\ref{pro204-11-2}
 (which involves the distance to the light cone). 
The weighted Hardy inequality in Proposition~\ref{lem1-hardy}, which will be necessary in order to control undifferentiated terms such as metric coefficients. 
Further inequalities are required such as the Poincar\'e-type Proposition~\ref{propo-Poincare-ext}.
 
%--------------------------------------------------------------

\item {\it  Pointwise decay of wave fields.}  The pointwise behavior of wave fields and their derivatives is based on an analysis of Kirchhoff's formula and establishes sharp estimates for solutions under assumptions on the source term. 
In Proposition~\ref{Linfini wave}, we distinguish between sub-critical, critical, and super-critical regimes, and 
we prove estimates with various decay behaviors in terms of the radial distance and the distance to the light cone. 
We also derive (cf.~Case 0 therein) a property within the light cone. 
The control of the Hessian of solutions to the wave equation must also be investigated at arbitrary order,  
and in our analysis we find it useful to 
distinguish between the near/far light cone regions and to rely
 on two different decompositions of the wave operator.  

%--------------------------------------------------------------
 
\item {\it  Pointwise decay of Klein-Gordon fields on curved spacetimes.} We also derive sharp decay of solutions to Klein-Gordon fields and their derivatives. 
Specifically, we establish Propositions~\ref{prop1-23-11-2022-M} and \ref{lem 2 d-e-I}, below. 
 For a summary see also Proposition~\ref{lem 1 d-KG-e}, below.  

\end{itemize}   

%------------------------------------------------------------------------------------------------------------------------------------------------------ 

\subsection{Proposed wave-Klein-Gordon model}  
 
\paragraph{Model of interest.}
 
We will illustrate some key features of our proof for the Einstein equation by focusing here on a model. The system under consideration now is formally derived from the 
Einstein-massive field system by suppressing null nonlinearities and quasi-null nonlinearities, and 
by replacing the metric by a scalar unknown.  
In turn, we obtain a wave equation and a Klein-Gordon equation coupled through quadratic terms
of zero, first, and second order, as follows: 
\begin{equation}\label{eq 1 model}
\aligned
- \Box u 
& =   P^{\alpha\beta} \del_\alpha \phi \del_\beta \phi + R \, \phi^2, 
\\
- \Box \phi + c^2 \, \phi
& =   H^{\alpha\beta} u \, \del_\alpha\del_\beta \phi. 
\endaligned 
\end{equation}
Here, the unknown functions $u=u(t,x)$ and $\phi=\phi(t,x)$ are defined in the 
future of a spacelike hypersurface o which initial data are prescribed. 
For simplicity in our presentation, the coefficients of the nonlinearities $P^{\alpha\beta}$, $R$, and $H^{\alpha\beta}$ are assumed to be constants, but a generalization to  non-constant coefficients would only require 
straightforward conditions on the derivatives.  The global existence theory for \eqref{eq 1 model} is presented in 
Section~\ref{section-strategy}, below; cf.~Theorem~\ref{theo-stable-model}. 
 
%-------------------------------------------------- 

\paragraph{Outline of this paper.}

We begin, in Section~\ref{sect-geomEHF}, with the definition of the proposed Euclidean-Hyperboloidal foliation and
we state various properties concerning the vector frames of interest, commutator properties, and functional inequalities adapted to this foliation.  In Section~\ref{section-wkg}, we present several pointwise estimates for wave and Klein-Gordon equations posed on the Euclidean-Hyperboloidal foliation. In Section~\ref{section-strategy}, 
we give our main stability statement for the Einstein equations 
and we outline its proof by including also a key analysis of the null and quasi-null structure of the Einstein equations. 
 In Section~\ref{section-555}, we state our global existence result for the model. 
Section~\ref{section---666} is devoted to the proof of global existence for the model; we first derive energy and pointwise estimates in the Euclidean-merging domain and in the hyperboloidal domain, and we finally close the bootstrap by taking advantage of a hierarchy enjoyed by our estimates. 

%----------------------------------------------- 

Finally, let us summarize our main notation in the following table:  
$$
\aligned
& \aligned 
& \Mscr_s = \MH_s \cup \MM_s \cup \Mext_s
&& \mbox{ spacelike slices}
\\
& \Time(s,r)
&& \mbox{ global time function}
\\
& \xi(s,r) 
&& \mbox{ foliation coefficient}
\\   
& \zeta(s,r) 
&& \mbox{ energy coefficient}
\\  
& \delH_0 = \del_t, \quad \delH_a = \frac{x^a}{t} \del_t + \del_a
&& \mbox{ semi-hyperboloidal frame}
\\
& \delN_0 = \del_t, \quad \delN_a = \delsN = {x^a \over r} \del_t + \del_a
&& \mbox{ semi-null frame}
\\
& \delEH_s = ( \del_s T) \del_t,
\quad
\delsEH_a = \del_a + (x^a /r) (\del_r \Time) \del_t  
&& \mbox{ Euclidean--hyperboloidal frame} 
\\
& \crochet(s,r) 
&& \mbox{ energy weight } 
\\ 
\endaligned
\\
\endaligned 
$$

%====================================================================================

\section{Geometric properties of Euclidean-Hyperboloidal foliations}
\label{sect-geomEHF} 

\subsection{Geometry of the Euclidean-hyperboloidal slices}

\paragraph{Euclidean--hyperboloidal time function.} 

We begin by describing the foliation of interest. It is defined over a manifold $\Mscr \simeq 
\RR^{1+3}_+ = \RR_+ \times \RR^3$ covered by a global coordinate chart denoted by $(t,x): \Mscr \to  \RR^{1+3}_+$. For convenience, we assume $t \geq 1$ and we will impose initial data on the hypersurface $t=1$ and solve a Cauchy problem in the future of this initial hypersurface. 
We also set $x= (x^a) \in \RR^3$ and $r^2 = |x|^2 = \sum_{a=1}^3|x^a|^2$.  The ground state of the gravity theory is given by the Minkowski metric $g_{\Mink} = -dt^2 + \sum_a(dx^a)^2$, and we are interested in small perturbations of this metric. 

Our first task is defining a suitable spacetime foliation. 
To label our foliation, we use a (Euclidean-hyperboloidal) {\sl time parameter} $s \geq s_0$ and we distinguish between several domains on each hypersurface of constant $s$. To this end, we introduce the  
{\it hyperboloidal and Euclidean radii} at any time $s$ 
\be
\rhoH(s) := {1 \over 2} (s^2 -1), 
\qquad \rhoE(s) := {1 \over 2} (s^2 +1).
\ee
Let us consider any cut-off function $\chi: \RR \to [0,1]$ 
satisfying 
$
\chi(x) = 
\begin{cases}
0,   & x \leq 0, 
\\
1,   & x> 1, 
\end{cases}
$
and (for simplicity in some of our arguments) $\chi^{(m)}(x)  > 0$ for all $x \in (0,1/2)$ and each $m= 0,1,2,3$. By definition,  the 
{\it foliation coefficient} is the function 
\begin{equation}\label{equation-xi-def}
\xi(s,r) := 1-\chi(r- \rhoH(s)) 
= \begin{cases}
1, \quad & r <\rhoH(s), 
\\
0, \quad & r > \rhoE(s), 
\end{cases}
\ee
and will be applied to``select'' the hyperboloidal domain. 
 
%--------------------------------------------------------------------------- 

We define the function $t = \Time(s,r)$ by solving the ordinary differential equation 
\be
\del_r \Time(s,r) = \frac{r \, \xi(s,r)}{(s^2+r^2)^{1/2}}, 
\qquad 
\Time(s,0) = s. 
\ee
It can be checked that this time function enjoys the following properties: 
\be
\Time(s,r) = \begin{cases} 
(s^2+r^2)^{1/2},   \quad &  r \leq \rhoH(s) \qquad \mbox{ (hyperboloidal domain),} 
\\
r + 1 = (s^2+1)/2, \quad &  r = \rhoH(s),
\\
\Time^\E(s), &  r \geq \rhoE(s) \qquad \mbox{ (Euclidean domain),}
\end{cases}
\ee
in which $\Time^\E=\Time^\E(s)$ is independent of $r$ and, for universal constants $K_1, K_2>0$,
\be
K_1 \, s^2 
\leq \Time(s,r) \leq K_2 \, s^2
 \qquad  
\text{ in }  \MME_{[s_0, + \infty)}, 
\ee
together with 
\be
\aligned
& 0\leq \del_r \Time(s,r)<1, \quad 
&& \mbox{ (slices of constant $s$ are spacelike),}
\\ 
& 0 < \del_r\Time(s,r)<1, \quad 
&& \mbox{ when $0<r \leq \rhoH(s)$},  
\\
& |\del_r \, \del_r \Time(s,r)| \lesssim 1. 
\endaligned
\ee

%------------------------------------------------------------------------------------------------------------------------------------------ 
 \begin{figure}
 % ? [scale=0.90]
\centerline{\includegraphics[width=10.5cm, height=2.5cm]{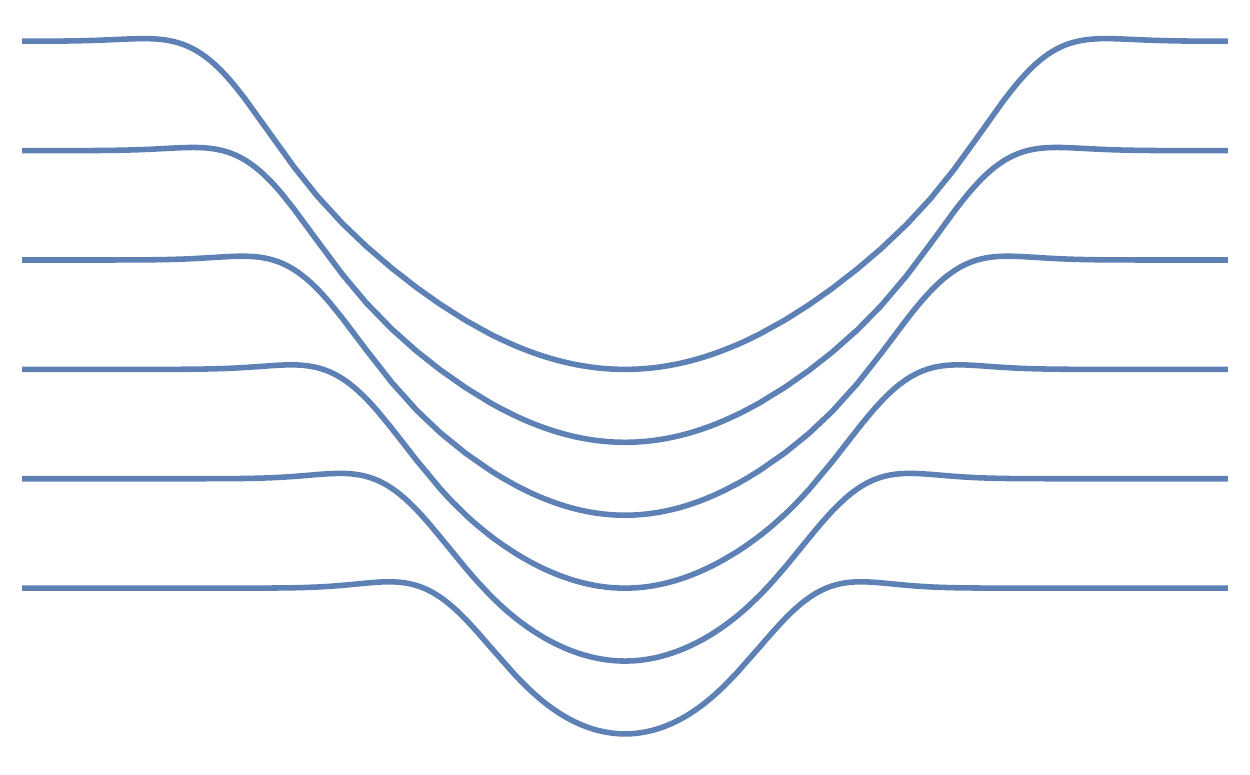}}
\caption{Spacetime foliation defined by merging 
\\
asymptotically Euclidean and asymptotically hyperboloidal slices}
\end{figure} 
%--------------------------------------------------------------------------------------------------------------------------------------------- 

\paragraph{Foliation of interest.}

A one-parameter family of spacelike, asymptotically Euclidean hypersurfaces is defined as 
\be
\Mscr_s := \big\{ (t,x^a)\in\Mscr \, \big/ \, t = \Time(s,r) \big\}. 
\ee
In the future of the initial surface $\{t=1\}$, namely 
$\{t\geq 1\} = \Mscr^{\init}\cup \bigcup_{s\geq s_0}\Mscr_s$, 
we distinguish the {\sl initial domain} (as we call it) $\Mscr^{\init} = \{(t,x)\,/\, 1\leq t\leq T(s_0,r)\}$ within which standard local-in-time existence arguments apply. 
Each slice $\Mscr_s = \MH_s \cup \Mtran_s \cup \Mext_s$  
is decomposed into three domains (with overlapping boundaries): 
\begin{equation}\label{equ-foliation-def} 
\aligned
\MH_s & :=   \big \{ t = \Time(s,|x|) \, \big/ \, 
|x| \leq \rhoH(s)
\big\}, 
&& \mbox{asymptotically hyperboloidal,}
\\
\MM_s & :=   \big\{t = \Time(s,|x| )  \, \big/ \, 
\rhoH(s)\leq  |x| \leq \rhoE(s)
\big\}, 
&&  \mbox{merging (or transition),}
\\
\Mext_s & :=   \big\{ t=\Time(s)  \, \big/ \, 
\rhoE(s) \leq |x|
\big\}, 
&& \mbox{asymptotically Euclidean}, 
\endaligned
\ee
with also $\MME_s := \Mext_s \cup \MM_s$.
Some additional notation is needed: 
\be
\aligned
\Mscr_{[s_0,s_1]} & :=   { \big\{ \Time(s_0,r)\leq t\leq \Time(s_1,r) \big\} = \bigcup_{s_0\leq s\leq s_1} \Mscr_s}, 
\qquad 
\Mscr_{[s_0, + \infty)} & := \bigcup_{s\geq s_0} \Mscr_s, 
\endaligned
\ee
and, similarly, we set $\Mscr^\H_{[s_0, s_1])}$, $\Mscr^\H_{[s_0, + \infty)}$, etc. 
By construction,  there exists a function $c=c(s)\in (0,1)$ such that the radial variable $r$ in each of the three domains satisfies 
\be
r = |x| \in  \,
\begin{cases}
\hskip1.cm [0, t-1],
& \MH_{[s_0, + \infty)},
\\ 
[t-1,t -  c(s)], 
\qquad & \MM_{[s_0, + \infty)}, 
\\
[t - c(s), +\infty),
& \Mext_{[s_0, + \infty)}. 
\end{cases}
\ee
The future-oriented normal and the volume element associated with the hypersurfaces are given by 
\bse
\be
\aligned
n_s & =  \frac{\big(1, - (x^a/r)\del_r \Time\big)}{\sqrt{1+|\del_r \Time|^2}} 
= \nu^{-1} \Big((s^2 + r^2)^{1/2},-x^a\xi(s,r) \Big), 
\\
\nu & = \big( s^2 +r^2(1+\xi(s,r)^2)\big)^{1/2}, 
\endaligned
\ee
together with 
\be
\aligned 
d\sigma_s &= (1+|\del_r \Time|^2) \, dx = \nu \, (s^2 + r^2)^{-1/2} \, dx, 
\\
n_sd\sigma_s 
& = (1, - (x^a/r)\del_r \Time) \, dx = (1, - \del_a \Time) \, dx = \Big(1,\frac{-\xi(s,r)x^a}{(s^2+r^2)^{1/2}}\Big) \, dx. 
\endaligned
\ee
\ese

%---------------------------------------------------------------------
 
\paragraph{Change of variables.}
 
We will be relying on two  parameterizations of the Euclidean-hyperboloidal hypersurfaces, namely in terms of the variables 
 $(t,x)$, or in terms of the variables $(s,x)$ determined by the function $T$. This latter function 
by construction is strictly increasing in $s$ and, in fact, $(s,x) \mapsto (t,x) = (\Time(s,x),x)$ is a smooth and global diffeomorphism. The Jacobian matrix associated with the Euclidean--hyperboloidal foliation  reads 
$
\left(
\begin{array}{cc}
\del_s \Time &\del_sx
\\
\del_x \Time &\del_x x 
\end{array}
\right)
=
\left(
\begin{array}{cc}
\del_s \Time &0
\\
(x^a/r) \, \del_r \Time & I
\end{array}
\right)
$
and the
Jacobian is $J(s,x) = \del_s\Time(s,x)$, leading to the corresponding volume element $dtdx = J \, dsdx$. 
It can be established that   
$$
J 
\leq
\begin{cases}
{s \over \Time}= s \, (s^2 +r^2)^{-1/2},   
%    &\MH_s,
\\
\xi s \, (s^2 +r^2)^{-1/2} + (1- \xi) \, 2s,
%    & \MM_s,
\\
2s,  
% &\Mext_s, 
\end{cases}
\, 
J 
\geq
\begin{cases}
{s \over \Time} = s \, (s^2 +r^2)^{-1/2}, 
 \quad      & \MH_s,
\\
\xi \, s \, (s^2 + r^2)^{-1/2} + (1- \xi) 3s/5, 
 \quad      &   \MM_s, 
\\
3s/5 
 \quad      &  \Mext_s. 
\end{cases}
$$
  
%------------------------------------------------------------------------------------------------------------------------------------------------------ 

\subsection{Frames of interest and commutators}

\paragraph{Frames of interest.}

We will combine estimates involving different frames, as follows. 
\bei 

\item The {\it semi-hyperboloidal frame} 
\be\label{equa-shf} 
\delH_0 = \del_t, 
\qquad
\delH_a = \delsH_a  = \frac{x^a}{t} \del_t + \del_a, 
\ee
and was already introduced by the authors in \cite{PLF-YM-book}. It is defined globally in $\Mscr_s$, relevant within the hyperboloidal domain and is relevant in order to 
(1) exhibit the (quasi-)null form structure of the nonlinearities 
and, in turn,  
(2)  establish decay properties in timelike and null directions. Some of our arguments also involve radial integration based on  
$ 
\delsH_r = (x^a /r)\delsH_a$.

\item 
The {\it semi-null frame}
\be\label{equa-snf} 
\delN_0  = \del_t, 
\qquad  
\delN_a = \delsN_a = {x^a \over r} \del_t + \del_a 
\ee
is defined everywhere in $\Mscr_s$ except on the center line $r=0$
and is the appropriate frame within the Euclidean-merging domain 
in order to (1) exhibit the structure of the (null, quasi-null) nonlinearities of the field equations, and 
(2) establish decay properties in spatial and null directions.

\item The {\it Euclidean--hyperboloidal frame}
\be
\delEH_0 = \del_t, 
\quad
\delEH_a  = \delsEH_a 
=   \del_a + (x^a/r)\del_r \Time \, \del_t  
\ee
involves tangent vectors to the slices $\Mscr_s$ and provides an 
 interpolation between $\delEH_a = \delH_a$ in $\MH_s$, and $\delEH_a = \del_a$ in $\Mext_s$. 
Some of our arguments are also based on radial integration based on $\delsEH_r = (x^a/r)\delsEH_a$. 

\eei
 
\noindent Moreover, various changes of frame formulas are useful such as
$\delN_\alpha = \PhiN{}_\alpha^\beta \, \del_\beta$ and $\del_\alpha = \PsiN_{\alpha}^\beta \delN_\beta$ with 
\be
\big(\PhiN_{\alpha}^\beta \big)
= 
\left(
\begin{array}{cccc}
1 &0 &0 &0
\\
x^1/r& 1 &0 &0
\\
x^2/r&0  &1 &0
\\
x^3/r& 0 &0 &1
\end{array}
\right), 
\qquad 
\big(\PsiN_{\alpha}^\beta \big)
= 
\left(
\begin{array}{cccc}
1 &0 &0 &0
\\
-x^1/r& 1 &0 &0
\\
-x^2/r&0  &1 &0
\\
-x^3/r& 0 &0 &1
\end{array}
\right).
\ee 

%---------------------------------------------------------------------

\paragraph{Commutator properties.}

It is convenient to rely on the following definition. 
Let $L_a = x^a\del_t + t\del_a$ and
 $\Omega_{ab} = x^a\del_b - x^b\del_a$ be the Lorentzian and Euclidean rotations. 
 We use $L^J, \Omega^K$ and $\del^I$ for high-order derivatives consisting of combinations of 
 $L_a, \Omega_{ab}$ and $\del_{\alpha}$ where $I,J,K$ denote ordered multi-indices. 

\begin{definition} An operator $Z=\del^I L^J \Omega^K$ is called an (ordered) {\it  admissible operator} and  
 for such an operator
  one associates its {\it  order} and {\it  rank} by  
\begin{equation}
\ord(Z) = |I|+|J| + |K|,  
\quad 
\rank(Z) = |J| + |K|,
\qquad 
Z = \del^I L^J \Omega^K.
\end{equation}
An operator $\Gamma = \del^IL^J\Omega^KS^l$ is called an {\it  ordered conformal operator}. Its {\it  order} and {\it  rank} are defined similarly:
$$
\ord(\Gamma) = |I|+|J| + |K| + l, 
\quad 
\rank(\Gamma) = |J| + |K| + l, 
\qquad 
\Gamma = \del^IL^J\Omega^KS^l.
$$
\end{definition}

The proof of the following statement can be found in~\cite[Part~1]{PLF-YM-main}. To deal with differential operators 
$\del^I L^J \Omega^K$ we use the notation $\ord(Z) = |I|+|J|+|K|$ (the order) and $\rank(Z) = |J|+|K|$ (the rank). 
Given two integers $k \leq p$, it is convenient to introduce the notation 
\be\label{equa-notation-pk}
|u|_{p,k} :=  \max_{\ord(Z) \leq p, \rank(Z) \leq k} |Z u|, 
\qquad 
|u|_{p} :=  \max_{\ord(Z) \leq p} |Z u|. 
\ee

\begin{lemma}[Estimates for linear commutators]
\label{prop-newpropo} 
For any admissible field $Z$ satisfying $\ord(Z) \leq p$ and $\rank(Z) \leq k$ one has 
\begin{subequations}
\begin{equation} \label{eq2-31-01-2020}
|[Z, \del] u | \lesssim |\del u|_{p-1,k-1},
\end{equation}
\begin{equation} \label{eq1-31-01-2020}
| [Z, \del \del ] u | 
\lesssim |\del\del u|_{p-1,k-1} \lesssim |\del u|_{p,k-1}.
\end{equation}
\end{subequations}
\end{lemma}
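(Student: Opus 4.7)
The argument proceeds by induction on the order $p$, exploiting the Poincar\'e-algebra structure of the admissible fields. The base case $p=0$ is trivial, since $Z$ is then the identity. The key algebraic inputs are that $[\del_\alpha,\del_\beta]=0$ and that $[L_a,\del_\beta]$ and $[\Omega_{ab},\del_\beta]$ are constant-coefficient linear combinations of the $\del_\gamma$; more generally, the Lie brackets among $\del$, $L$, $\Omega$ close within admissible fields and never raise the differential order.

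For \eqref{eq2-31-01-2020} I decompose $Z = X_1 \cdots X_n$ in admissible order (each $X_i$ being one of $\del_\alpha$, $L_a$, $\Omega_{ab}$) and use the telescoping identity
$$
[Z, \del_\beta] \;=\; \sum_{i=1}^{n} X_1 \cdots X_{i-1}\, [X_i, \del_\beta]\, X_{i+1} \cdots X_n.
$$
Only those $i$ for which $X_i \in \{L_a,\Omega_{ab}\}$ contribute, and each surviving $[X_i,\del_\beta]$ equals a constant times some $\del_\gamma$. I then push this freshly produced $\del_\gamma$ past $X_{i+1},\ldots,X_n$ to the rightmost position (just before $u$); every commutation yields a main term plus a lower-rank remainder in which one further $L$ or $\Omega$ has been replaced by a constant multiple of $\del$. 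A final reordering of the resulting string into admissible form $\del^{I'}L^{J'}\Omega^{K'}$, carried out via the ordering lemmas of \cite{PLF-YM-main}, expresses $[Z,\del_\beta]u$ as a linear combination of terms $Z''\del u$ with $\ord(Z'')\leq p-1$ and $\rank(Z'')\leq k-1$, which is precisely the bound $|\del u|_{p-1,k-1}$.

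For the first bound of \eqref{eq1-31-01-2020} I use the identity
$$
[Z, \del_\alpha\del_\beta] u \;=\; [Z,\del_\alpha]\,\del_\beta u \;+\; \del_\alpha\bigl([Z,\del_\beta]u\bigr).
$$
The first summand is bounded by $|\del\del u|_{p-1,k-1}$ by direct application of \eqref{eq2-31-01-2020}. For the second, I substitute the expansion $[Z,\del_\beta]u = \sum Z''\del_\beta u$ derived above and move the extra $\del_\alpha$ inside each $Z''$ via $\del_\alpha Z'' = Z''\del_\alpha + [\del_\alpha, Z'']$; the leading piece is trivially dominated by $|\del\del u|_{p-1,k-1}$, while the commutator is handled by a second application of \eqref{eq2-31-01-2020} to $Z''$. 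The second bound of \eqref{eq1-31-01-2020} follows by taking any admissible $Z'' = \del^I L^J \Omega^K$ with $\ord(Z'')\leq p-1$, $\rank(Z'')\leq k-1$, rewriting $Z''\del\del u = (Z''\del)\del u$, and commuting the trailing $\del$ leftward through the $\Omega^K L^J$ block: the main term is an admissible operator of order $\leq p$ and rank $\leq k-1$ applied to $\del u$, while every commutator remainder strictly lowers the rank and so is absorbed into $|\del u|_{p,k-1}$.

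The only substantive difficulty is combinatorial bookkeeping: every commutation and reordering step must produce admissible operators whose order and rank obey the required budget. This is guaranteed by the Poincar\'e closure -- each bracket among $\del$, $L$, $\Omega$ either strictly lowers the rank (when a $\del$ is involved) or preserves it (as in $[L,L]\sim\Omega$ and $[L,\Omega]\sim L$), and never raises the differential order -- and is implemented systematically by the ordering toolkit of Part~1 of \cite{PLF-YM-main}.
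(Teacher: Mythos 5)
Your argument is correct, and it is essentially the standard combinatorial proof. Note that the paper itself does not contain a proof of this lemma — it only refers the reader to Part~1 of \cite{PLF-YM-main} — so there is no in-text proof to compare against, but your telescoping expansion, the bracket computations ($[\del,\del]=0$, $[L_a,\del_\beta]$ and $[\Omega_{ab},\del_\beta]$ equal to constant multiples of $\del$, the rank-preserving closure $[L,L]\sim\Omega$, $[L,\Omega]\sim L$, $[\Omega,\Omega]\sim\Omega$), and the order/rank bookkeeping match the approach used in that reference. Two minor remarks on presentation: the announced ``induction on $p$'' is vestigial, since what you actually run is a direct telescoping together with a finite reordering procedure rather than a genuine induction on the order; and in the main term of the telescope, the truncated string $X_1\cdots X_{i-1}X_{i+1}\cdots X_n$ is automatically still in admissible order (you removed one letter from an already sorted word), so the ordering lemmas are only needed for the remainder terms generated when pushing the fresh $\del_\gamma$ to its final position. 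Neither issue affects correctness; the order/rank budget is respected throughout and both inequalities in \eqref{eq1-31-01-2020} follow as you describe.
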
 

In the following, it is convenient to denote by $\LOmega \in\{ L_a, \Omega_{ab}\}$ the collection of boosts and spatial rotations. 

\begin{proposition}[Hierarchy structure for quasi-linear commutators. Euclidean-merging domain]
\label{lm 2 dmpo-cmm-H} 
Let $Z$ be an admissible operator with ${\ord(Z) = p}$ and $\rank(Z) = k$ and let $H,u$ be functions defined in the  Euclidean-merging domain $\MME_s$.  Then, one has 
\begin{equation} \label{eq8-14-03-2021}
|[Z,H]u|\lesssim 
\sum_{k_1+p_2=p\atop k_1+k_2=k } | \LOmega H|_{k_1-1} |u|_{p_2,k_2}
+ \sum_{p_1+p_2=p\atop k_1+k_2=k} |\del H|_{p_1-1,k_1} |u|_{p_2,k_2}, 
\end{equation} 
\begin{equation} \label{eq7-14-03-2021}
\aligned
 |[Z,H\del_{\alpha}\del_{\beta}]u|
& \lesssim  \,  |H| \, |\del\del u|_{p-1,k-1} 
+ 
\sum_{k_1+p_2=p\atop k_1+k_2=k }
|\LOmega H|_{k_1-1} |\del\del u|_{p_2,k_2}
\\
&+
\sum_{p_1+p_2=p\atop k_1+k_2=k}   |\del H|_{p_1-1,k_1} |\del\del u|_{p_2,k_2}, 
\endaligned
\end{equation} 
\end{proposition}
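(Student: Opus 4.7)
The plan is to prove both estimates by induction on the order $p = \ord(Z)$, exploiting a Leibniz-type expansion of $Z(Hu)$ combined with the linear commutator bounds of Lemma~\ref{prop-newpropo}. The base case $p=0$ is trivial since the commutator vanishes. For the inductive step, the key observation is that each admissible generator $\del_\alpha$, $L_a$, $\Omega_{ab}$ is a first-order vector field, so iterating the product rule yields a representation
\[
Z(Hu) \;=\; \sum_{(Z_1,Z_2)} c_{Z_1,Z_2}\,(Z_1 H)(Z_2 u) \;+\; (\text{reordering corrections}),
\]
where the main sum runs over ways of distributing the generators of $Z$ between $H$ and $u$ with $\ord(Z_1)+\ord(Z_2)\leq p$ and $\rank(Z_1)+\rank(Z_2)\leq k$. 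The reordering corrections arise because $[L_a,\del_\alpha]$ and $[\Omega_{ab},\del_\alpha]$ are themselves first-order operators; they are controlled by Lemma~\ref{prop-newpropo}, and crucially this control does not increase the rank.

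For \eqref{eq8-14-03-2021}, I would subtract the contribution $Z_1 = \mathrm{Id}$ (which is exactly $H\,Zu$) and classify the remaining terms by the structure of $Z_1$. In the first case, $Z_1$ contains only boosts and rotations, so $\ord(Z_1) = \rank(Z_1) =: k_1 \geq 1$; peeling off one generator, $Z_1 = \LOmega \cdot Z_1'$ with $\ord(Z_1') = k_1-1$, yields $|Z_1 H| \leq |\LOmega H|_{k_1-1}$ paired with $|Z_2 u|\leq |u|_{p_2,k_2}$ where $k_1+p_2=p$ and $k_1+k_2=k$; this produces the first sum. In the second case, $Z_1$ contains at least one partial derivative, so after reordering one may factor $Z_1 = \del\cdot Z_1'$ with $\ord(Z_1') = p_1-1$ and $\rank(Z_1') \leq k_1$, yielding $|Z_1 H| \leq |\del H|_{p_1-1,k_1}$ and producing the second sum.

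For \eqref{eq7-14-03-2021}, I would use the purely algebraic split
\[
[Z,\,H\del_\alpha\del_\beta]\,u \;=\; [Z,H]\,\del_\alpha\del_\beta u \;+\; H\,[Z,\,\del_\alpha\del_\beta]\,u.
\]
Applying \eqref{eq8-14-03-2021} with $u$ replaced by $\del_\alpha\del_\beta u$ controls the first piece (each $|u|_{p_2,k_2}$ becomes $|\del\del u|_{p_2,k_2}$), giving the last two sums in \eqref{eq7-14-03-2021}. For the second piece, the linear commutator estimate \eqref{eq1-31-01-2020} gives $|[Z,\del_\alpha\del_\beta]\,u| \lesssim |\del\del u|_{p-1,k-1}$, which multiplied by $|H|$ produces the leading term.

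The main obstacle I anticipate is the second case in the classification above: when $Z_1$ is a mixed string of partials, boosts, and rotations, one must reorder it to place a $\del$ at the front, and each commutator of the form $[L_a,\del_\alpha]$ or $[\Omega_{ab},\del_\alpha]$ spawns further $\del$-derivatives that must not spoil either the order bound $p_1-1$ or the rank bound $k_1$. The hierarchy structure succeeds precisely because these correction commutators, controlled by Lemma~\ref{prop-newpropo}, preserve the rank, and so are always absorbed into the two families already on the right-hand side. The Euclidean-merging hypothesis on the domain does not enter the algebraic manipulation itself but ensures that $L_a$, $\Omega_{ab}$, $\del_\alpha$ are all globally defined and that the bookkeeping of orders and ranks remains consistent throughout.
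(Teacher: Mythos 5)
Your proposal is correct, and the approach is the natural one: a Leibniz-type expansion of $Z(Hu)$ over ordered admissible splittings $(Z_1, Z_2)$, subtraction of the $Z_1 = \mathrm{Id}$ term to form the commutator, a case analysis on whether $Z_1$ contains a $\del$, and the purely algebraic identity $[Z, H\del_\alpha\del_\beta]u = [Z,H]\del_\alpha\del_\beta u + H[Z,\del_\alpha\del_\beta]u$ combined with \eqref{eq1-31-01-2020} for the quasi-linear case. Note, however, that this paper does not contain its own proof of Proposition~\ref{lm 2 dmpo-cmm-H}: the text explicitly refers to \cite[Part~1]{PLF-YM-main} for Lemma~\ref{prop-newpropo} and the surrounding commutator material, so there is no in-paper argument against which to compare line by line.

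Two small points of precision in your sketch. First, because $Z$ is already in the ordered form $\del^I L^J \Omega^K$, the Leibniz distribution of its generators between $Z_1$ and $Z_2$ automatically produces ordered admissible factors, with $\ord Z_1 + \ord Z_2 = p$ and $\rank Z_1 + \rank Z_2 = k$ exactly; no reordering correction enters at that stage. The reordering issue you rightly flag enters only when you want to extract the frontmost $\del$ or $\LOmega$ acting directly on $H$: for the pure boost/rotation case you should peel a generator from the \emph{right} of $Z_1$ (which requires no commutator at all, since $Z_1'' (\LOmega H)$ is already admissible with $\ord Z_1'' = k_1 - 1$), whereas writing $Z_1 = \LOmega \cdot Z_1'$ and peeling from the left does work but forces you to absorb a $[\LOmega, Z_1']$ correction of order and rank $k_1 - 1$, which you must then peel again. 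Second, in the $\del$-containing case, commuting the chosen $\del$ past $L^{J_1}\Omega^{K_1}$ generates terms of the same qualitative form $|\del H|_{p_1', k_1'}$ with $p_1' \leq p_1 - 1$ and $k_1' \leq k_1$, so — exactly as you say — the rank-nonincreasing property of all the generator commutators is what closes the hierarchy. These are matters of bookkeeping rather than gaps; the structure of the argument is sound.
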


\begin{proposition}[Hierarchy property for quasi-linear commutators. Hyperboloidal domain] 
\label{prop1-12-02-2020-interior} 
For any function $u$ defined in $\MH_{[s_0,s_1]}$ and for any admissible operator $Z$ with $\ord(Z) = p$ and $\rank(Z) = k$ one has 
\bse
\begin{equation}\label{equa-hsgd57} 
\aligned
\, 
& |[Z,H^{\alpha\beta} \del_\alpha \del_{\beta}]u|
\lesssim  T^\textbf{hier}  + T^\easy +  T^{\textbf{super}}, 
\endaligned
\end{equation}
\begin{equation}
\aligned
T^\textbf{hier} & 
|\Hu^{00} | \, | \del\del u|_{p-1,k-1} +\sum_{k_1+p_2=p\atop k_1+k_2=k} |L \Hu^{00} |_{k_1-1,k_1-1} |\del\del u|_{p_2,k_2}, 
\\
T^\easy
& :=
\sum_{p_1+p_2=p\atop k_1+k_2=k} |\del \Hu^{00} |_{p_1-1,k_1} |\del\del u|_{p_2,k_2}
+ t^{-1} |H| \, | \del u|_{p}, 
\\
T^{\textbf{super}} &
:= \sum_{p_1+p_2=p} |\delsH H|_{p_1-1} |\del u|_{p_2+1}
+ t^{-1} \!\!\!\!\sum_{p_1+p_2=p} \!\!\!\! |\del H|_{p_1-1} |\del u|_{p_2+1}.
\endaligned
\end{equation}
\ese
\end{proposition}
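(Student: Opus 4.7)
The plan is to proceed along the lines of Proposition~\ref{lm 2 dmpo-cmm-H} but to exploit the special structure of the hyperboloidal domain, where the tangential frame vectors are related to the Lorentzian boosts by the clean identity $L_a = t\,\delsH_a$; in particular $|\delsH w|\lesssim t^{-1}|L w|$ inside $\MH_{[s_0,s_1]}$. The central algebraic step is a decomposition of $H^{\alpha\beta}\del_\alpha\del_\beta$ in the semi-hyperboloidal frame. Using $\del_a = \delsH_a - (x^a/t)\del_t$, I would write
\[
H^{\alpha\beta}\del_\alpha\del_\beta u
\;=\; \Hu^{00}\,\del_t\del_t u \;+\; B[H,u],
\]
where $\Hu^{00}=H^{00}-2H^{0a}(x^a/t)+H^{ab}(x^ax^b/t^2)$ collects every contribution to the coefficient of $\del_t\del_t$, while $B[H,u]$ is a finite sum of terms each of which either contains at least one $\delsH$ derivative hitting $u$ or carries an explicit $t^{-1}$ factor arising from $\del_t(x^a/t)=-x^a/t^2$ and $[\delsH_a,\del_t]$.

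First I would perform the standard Leibniz expansion
\[
[Z,\, H^{\alpha\beta}\del_\alpha\del_\beta]u
\;=\; \sum_{\substack{Z_1Z_2 = Z\\ Z_1\ne \mathrm{id}}} (Z_1 H^{\alpha\beta})(Z_2\del_\alpha\del_\beta u)
\;+\; H^{\alpha\beta}[Z,\del_\alpha\del_\beta]u,
\]
and apply the above frame decomposition to both occurrences. On the $\Hu^{00}\del_t\del_t u$ portion: for the inner commutator term $\Hu^{00}[Z,\del_t\del_t]u$, Lemma~\ref{prop-newpropo} delivers the first summand $|\Hu^{00}|\,|\del\del u|_{p-1,k-1}$ of $T^{\mathbf{hier}}$; for the sum over $Z_1\ne \mathrm{id}$, the ordering property of admissible operators from Part 1 of \cite{PLF-YM-main} lets me write $Z_1 = \del^{I_1} L^{J_1}\Omega^{K_1}$. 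If $|I_1|\ge 1$, the output is absorbed into the first summand $|\del\Hu^{00}|_{p_1-1,k_1}|\del\del u|_{p_2,k_2}$ of $T^{\mathbf{easy}}$, while if $|I_1|=0$ then $\ord(Z_1)=\rank(Z_1)=k_1\ge 1$ and peeling off a single boost or rotation yields the $|\LOmega\Hu^{00}|_{k_1-1,k_1-1}|\del\del u|_{p_2,k_2}$ summand of $T^{\mathbf{hier}}$, with the constraint $k_1+p_2=p$ encoding precisely the purely rotational-boost nature of $Z_1$.

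Next I would treat $Z$ applied to $B[H,u]$ and to the residual piece $H^{\alpha\beta}[Z,\del_\alpha\del_\beta]u$. Every contribution in $Z\,B[H,u]$ either inherits a $t^{-1}$ factor directly from $B$ or contains a factor of the form $\delsH H$ that, together with the ordering property and Lemma~\ref{prop-newpropo}, redistributes into the two summands of $T^{\mathbf{super}}$: the $|\delsH H|_{p_1-1}|\del u|_{p_2+1}$ part when the tangential derivative has landed on $H$, and the $t^{-1}|\del H|_{p_1-1}|\del u|_{p_2+1}$ part when it has not (here the extra $|\del u|_{p_2+1}$ comes from converting a $\delsH u$ via $\delsH_a u = t^{-1} L_a u$). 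Finally, the commutator $H^{\alpha\beta}[Z,\del_\alpha\del_\beta]u$ is bounded by $|H|\,|\del u|_{p,k-1}$ through Lemma~\ref{prop-newpropo}; writing one of the remaining derivatives as $\del_a = \delsH_a - (x^a/t)\del_t$ turns the tangential piece into a further $T^{\mathbf{super}}$ contribution (again via $L_a/t$) and leaves the $t^{-1}|H|\,|\del u|_p$ portion of $T^{\mathbf{easy}}$.

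The main obstacle is bookkeeping rather than conceptual: I must check that, under the ordering property, each decomposed term lands in exactly one of $T^{\mathbf{hier}}$, $T^{\mathbf{easy}}$, $T^{\mathbf{super}}$ with index pairs $(p_1,k_1)$, $(p_2,k_2)$ satisfying the stated constraints, and that every $t^{-1}$ gain traces back to a genuine invocation of the hyperboloidal identity $L_a = t\,\delsH_a$. A secondary subtlety is that $\Hu^{00}$ is itself a combination of components of $H$ with the algebraic coefficients $(x^a/t)$, which are smooth and bounded in $\MH_{[s_0,s_1]}$ but whose $\del$-, $L$- and $\Omega$-derivatives produce harmless lower-order contributions that must be either absorbed into the already-written terms or into the $T^{\mathbf{super}}$ budget via the extra $t^{-1}$ they carry.
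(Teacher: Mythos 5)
Your overall strategy is the right one and is the same as the one used in~\cite[Part~1]{PLF-YM-main}: split $H^{\alpha\beta}\del_\alpha\del_\beta$ in the semi-hyperboloidal frame so that only the $\Hu^{00}\del_t\del_t$ piece contributes the ``bad'' second derivative, exploit $L_a = t\,\delsH_a$ to turn every remaining tangential hit on $u$ into a factor of $t^{-1}$ at the cost of one extra admissible derivative, and then case-split on whether the piece $Z_1$ of $Z = Z_1 Z_2$ that lands on the coefficient contains a translation (feeding $T^{\easy}$) or not (feeding $T^{\textbf{hier}}$, after absorbing $\Omega\Hu^{00}$ into $L\Hu^{00}$ via $\Omega_{ab} = (x^a/t)L_b - (x^b/t)L_a$, an identity you should state since it is specific to $\MH_{[s_0,s_1]}$).

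There is, however, a genuine structural gap in the final paragraph. You start from the \emph{Cartesian} Leibniz expansion, and at the end you are left with the residual $H^{\alpha\beta}[Z,\del_\alpha\del_\beta]u$, which you bound by $|H|\,|\del u|_{p,k-1}$. This bound is \emph{not} in the budget $T^{\textbf{hier}} + T^{\easy} + T^{\textbf{super}}$: every term there carries either the null coefficient $\Hu^{00}$ (or its $L$-derivatives), or a genuine factor $t^{-1}$, or a tangential derivative $\delsH$ of $H$, none of which is present in $|H|\,|\del u|_{p,k-1}$. Your proposed repair, ``writing one of the remaining derivatives as $\del_a = \delsH_a - (x^a/t)\del_t$'', does not create the missing gain, because the transversal piece $(x^a/t)\del_t$ has a bounded — not decaying — coefficient. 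The concrete obstruction is already visible for a single boost: $H^{\alpha\beta}[L_a,\del_\alpha\del_\beta]u$ contains the term $-2\,H^{0a}\del_t\del_t u$, and $H^{0a}$ is neither controlled by $\Hu^{00}$ nor $O(t^{-1})$. This bad term is in fact cancelled by a matching piece of the Leibniz sum $(L_a H^{\alpha\beta})\del_\alpha\del_\beta u$, but that cancellation is invisible if you compute the two pieces of the Cartesian Leibniz expansion separately and then try to bound each one. The fix is to reverse the order of operations: perform the frame decomposition $H^{\alpha\beta}\del_\alpha\del_\beta = \Hu^{00}\del_t\del_t + B$ \emph{before} any Leibniz rule, and apply the Leibniz expansion to each summand. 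Then the two inner commutators are $\Hu^{00}[Z,\del_t\del_t]u$ (giving $|\Hu^{00}|\,|\del\del u|_{p-1,k-1}$ exactly) and the inner commutator of $B$, which by construction has a tangential derivative on $u$ in every highest-order term and so inherits the $t^{-1}$ gain uniformly, with no spurious $H^{0a}\del_t\del_t u$. Relatedly, you should organize $B$ as a sum of products $(\text{coefficient})\cdot\delsH_a D\,u$ with $D$ of first order, rather than $(\text{coefficient})\cdot\del\del u$: the identity $\delsH_a v = t^{-1}L_a v$ is then applied to $v = D u$, which is where the extra ``$+1$'' in $|\del u|_{p_2+1}$ in $T^{\textbf{super}}$ comes from, and which must be respected when you redistribute $Z$ by the Leibniz rule since $Z(t^{-1}) = O(t^{-1})$ for every admissible $Z$.
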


%------------------------------------------------------------------------------------------------------------------------------------------------------ 

\subsection{Energy functionals}

\paragraph{Weight coefficients.}

The fundamental energy functional (stated shortly below) involved another geometric weight, denoted by $\zeta = \zeta(t,x)$ and defined by 
\be  
\zeta(s,r)^2 
 = 1 - \frac{r^2 \xi^2(s,r)}{s^2+r^2} = {s^2 \over s^2 +r^2} + (1- \xi^2(s,r)){r^2 \over s^2 +r^2}.
\ee
This weight coincides with  $s/t = s /(s^2+r^2)^{1/2}$ in the hyperboloidal domain, while it reduces to $1$ in the Euclidean domain. In fact, it provides us with an interpolation between the energy density induced on hyperboloids and the one induced on Euclidean slices. Various estimates on $\zeta$ can be proven, for instance 
$\frac{|r-t \, |+1}{r} \lesssim \zeta^2 \leq\zeta\leq 1$ valid within the Euclidean-merging domain, as well as
$K_1 \, \zeta^2 s\leq J\leq  K_2 \, \zeta^2 s$ in the merging domain (for some universal constants $K_1,K_2$). 
Furthermore, we introduce a weight which measures the distance to the light cone, and is defined from the prescription of a smooth and non-decreasing function $\aleph$ satisfying 
\be
\aleph(y) = 
\begin{cases}
0, \quad 
& y \leq -2, 
\\
y+2, \quad &  y \geq {-1}, 
\end{cases}
\ee 
and specifically we set 
$\crochet := 1 + \aleph({r-t})$,
which we refer to as the {\sl energy weight.} 
Recall that $\aleph'$ is non-negative. This choice is easier to work with, but an equivalent energy having a more geometric form can be based on the unknown metric $g$. 

 %-------------------------

\paragraph{Energy identity.}

We multiply the wave-Klein-Gordon equation $\Box u - c^2 u = F$ (with $c \geq 0$) by $- 2 \, \crochet^{2 \eta}  \del_t u$
with $\Box  = \Box_{\gMink}  = - \del_t \, \del_t + \sum_{a=1,2,3} \del_a\del_a$. 
We treat simultaneously the wave and Klein-Gordon operators by assuming here that $c \geq 0$.  
We find the divergence identity 
$$
\aligned
 \del_t \big( V^0_{\eta,c}[u] \big )+ \del_a \big( V^a_{\eta,c}[u] \big)
& =
2\eta \crochet^{-1} \aleph'({ r-t})(-1, x^a/r) \cdot V_{\eta,c}[u] 
- 2 \, \crochet^{2 \eta} \del_t u \, F 
\\
V_{\eta,c}[u]  
& = - \crochet^{2 \eta} \big(-|\del_t u|^2 - \sum_a|\del_au|^2 - c^2u^2, 2 \del_t u\del_au\big). 
\endaligned
$$
We define our energy functional on each Euclidean--hyperboloidal slice $\Mscr_s$ as
\be  
\aligned
\Eenergy_{\eta,c}(s,u) 
& = \int_{\Mscr_s}V_{\eta,c}[u] \cdot n_sd\sigma_s 
\\
&
= \int_{\Mscr_s} 
\Big(|\del_tu|^2 + \sum_a|\del_au|^2 + \frac{2x^a\xi(s,r)}{(s^2+r^2)^{1/2}} \del_t u\del_a u + c^2u^2 \Big) \, 
\crochet^{2\eta} \, dx 
\endaligned
\ee
or, equivalently,
\be  
\aligned
\Eenergy_{\eta,c}(s,u) 
& =  \int_{\Mscr_s} \Big(\zeta^2|\del_t u|^2 + \sum_a |\delsEH_au|^2 + c^2u^2 \Big) \,
\crochet^{2\eta} \, dx, 
\\
&
= \int_{\Mscr_s}
\Big(\zeta^2 \sum_a|\del_au|^2 +  \frac{\xi^2(s,r)}{s^2+r^2} \sum_{a<b} |\Omega_{ab}u|^2 
+ |\delsEH_r u|^2 + c^2 \, u^2 \Big) \, \crochet^{2\eta} \, dx. 
\endaligned
\ee
It involves the energy coefficient $\zeta$,   
which is non-trivial in the merging and hyperboloidal domains, and depends upon our choice of coefficient $\xi = \xi(s,r)$. 

%-----------------------------------------------------------------------

The energy identity in curved spacetime associated with the wave or Klein-Gordon equation  $g^{\alpha\beta} \del_\alpha \del_\beta u - c^2 u = F$ is expressed by decomposing the curved metric $g$ as 
$
g^{\alpha\beta} =  g_\Mink^{\alpha\beta} + H^{\alpha\beta}. 
$
\bse
After defining the energy-flux vector (with $a =1,2,3$) 
\be  
V_{g, \eta,c}[u]
=
-\crochet^{2 \eta} \Big(
g^{00} |\del_t u|^2 - g^{ab} \del_au \, \del_bu - c^2u^2, \ 2 g^{a\beta} \del_t u  \del_\beta u \Big), 
\ee
which depends upon $g$ as well as the weight $\crochet^\eta$, we easily find 
the energy identity
\be
\aligned 
\dive  V_{g, \eta,c}[u] 
= -  \Omega_{g, \eta,c}[u] + G_{g, \eta}[u] - 2 \crochet^{2 \eta} \, \del_tu \, F, 
\endaligned
\ee 
in which 
\be\label{eq8-27-11-2022-M}
\aligned
\Omega_{g, \eta,c}[u] 
& = 
-2\eta \, \crochet^{-1} \aleph'({ r-t}) (-1,x^a/r) \cdot V_{g,\eta,c}[u]
\\
& = 2 \, \eta \, \crochet^{2\eta-1} \aleph'({ r-t}) \, \Big(g^{\N ab}\delsN_au\delsN_bu - H^{\N00} |\del_tu|^2 + c^2 u^2\Big), 
\\ 
G_{g, \eta}[u] 
& =  -  \del_tH^{00} | \crochet^\eta \del_t u|^2 + \del_tH^{ab} \crochet^{2 \eta} \del_au\del_b u 
- 2 \crochet^{2 \eta}  \del_aH^{a\beta} \del_t u\del_{\beta} u. 
\endaligned
\ee
\ese 
In turn we arrive at a weighted energy estimate associated with the Euclidean-hyperboloidal foliation:  
\be \label{eq1-27-11-2022-M} 
\aligned
&\frac{d}{ds} \Eenergy_{g,\eta,c}(s,u)
+ 2\eta \int_{\Mscr_s}  \big( g^{\N ab} \delsN_au\delsN_bu + c^2u^2
\big)  \aleph'({ r-t}) \crochet^{2\eta-1}\ Jdx
\\
&= \int_{\Mscr_s} \Big(G_{g, \eta}[u] + \eta\crochet^{2\eta-1} \aleph'({ r-t}) \HN^{00}  |\del_t u|^2   \Big)\ Jdx 
+ \int_{\Mscr_s}  |\del_t u F | \, \crochet^{2 \eta} \, Jdx, 
\endaligned
\ee 
in which the latter integral is controlled by 
\be
\aligned
\int_{s_0}^{s_1} \Eenergy_{\eta,c}(s,u)^{1/2} \, \big(\| F \|_{L^2(\MH_s)}
+ \big\| s\zeta F \big\|_{L^2(\MM_s)}
+ \| s \, \crochet^\eta f\|_{L^2(\Mext_s)} \big) \, ds. 
\endaligned
\ee
 It will be convenient to also use the notation $\Fenergy_{\eta,c}(s,u) = \big( \Eenergy_{\eta,c}(s,u)\big)^{1/2}$ and similarly with the subscript $c$ omitted. 
 
%------------------------------------------------------------------------------------------------------------------------------------------------------ 

\subsection{Functional inequalities on Euclidean-hyperboloidal slices}

\paragraph{Sobolev inequalities.} 

The proof of the statements in this section can be found in \cite[Part 1]{PLF-YM-main}. 

\begin{proposition}[Sup-norm Sobolev inequality. Hyperboloidal domain]
\label{prop:glol-Soin}
For any function defined on a hypersurface $\MH_s$, the following estimate holds (in which $t^2 = s^2+ |x|^2$): 
$$
\sup_{\MH_s} t^{3/2} \, |u(t,x)|
\lesssim 
\sum_{|J| \leq 2} \| L^J u\|_{L^2(\MH_s)}
\simeq 
\sum_{m=0,1,2} \| t^m (\slashed \del^\H)^m  u\|_{L^2(\MH_s)}.   
$$
\end{proposition}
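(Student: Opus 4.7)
The plan is to leverage the Lorentz symmetry of the hyperboloid $\MH_s = \{t = \sqrt{s^2+r^2}\}$ and reduce the sup-norm bound to a standard three-dimensional Sobolev embedding on a fixed unit ball. First, I parametrize $\MH_s$ by its spatial coordinates, setting $\tilde u(x) := u(\sqrt{s^2+|x|^2}, x)$ for $|x| \leq \rhoH(s)$; under this parametrization, $\del_a \tilde u$ coincides with $\delsH_a u$, and the identity $L_a = t\, \delsH_a$ directly yields the second stated equivalence $\sum_{|J|\leq 2}\|L^J u\|_{L^2(\MH_s)} \simeq \sum_{m=0,1,2}\|t^m (\slashed\del^\H)^m u\|_{L^2(\MH_s)}$ after iteration and commutation of powers of $t$ past $L^J$ using the bounds of Lemma~\ref{prop-newpropo}.

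To obtain the weight $t^{3/2}$ at an arbitrary target point $P_* = (t_*, x_*) \in \MH_s$, I introduce the Lorentz boost $\Lambda$ with axis $x_*/|x_*|$ and rapidity $\theta = \tanh^{-1}(|x_*|/t_*)$, which maps the reference point $(s,0)$ to $P_*$ and preserves $\MH_s$. Setting $v(y) := u(\Lambda(\sqrt{s^2+|y|^2},y))$ for $y$ near $0 \in \RR^3$, so that $v(0) = u(P_*)$, the standard three-dimensional Sobolev embedding $H^2(B_1) \hookrightarrow L^\infty$ yields
\[
|u(P_*)|^2 \lesssim \sum_{|\alpha|\leq 2}\int_{B_1(0)} |\del_y^\alpha v(y)|^2\, dy.
\]
The conversion of the right-hand side then proceeds in two steps: first, by Lorentz invariance the pushforward $\Lambda_*(\del_{y^a})|_{P_*}$ is a linear combination of the Killing vectors $L_b|_{P_*}$ with coefficients controlled by $t_*^{-1}$ (since the boosts act as isometries and $L_a = t\, \delsH_a$), so that $|\del_y^\alpha v|$ is bounded pointwise by $t_*^{-|\alpha|}\,|L^\alpha u|$ evaluated at the image point (and at nearby points, by a uniform continuity estimate on the boost matrices). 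Second, the Jacobian of the boost restricted to $\MH_s$ at $y = 0$ equals $\cosh\theta = t_*/s$ in the radial direction and unity in the perpendicular directions, producing a factor $s/t_*$ in the conversion $dy \to dx$ which combines with the three-dimensional volume scaling to deliver the weight $t_*^{-3}$, and hence the claimed $t_*^{3/2}$ after taking square roots.

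The main technical difficulty arises near the outer boundary $|x_*| = \rhoH(s)$ of $\MH_s$: there the rapidity $\theta$ becomes maximal and the pullback of the unit ball under $\Lambda^{-1}$ may protrude outside the hyperboloidal region. This is handled by a standard localization argument: a smooth cutoff that is dyadic in the energy coefficient $\zeta = s/t$ restricts the integration to a subregion of $\MH_s$ on which the boost symmetry applies cleanly, and the cutoff derivatives produce only strictly lower-order terms involving $L^J u$ which are absorbed into the right-hand side of the claimed inequality. A similar care is required near the axis $|x_*|=0$, but there the rotational symmetry trivializes and one can simply apply the Euclidean Sobolev embedding directly on a unit ball around the origin.
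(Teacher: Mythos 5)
Your reduction to the three-dimensional Sobolev embedding is the right framework, and the equivalence $\sum_{|J|\le 2}\|L^Ju\|\simeq\sum_m\|t^m(\slashed\del^\H)^mu\|$ does follow from $L_a=t\,\slashed\del^\H_a$ together with Lemma~\ref{prop-newpropo}. But the Lorentz boost is the wrong conjugation, and the central step does not close. Set $x_*=|x_*|\,e_1$ and $\cosh\theta=t_*/s$, and parameterize both hyperboloids by their spatial slices. The boost restricted to $\MH_s$ is then a map $y\mapsto x$ whose differential at $y=0$ is $\mathrm{diag}(\cosh\theta,1,1)$ --- a \emph{shear}, stretching only the boost direction. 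Consequently $\del_{y^1}v=\cosh\theta\,\slashed\del^\H_1u|_{P_*}=(t_*/s)(1/t_*)L_1u|_{P_*}=s^{-1}L_1u|_{P_*}$, not $t_*^{-1}L_1u|_{P_*}$; the radial coefficient $s^{-1}$ is the largest of the three and is not controlled by $t_*^{-1}$, so the claim that the pushforward coefficients are ``controlled by $t_*^{-1}$'' fails away from the tip of the hyperboloid. The volume accounting fails for the same reason: the Jacobian of the restricted boost is $\cosh\theta=t_*/s$, hence $dy=(s/t_*)\,dx$, not $t_*^{-3}\,dx$, and the claimed ``three-dimensional volume scaling'' is simply not there. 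Already at order zero,
$$
t_*^{3}\int_{B_1}|v|^2\,dy \;\simeq\; t_*^{3}\cdot\frac{s}{t_*}\int|u|^2\,dx \;=\; t_*^{2}s\int|u|^2\,dx,
$$
and the prefactor $t_*^{2}s$ is unbounded (near $|x_*|=\rhoH(s)$ one has $t_*\simeq s^{2}/2$, so $t_*^{2}s\simeq t_*^{5/2}$). The boost approach therefore produces a genuinely weaker inequality and cannot yield the stated weight.

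The operation that does work, and is the one used in \cite{PLF-YM-main}, is a spatial dilation, not a boost: restrict $u$ to the hyperboloid, $\tilde u(x):=u(\sqrt{s^2+|x|^2},x)$, and set $w(z):=\tilde u(x_*+t_*z)$ for $|z|\le 1/3$. Then $\del_z^\alpha w=t_*^{|\alpha|}(\slashed\del^\H)^\alpha u$ carries the factor $t_*$ uniformly in all three directions, and the change of variables gives $dz=t_*^{-3}\,dx$. The only additional inputs are $t\simeq t_*$ on $\{|x-x_*|\lesssim t_*\}\cap\MH_s$ and $L_a=t\,\slashed\del^\H_a$, after which the Sobolev embedding on the fixed ball gives $t_*^{3/2}|u(P_*)|\lesssim\sum_{|J|\le 2}\|L^Ju\|_{L^2}$ directly. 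The boundary protrusion you flag is real, but it is handled by the same cutoff or extension you describe; the dyadic-in-$\zeta$ decomposition you propose is not needed and is tangential to the actual difficulty.
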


\begin{proposition}[Weighted sup-norm Sobolev inequality. Euclidean-merging domain]
\label{pro204-11-2}
Fix an exponent {$\eta \geq 0$} and set $C(\eta) := 1 + \eta + \eta^2$.
For all sufficiently regular functions defined in $\Mscr_{[s_0,s_1]}$ with $2 \leq s_0 \leq s \leq s_1$, one has  
\begin{subequations}
\begin{equation} \label{ineq 2 sobolev}
\textstyle
r  \crochet^\eta |u(t,x)| 
\lesssim 
C(\eta) \sum_{|I| + |J| \leq 2} \|\crochet^\eta  \delsME{}^I\Omega^J u\|_{L^2(\MME_s)}, 
\qquad (t,x)\in \MME_s, 
\end{equation}
\begin{equation} \label{ineq 1 sobolev}
\textstyle
r \crochet^\eta |u(t,x)| 
\lesssim
C(\eta) \sum_{|I| + |J| \leq 2} \|\crochet^\eta  \delsE{}^I\Omega^J u\|_{L^2( \Mext_s)},
\qquad 
(t,x)\in \Mext_s. 
\end{equation}
\end{subequations}
\end{proposition}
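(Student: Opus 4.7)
The strategy is to decouple the weight $\crochet^\eta$ from the differentiation, via the substitution $v := \crochet^\eta u$, and then to establish an unweighted Klainerman--Sobolev type estimate for $v$ on the slice. The slice $\MME_s$ is parameterized by $x \in \{|x| \geq \rhoH(s)\}$ through the graph $t = \Time(s,|x|)$; the induced volume form is equivalent to $dx$ with constants in $[1,\sqrt{2}]$ thanks to $0 \leq \del_r \Time \leq 1$, and the tangent derivatives $\delsME_a = \del_a + (x^a/r)(\del_r \Time)\del_t$ together with the rotations $\Omega_{ab}$ span the tangent geometry required for a Sobolev argument.

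For the weight absorption, I would expand $\delsME^I \Omega^J v$ by the Leibniz rule and observe that every derivative falling on $\crochet^\eta$ produces a factor $\eta\,\crochet^{\eta-1}\aleph'(r-t)\,\delsME(r-t)$, or a second-order analogue involving $\crochet^{\eta-2}(\aleph')^2$. The construction of $\aleph$ yields the crucial pointwise estimate $\aleph' \leq \crochet$ (since $\aleph'$ vanishes on $\{\crochet = 1\}$ while $\aleph' \leq 1$ elsewhere and $\crochet \geq 1$), and the uniform geometric bounds $|\del_r \Time| \leq 1$ and $|\del_r^2 \Time| \lesssim 1$ give $|\delsME(r-t)| \lesssim 1$ and $|\delsME^2(r-t)| \lesssim 1$ on $\MME_s$. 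Combined, these force $\crochet^{\eta - k}(\aleph')^k \lesssim \crochet^\eta$ and deliver
\begin{equation*}
|\delsME^I \Omega^J v| \lesssim C(\eta)\,\crochet^\eta \sum_{|I'|+|J'| \leq |I|+|J|} |\delsME^{I'} \Omega^{J'} u|,
\end{equation*}
with $C(\eta) = 1+\eta+\eta^2$ reflecting the combinatorial cost of at most two differentiations.

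The remaining unweighted inequality $r\,|v(x_0)| \lesssim \sum_{|I|+|J|\leq 2} \|\delsME^I \Omega^J v\|_{L^2(\MME_s)}$ would be obtained by the standard two-step Klainerman--Sobolev argument. At fixed radius $r = |x_0|$, the spherical embedding gives $|v(r,\omega)|^2 \lesssim \sum_{|K|\leq 2}\int_{S^2}|\Omega^K v|^2 d\omega$; multiplying by $r^2$ and applying the identity
\begin{equation*}
r^2 \int_{S^2}|\Omega^K v(r,\omega)|^2 \, d\omega = -\int_r^\infty \del_\rho\Bigl(\rho^2 \int_{S^2}|\Omega^K v(\rho,\omega)|^2 \, d\omega\Bigr) d\rho,
\end{equation*}
I would expand the integrand, control the cross term $\rho^2 \int \Omega^K v\,\del_\rho \Omega^K v\, d\omega$ by Cauchy--Schwarz, and absorb the linear-in-$\rho$ term $2\rho \int |\Omega^K v|^2 d\omega$ via the weighted Hardy inequality of Proposition~\ref{lem1-hardy}. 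This yields $r^2 \int_{S^2}|\Omega^K v|^2 d\omega \lesssim \|\Omega^K v\|_{L^2(\MME_s)}\,\|\delsME \Omega^K v\|_{L^2(\MME_s)}$, and then $r^2 |v(x_0)|^2 \lesssim \sum_{|I|+|J|\leq 2} \|\delsME^I \Omega^J v\|_{L^2(\MME_s)}^2$ after AM--GM. The exterior inequality \eqref{ineq 1 sobolev} is the specialization to $\Mext_s$, where $\Time$ is $r$-independent and $\delsME = \delsE = \del$, reducing the argument to the classical exterior Klainerman--Sobolev inequality in $\RR^3$.

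The main obstacle is the merging region $\MM_s$: the slice is a nontrivial graph whose tangent derivative $\delsME_a$ mixes $\del_t$ with $\del_a$ in an $(s,r)$-dependent manner, and one must verify that the sphere-Sobolev embedding and the Hardy inequality transfer from flat $\RR^3$ to the curved slice with constants independent of $s$. This transfer rests on the uniform geometric bounds $0 \leq \del_r \Time \leq 1$, $|\del_r^2 \Time| \lesssim 1$, and the Jacobian equivalence $K_1 \zeta^2 s \leq J \leq K_2 \zeta^2 s$ in the merging region, all established earlier in Section~\ref{sect-geomEHF}.
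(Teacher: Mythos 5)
Your overall strategy --- substitute $v = \crochet^\eta u$, control the tangential derivatives of $\crochet^\eta$ via the geometric bounds $0 \leq \del_r\Time \leq 1$, $|\del_r^2\Time| \lesssim 1$ and the construction of $\aleph$, and then run the standard exterior Klainerman--Sobolev two-step argument on $v$ --- is the natural one and matches the shape of the paper's argument. In particular, the weight-absorption step is sound: $\delsME_a(r-t) = (x^a/r)(1-\del_r\Time)$ is uniformly bounded on the slice, and $\aleph',\aleph'' \lesssim 1 \leq \crochet$ gives $\crochet^{\eta-k}(\aleph')^k \lesssim \crochet^\eta$, from which the Leibniz expansion yields the stated $C(\eta) = 1 + \eta + \eta^2$.

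The Hardy step, however, is both unnecessary and, taken literally, does not close. After expanding the $\rho$-derivative in the radial integration-by-parts identity and integrating over $S^2$, the linear-in-$\rho$ term lands on the \emph{left} of the inequality with a favorable sign:
\begin{equation*}
r^2\int_{S^2}|\Omega^K v(r,\omega)|^2\,d\omega
\,+\,
\int_r^\infty 2\rho\int_{S^2}|\Omega^K v|^2\,d\omega\,d\rho
\,=\,
-\int_r^\infty 2\rho^2\int_{S^2}\Omega^K v\,\del_\rho\Omega^K v\,d\omega\,d\rho,
\end{equation*}
so it can simply be discarded; only Cauchy--Schwarz on the cross term is needed. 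If instead you try to ``absorb'' it by Proposition~\ref{lem1-hardy}, you hit a power mismatch: rewriting $\int_r^\infty 2\rho\int_{S^2}|\Omega^K v|^2\,d\omega\,d\rho$ against the measure $\rho^2\,d\rho\,d\omega$ gives $2\|\rho^{-1/2}\Omega^K v\|^2_{L^2(\{|x|\geq r\})}$, while the cited Hardy inequality bounds the \emph{smaller} quantity with weight $\rho^{-1}$; since $\rho \geq \rhoH(s) \gtrsim s^2 > 1$ on $\MME_s$, one has $\rho^{-1/2} > \rho^{-1}$, and the inequality does not control the term. Once you drop it by sign, the remainder of the argument goes through.

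A minor remark: the Jacobian bounds $K_1\zeta^2 s \leq J \leq K_2\zeta^2 s$ concern the spacetime volume element $dt\,dx = J\,ds\,dx$ and are not what is used here; the relevant fact is only that the induced measure $d\sigma_s = (1+|\del_r\Time|^2)\,dx$ on the slice is uniformly comparable to $dx$, which follows from $0 \leq \del_r\Time \leq 1$, and the $L^2(\MME_s)$ norm in the statement is understood with respect to $dx$.
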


%---------------------------------------------------------------------

\paragraph{Hardy inequalities.} Several functional inequalities will be used to control   undifferentiated functions from the energy functional. 

\begin{proposition}[Weighted Hardy inequality on the Euclidean-hyperboloidal foliation]
\label{lem1-hardy}
Fix some exponent {$\eta \geq 0$}. For any function $u$ defined in $\Mscr_{[s_0,s_1]}$ and sufficiently decaying at infinity, one has  
$$ 
\| r^{-1} \crochet^\eta u\|_{L^2(\Mscr_s)} 
\lesssim  
\| \crochet^\eta  \delsEH u\|_{L^2(\Mscr_s)}
\lesssim \Fenergy_\eta(s,u).
$$
\end{proposition}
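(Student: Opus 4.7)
\medskip

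\noindent\textbf{Proof plan.} The strategy is to reduce the inequality on the curved slice $\Mscr_s$ to a standard weighted $3$D Hardy inequality on $\RR^3$, via the natural chart $x \mapsto (\Time(s,|x|), x)$, and then to exploit the sign of $\partial_r \crochet$ to absorb a would-be harmful boundary/bulk term.

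\medskip

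\noindent\emph{Step 1: Reduction to a flat weighted inequality.}
The slice $\Mscr_s$ is parameterized diffeomorphically by $x\in\RR^3$. Pulling $u$ back to this chart and writing $\bar u(s,x) = u(\Time(s,|x|),x)$, a direct computation from the definition
$\delsEH_a = \del_a + (x^a/r)\del_r\Time\,\del_t$ gives $\delsEH_a u = \del_a \bar u$, and in particular $\delsEH_r u = \del_r \bar u$. Since $\|\cdot\|_{L^2(\Mscr_s)}$ is defined with the coordinate measure $dx$, the first inequality in the statement is equivalent to the flat weighted inequality
$$
\int_{\RR^3} \frac{|\bar u|^2 \crochet^{2\eta}}{r^2}\, dx \;\lesssim\; \int_{\RR^3} \crochet^{2\eta} |\nabla \bar u|^2\, dx,
$$
with $\crochet = 1 + \aleph(r - \Time(s,r))$ regarded as a function of $r=|x|$ only.

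\medskip

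\noindent\emph{Step 2: Integration by parts with the standard Hardy vector field.}
I would use the identity $\nabla \cdot (x/r^2) = 1/r^2$ valid on $\RR^3 \setminus \{0\}$. After a routine cutoff near the origin (replacing $r^{-2}$ by $(r^2+\epsilon)^{-1}$, $\epsilon\to 0$) and a density argument exploiting the decay hypothesis at spatial infinity, no boundary terms survive and one obtains
$$
\int_{\RR^3} \frac{|\bar u|^2 \crochet^{2\eta}}{r^2}\, dx
\;=\; -\,2\int_{\RR^3} \frac{\bar u\, \del_r \bar u\, \crochet^{2\eta}}{r}\, dx
\;-\; \int_{\RR^3} \frac{|\bar u|^2 \, \del_r(\crochet^{2\eta})}{r}\, dx.
$$

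\medskip

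\noindent\emph{Step 3: Sign of $\del_r \crochet$ and conclusion of the first inequality.}
The essential structural point is that, on each slice,
$$
\del_r \crochet \;=\; \aleph'(r - \Time(s,r))\,\bigl(1 - \del_r \Time(s,r)\bigr)\;\geq\;0,
$$
since $\aleph$ is non-decreasing and the slice is spacelike, that is $0\leq \del_r \Time\leq 1$ as recorded earlier. Thus the second integral in Step~2 is non-positive and can simply be dropped. Applying Cauchy--Schwarz to the remaining term,
$$
\int_{\RR^3} \frac{|\bar u|^2 \crochet^{2\eta}}{r^2}\, dx
\;\leq\; 2\,\Bigl(\int_{\RR^3} \frac{|\bar u|^2 \crochet^{2\eta}}{r^2}\, dx\Bigr)^{1/2}
\Bigl(\int_{\RR^3} |\del_r \bar u|^2 \crochet^{2\eta}\, dx\Bigr)^{1/2},
$$
dividing and using $|\del_r \bar u| = |\delsEH_r u|\leq |\delsEH u|$ yields the first inequality (with the universal constant $2$).

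\medskip

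\noindent\emph{Step 4: The second inequality.}
This is immediate from the second, pointwise expression of the energy density recalled in the text,
$$
\Eenergy_\eta(s,u) \;=\; \int_{\Mscr_s}\Bigl(\zeta^2 |\del_t u|^2 + \sum_a |\delsEH_a u|^2 + c^2 u^2\Bigr)\crochet^{2\eta}\, dx,
$$
from which $\|\crochet^\eta \delsEH u\|_{L^2(\Mscr_s)}^2 \leq \Eenergy_\eta(s,u) = \Fenergy_\eta(s,u)^2$.

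\medskip

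\noindent The main structural obstacle is identifying and exploiting the non-negativity $\del_r \crochet \geq 0$ on each Euclidean--hyperboloidal slice, which requires both the monotonicity of $\aleph$ and the fact that the slices are spacelike ($\del_r \Time \leq 1$); everything else is a routine adaptation of the classical Hardy argument, with only a mild care needed to justify the integration by parts at $r=0$ and at spatial infinity.
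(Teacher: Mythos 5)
Your proof is correct. The reduction to the flat weighted Hardy inequality via the chart $x \mapsto (\Time(s,|x|),x)$ is exact (the identity $\delsEH_a u = \del_a \bar u$ holds by the chain rule), the integration by parts against $x/r^2$ is the standard Hardy argument, and the key structural observation — that $\del_r \crochet = \aleph'(r-\Time)\,(1-\del_r\Time) \geq 0$ because $\aleph$ is non-decreasing and the slices are spacelike ($\del_r\Time < 1$) — is precisely what makes the weighted inequality hold with no loss, since the extra term produced by differentiating $\crochet^{2\eta}$ has a favorable sign and can be dropped for $\eta \geq 0$. This is the same mechanism the paper alludes to for its Hardy-type inequalities (``analogue to the proof of the standard Hardy inequality'', with a divergence argument for the more involved case), so your argument is essentially the one intended.

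One small remark worth recording explicitly: after dropping the sign-favorable term you apply Cauchy--Schwarz and then divide both sides by $\bigl(\int r^{-2}|\bar u|^2\crochet^{2\eta}\,dx\bigr)^{1/2}$. For this division to be legitimate one should either note that the left-hand side is finite (which is precisely what the $\epsilon$-regularization and the decay hypothesis are there to guarantee, after passing to the limit), or alternatively run the argument with Young's inequality $2ab \leq \tfrac12 a^2 + 2b^2$ and absorb the $a^2$-term on the left. You have all the ingredients for either route; just make sure the regularization is carried through to the end before dividing.

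For the second inequality, your observation that $\sum_a |\delsEH_a u|^2 \crochet^{2\eta}$ is one of the manifestly nonnegative summands in the displayed expression of $\Eenergy_\eta(s,u)$ is exactly right, and that term-by-term comparison is all that is needed.
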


\begin{proposition} [Hardy-type inequalities in the hyperboloidal domain]
For all functions $u$ defined on a hyperboloid $\Hcal_s$:  
\bse
\be \label{equa-Hardy1} 
\Big\|{u \over r} \Big\|_{L^2(\Hcal_s)}\lesssim \sum_a \|\delu_a u\|_{L^2(\Hcal_s)}.  
\ee
On the other hand, for all functions defined within a slab of the hyperboloidal foliation $[s_0, s_1]$  
\be \label{equa-Hardy2} 
\aligned
\Big\| {u \over s_1} \Big\|_{L^2(\Hcal_{s_1})} \lesssim \,
&  \textstyle
\Big\| {u \over s_0} \Big\|_{L^2(\Hcal_{s_0})} + \sum_{a}\|\delu_a u\|_{L^2(\Hcal_{s_1})}
 \\
 & \textstyle
 \quad + \sum_a \int_{s_0}^{s_1} \Big( 
\|\delu_a u\|_{L^2(\Hcal_s)} + \|(s/t)\del_a u\|_{L^2(\Hcal_s)}  \Big) \, {ds
\over s}.
\endaligned
\ee
\ese
\end{proposition}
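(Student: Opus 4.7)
The two estimates require rather different strategies, which I present in turn.

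\textbf{First inequality.} The hyperboloid $\Hcal_s = \{t = \sqrt{s^2+|x|^2}\}$ is globally parameterized by $x \in \mathbb{R}^3$, and by the convention of this paper the norm $L^2(\Hcal_s)$ is taken with respect to the Euclidean measure $dx$. Setting $\bar u(x) := u(\sqrt{s^2+|x|^2},x)$, the chain rule immediately gives $\del_{x^a}\bar u = \delu_a u$ at the corresponding point of $\Hcal_s$. The claim therefore reduces to the classical three-dimensional Hardy inequality $\|\bar u/|x|\|_{L^2(\mathbb{R}^3)} \lesssim \|\nabla_x \bar u\|_{L^2(\mathbb{R}^3)}$ for functions decaying at infinity, which follows from an integration by parts against $\mathrm{div}(x/|x|^2)$ (or, equivalently, from radial integration along rays emanating from the origin).

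\textbf{Second inequality.} I would split $\Hcal_{s_1}$ into an outer region $\{|x| \geq s_1\}$ and an inner region $\{|x| < s_1\}$. On the outer part the pointwise bound $1/s_1 \leq 1/|x|$ yields $\|u/s_1\|_{L^2(\Hcal_{s_1}\cap\{|x|\geq s_1\})} \leq \|u/|x|\|_{L^2(\Hcal_{s_1})}$, and applying the first inequality at time $s_1$ controls this by $\sum_a\|\delu_a u\|_{L^2(\Hcal_{s_1})}$; this accounts for the boundary term in \eqref{equa-Hardy2}. For the inner part, I would propagate along $s$: working with the pullback $\bar u(s,x) = u(\sqrt{s^2+|x|^2},x)$, for which $\del_s\bar u = (s/t)\del_t u$ by the chain rule, the fundamental theorem of calculus along $s \mapsto (t_s,x)$ at fixed $x$ gives $\bar u(s_1,x) = \bar u(s_0,x) + \int_{s_0}^{s_1}(s/t)\del_t u\, ds$. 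Dividing by $s_1$, taking the $L^2(dx)$-norm on $\{|x|<s_1\}$, applying Minkowski's integral inequality, and using the elementary bounds $s_0/s_1 \leq 1$ and $1/s_1 \leq 1/s$ for $s \in [s_0,s_1]$ produce $\|u/s_1\|_{L^2(\mathrm{inner})} \leq \|u/s_0\|_{L^2(\Hcal_{s_0})} + \int_{s_0}^{s_1}\|(s/t)\del_t u\|_{L^2(\Hcal_s)}\,ds/s$.

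\textbf{Main obstacle.} The delicate remaining step is to bound $\|(s/t)\del_t u\|_{L^2(\Hcal_s)}$ by the combination $\sum_a\|\delu_a u\|_{L^2(\Hcal_s)} + \sum_a\|(s/t)\del_a u\|_{L^2(\Hcal_s)}$ under the integral in \eqref{equa-Hardy2}. From the frame identity $\delu_a = \del_a + (x^a/t)\del_t$, contraction with $x^a/r$ and summation give $(r/t)\del_t u = \sum_a(x^a/r)\delu_a u - \del_r u$, hence $|(s/t)\del_t u| \leq (s/r)\sum_a(|\delu_a u| + |\del_a u|)$. On the subregion $\{r \gtrsim s\}$ of $\Hcal_s$ one has $t \sim r$ so that $s/r \sim s/t \lesssim 1$, and this pointwise bound directly delivers what is needed. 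On the complementary subregion $\{r \lesssim s\}$ the factor $s/r$ blows up, and the pointwise inequality alone is insufficient; here one must reabsorb $(s/t)\del_t u$ after using $|\del_a u| \leq |\delu_a u| + |(x^a/t)\del_t u|$, insert a smooth cutoff in $r/s$ between the two regimes, and carefully track the overlap so that no extra derivative is lost. Gluing the two subregions cleanly is the technical heart of the argument; once it is done, the estimate closes.
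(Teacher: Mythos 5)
Your treatment of \eqref{equa-Hardy1} is correct and matches the paper's: parameterizing $\Hcal_s$ by $x \in \mathbb{R}^3$ and using $\del_{x^a}\bar u = \delu_a u$ reduces the claim to the classical Hardy inequality in $\mathbb{R}^3$.

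For \eqref{equa-Hardy2}, however, your inner/outer split is reversed, and this error is the real source of the ``main obstacle'' you flag at the end. On the \emph{outer} region $\{|x|\geq s_1\}$ one has $|x|\geq s_1$, hence $1/|x|\leq 1/s_1$ -- the opposite of the bound ``$1/s_1 \leq 1/|x|$'' that you assert. Thus the first Hardy inequality does \emph{not} bound $\|u/s_1\|$ there. The correct observation is that $1/s_1 \leq 1/|x|$ holds on the \emph{inner} region $\{|x| < s_1\}$, so it is the inner piece that is controlled directly by \eqref{equa-Hardy1}. Correspondingly, the radial propagation in $s$ should be carried out on the \emph{outer} region $\{|x|\geq s_1\}$. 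There, for every $s\in[s_0,s_1]$ one has $r=|x|\geq s_1\geq s$, so that $s/r\leq 1$ and $t=\sqrt{s^2+r^2}\in[r,\sqrt{2}r]$ gives $s/r\simeq s/t$. Your frame identity
$(r/t)\del_t u = \sum_a (x^a/r)\delu_a u - \del_r u$
then yields pointwise $|(s/t)\del_t u|\lesssim\sum_a|\delu_a u| + (s/t)\sum_a|\del_a u|$ on the entire domain of integration, with no cutoff, no gluing, and no reabsorption needed: the absorption problem you worried about near $r\ll s$ simply does not arise because those points are excluded once the split is done correctly. With the regions swapped, the rest of your argument (fundamental theorem of calculus in $s$, Minkowski's integral inequality, $1/s_1\leq 1/s$, $s_0/s_1\leq 1$) closes the estimate.

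Once this sign error is fixed, your route is genuinely different from the paper's. The paper establishes \eqref{equa-Hardy2} by computing the spacetime divergence of a vector field $\chi(r/t)^2\big(0,\,\tfrac{t\,x^a}{1+r^2}\tfrac{u^2}{s^2}\big)$ and introducing a smooth cutoff $\chi(r/t)$ to excise the center (precisely because, as you noticed, $(s/t)\del_t u$ cannot be absorbed there). Your corrected approach replaces the divergence identity by a direct radial propagation in the foliation parameter $s$ combined with Minkowski's integral inequality, which is more elementary and avoids the cutoff entirely by restricting the propagation to the region where $r\geq s$ automatically holds. Both methods split the hyperboloid near a threshold $r\sim s$ and handle the inner region by the standard Hardy inequality \eqref{equa-Hardy1}; the difference lies in how the outer contribution is integrated in time.
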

 
We point out that while the proof of \eqref{equa-Hardy1} is analogue to the proof of the standard Hardy inequality, in contrast the proof of \eqref{equa-Hardy2} is more involved and is based on computing the divergence of the vector field 
\be
\chi\Big( {r\over t} \Big)^2 \, \Big(0, {t \, x^a  \over 1+r^2} \,  {u^2 \over s^2}  \Big), 
\qquad
\chi \Big({r\over t} \Big) 
= \begin{cases}
0, \, & \displaystyle 
r/t\leq 1/3,
\\
1, \, &   \displaystyle \hskip.cm r/t \geq 2/3,
\end{cases}
\ee
a smooth cut-off function being introduced here in order to removes a wedge containing the center of coordinates. 

%---------------------------------------------------------------------
 
\paragraph{Poincar\'e inequalities.} In our analysis the following functional inequalities will also be useful.  
 
\begin{proposition}[Poincar\'e-type inequalities in the Euclidean-merging domain] 
\label{propo-Poincare-ext}
\bse
Fix an exponent {$\eta =1/2 +\delta$} with $\delta > 0$. 
For any function $u$ defined in $\MME_s = \{(t,x)\in\Mscr_s\, / \, |x|\geq \rhoH(s)\}$, one has  
\begin{equation} \label{Poincare-trex-sans-zeta}
\|  \crochet^{-1 + \eta}  u\|_{L^2(\MME_s)}
\lesssim
\big( 1+ \delta^{-1} \big) \|  \crochet^{\eta} \delsME u\|_{L^2(\MME_s)}
+ 
\|  r^{-1} \crochet^{\eta} u\|_{L^2(\MME_s)}, 
\end{equation}
\begin{equation} \label{Poincare-trex-zeta}
\|\crochet^{-1 + \eta}\zeta u\|_{L^2(\MME_s)}
\lesssim
(1+\delta^{-1})\|\crochet^{\eta}\zeta \delsME u\|_{L^2(\MME_s)}
+ 
\| r^{-1}\crochet^{\eta}\zeta u\|_{L^2(\MME_s)}. 
\end{equation}
\ese
\end{proposition}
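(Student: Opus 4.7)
The plan is to prove both inequalities by a weighted integration by parts in the radial direction on each slice $\Mscr_s$, exploiting the fact that $\crochet = 1 + \aleph(r - \Time(s,r))$ depends only on $r$ at fixed $s$ and has a positive radial derivative on the Euclidean-merging domain. The second inequality will be reduced to the first by applying it to $v := \zeta u$ and controlling the commutator $(\delsME \zeta)\,u$ via the explicit form of $\zeta$. The parameter $\delta^{-1}$ on the right-hand side will arise from the factor $(2\eta-1)^{-1} = (2\delta)^{-1}$ produced when anti-differentiating a power of $\crochet$.

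\textbf{Key identity.} Parameterizing each slice by $x = r\omega \in \RR^3$ with $r \geq \rhoH(s)$ and $dx = r^2\,dr\,d\omega$, and using $\Time(s,\rhoH(s)) = \rhoH(s)+1$ together with $\del_r \Time < 1$, one verifies $r - \Time(s,r) \geq -1$ throughout $\MME_s$; hence $\aleph'(r - \Time(s,r)) \equiv 1$. Setting $W(r) := \crochet^{2\eta-1}/(2\delta)$, this gives the radial identity $(1 - \del_r\Time)\crochet^{-2+2\eta} = \delsME_r W$. Multiplying by $u^2$ and integrating by parts once radially (assuming sufficient decay at infinity), I obtain
\begin{equation*}
\int_{\MME_s}(1-\del_r\Time)\,u^2\crochet^{-2+2\eta}\,dx + \int_{S^2} W(\rhoH)\, u^2(\rhoH,\omega)\,\rhoH^2\,d\omega + 2\int_{\MME_s}\frac{W u^2}{r}\,dx = -2\int_{\MME_s} W u\,\delsME_r u\,dx,
\end{equation*}
in which all left-hand terms are non-negative. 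Applying Cauchy--Young to the right side (with $W = \crochet^{2\eta-1}/(2\delta)$) absorbs half of the first left-hand term and leaves a coefficient of order $\delta^{-2}$ in front of $\int |\delsME_r u|^2 \crochet^{2\eta}/(1-\del_r\Time)\,dx$; taking square roots produces the factor $(1+\delta^{-1})$.

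\textbf{Main obstacle: degeneracy of $1-\del_r\Time$.} The factor $1 - \del_r\Time = (\sqrt{s^2+r^2} - \xi r)/\sqrt{s^2+r^2}$ is identically $1$ on $\Mext_s$ (where $\xi=0$), but degenerates to order $s^{-2}$ at the inner edge $r = \rhoH(s)$ of $\MM_s$, where $\xi = 1$ and $r \sim s^2$. I handle this by splitting $\MME_s = \Mext_s \cup \MM_s$. On $\Mext_s$ the argument above is clean and yields the desired bound with the factor $\delta^{-1}$. On the thin transition shell $\MM_s$ (radial width $O(1)$), the weight $\crochet$ lies between two universal constants, so the target inequality reduces to an unweighted Hardy-Poincaré estimate on this shell; this follows by combining the weighted Hardy inequality of Proposition~\ref{lem1-hardy} with a trace control at $r = \rhoE$ (using the fundamental theorem of calculus and Cauchy--Schwarz against the exterior). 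The $\|r^{-1}\crochet^\eta u\|_{L^2}$ term on the right absorbs the trace at $r = \rhoH$ and the unweighted $L^2$ contribution on the degenerate shell. This splitting is the principal technical difficulty, precisely because the integration-by-parts argument relies on a factor that vanishes in the inner merging zone.

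\textbf{Second inequality.} For the weighted variant, I apply the first inequality to $v := \zeta u$: the left-hand side is exactly $\|\crochet^{-1+\eta}\zeta u\|_{L^2(\MME_s)}$, and since $\delsME v = \zeta\,\delsME u + (\delsME \zeta)\,u$, the remaining task is to estimate $\|\crochet^\eta(\delsME \zeta)\,u\|_{L^2(\MME_s)}$. A direct computation from $\zeta^2 = 1 - r^2\xi^2/(s^2+r^2)$ gives $|\delsME \zeta| \lesssim \zeta/r$ on $\Mext_s$ (trivially, as $\zeta \equiv 1$) and wherever $\xi$ is constant; on the thin merging shell the rapid variation of $\xi$ is compensated by the bounded radial thickness of the shell and the boundedness of $\zeta$ and $\crochet$ there. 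In all cases the resulting contribution is dominated by $\|r^{-1}\crochet^\eta\zeta u\|_{L^2(\MME_s)}$, which matches exactly the extra term allowed on the right-hand side.
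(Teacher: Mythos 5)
Your starting point is sound: on $\MME_s$ one has $r-\Time(s,r)\geq -1$, hence $\aleph'(r-t)\equiv 1$ and $\delsME_r\crochet = 1-\del_r\Time$, so with $W=\crochet^{2\delta}/(2\delta)$ the identity $\delsME_r W = (1-\del_r\Time)\crochet^{2\delta-1}$ holds, and the radial integration by parts
\begin{equation*}
\int_{\MME_s}(1-\del_r\Time)\,\crochet^{2\delta-1}u^2\,dx
+ \mathrm{bdry}_{\rhoH}
+ 2\int_{\MME_s}\frac{W u^2}{r}\,dx
= -2\int_{\MME_s} W\, u\,\delsME_r u\,dx
\end{equation*}
is correct, with all three left-hand terms non-negative. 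You also correctly identify the degeneracy $1-\del_r\Time\sim s^{-2}$ near $r=\rhoH(s)$ as the obstacle, and the factor $\delta^{-1}$ does come out of $W$ as you claim. But there are two genuine gaps.

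\textbf{Gap 1 (the shell cannot be absorbed by $\|r^{-1}\crochet^\eta u\|$).} On the transition shell $\MM_s$ one has $r\sim\rhoH(s)\sim s^2/2$ and $\crochet\in[2,3]$, so $r^{-1}\crochet^\eta\sim s^{-2}$ while $\crochet^{-1+\eta}\sim 1$. Thus $\|r^{-1}\crochet^\eta u\|_{L^2(\MM_s)}$ is \emph{smaller} than the target quantity $\|\crochet^{-1+\eta}u\|_{L^2(\MM_s)}$ by a factor $\sim s^{-2}$: it cannot ``absorb the unweighted $L^2$ contribution on the degenerate shell,'' and it certainly cannot absorb a surface trace. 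The reference to the Hardy inequality of Proposition~\ref{lem1-hardy} does not help either, since that proposition only controls the (already weaker) quantity $\|r^{-1}\crochet^\eta u\|$. The correct repair of your own split is different: perform the integration by parts over $\Mext_s$ alone, where $\del_r\Time\equiv 0$; the boundary term $W(\rhoE)\int_{S^2}u^2(\rhoE,\omega)\rhoE^2\,d\omega$ then appears on the \emph{favorable} side of the identity and is bounded by $\|\crochet^\eta\delsME u\|\,\|\crochet^{-1+\eta}u\|$ up to $\delta^{-1}$. A one-dimensional Poincar\'e estimate on the width-one shell $\rhoH\leq r\leq\rhoE$ with anchor at $r=\rhoE$ then controls $\|u\|_{L^2(\MM_s)}$ by $\|\delsME_r u\|_{L^2(\MM_s)}$ plus this trace. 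The extra term $\|r^{-1}\crochet^\eta u\|$ is not what saves the shell.

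\textbf{Gap 2 (the commutator $\delsME\zeta$ is not small).} For the second inequality you reduce to the first by setting $v=\zeta u$ and claim $|\delsME\zeta|\lesssim\zeta/r$. This fails badly on $\MM_s$. Writing $2\zeta\,\del_r\zeta = -\frac{2r\xi^2 s^2}{(s^2+r^2)^2} + \frac{2r^2\xi\,\chi'(r-\rhoH)}{s^2+r^2}$, the second term is of size $\sim 1$ on the part of the shell where $\chi'>0$, while $\zeta$ there is of size $\sim s^{-1}$; hence $|\del_r\zeta|\sim s$, whereas $\zeta/r\sim s^{-3}$. The discrepancy is a factor $\sim s^4$, and ``bounded radial thickness'' does not help because it is a pointwise, not integrated, mismatch: the $L^2$ norm of $\crochet^\eta(\delsME\zeta)u$ over the shell genuinely dominates $\|r^{-1}\crochet^\eta\zeta u\|_{L^2(\MM_s)}$. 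The reduction of \eqref{Poincare-trex-zeta} to \eqref{Poincare-trex-sans-zeta} via $v=\zeta u$ therefore does not go through. One must instead carry the weight $\zeta^2$ inside the integration by parts from the start, i.e.\ integrate $\delsME_r W$ against $\zeta^2 u^2$, and exploit the \emph{sign} of $\del_r(\zeta^2)$: the term $\frac{2r^2\xi\chi'}{s^2+r^2}\geq 0$ contributes favorably, and the negative piece $-\frac{2r\xi^2 s^2}{(s^2+r^2)^2}$ is pointwise bounded by $\frac{2\zeta^2}{r}$ and hence absorbed by the $\int W\zeta^2 u^2/r$ term already on the favorable side. In short, for a degenerate weight such as $\zeta$, the commutator route is structurally the wrong one.
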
 

%=======================================================================================

\section{Wave-Klein-Gordon equations in Euclidean-Hyperboloidal foliations}
\label{section-wkg}

\subsection{Pointwise estimates for wave equations}

We now investigate the decay of solutions to wave equations and state $L^\infty$--$L^\infty$ estimates.  Given some data $f, u_0, u_1$ with sufficient regularity and decay (so that Kirchhoff's formula below makes sense), we consider the solution $u=u(t,x)$ to the initial value problem
\begin{equation} \label{eq7-28-12-2020}
\Box u = f,   \quad
u(1,x) = u_0(x),  \quad
\del_t u(1,x) = u_1(x), 
\qquad x \in \RR^3. 
\end{equation}
In order to separate the contributions from the initial data and from the source, we find it convenient to use the short-hand notation $u = \Box^{-1}[u_0,u_1,f]$, 
$\Box^{-1}_\init[u_0,u_1] := \Box^{-1}[u_0,u_1,0]$, 
and $\Box^{-1}_\source[f] := \Box^{-1}[0,0,f]$. 
While referring to \cite{PLF-YM-main} for the analysis of initial data, let us here consider the effect of a source, namely the operator $\Box^{-1}_{\source}$.  An earlier result was established in \cite{PLF-YM-two} where the source was supported in the interior of a light cone --- a restriction we overcome in the statement below. Throughout, $\Lambda_{t,x} := \big\{ (\tau,y) \big/ \, t-\tau = |x-y|, \, 1\leq  \tau\leq t \big\}$ is the truncated cone from a point $(t,x)$.

\begin{proposition}[Wave equation. Contribution from the source]
\label{Linfini wave}
Consider the wave operator  $\Box^{-1}_\source$ acting on a source function $f$ satisfying the decay conditions (with $\alpha_1, \alpha_2, \alpha_3$) 
\begin{equation} \label{eq1-27-12-2020}
|f(\tau,y)| \lesssim C_1 \, 
\tau^{\alpha_1}(\tau + |y| )^{\alpha_2} \big( 1 + | \tau - |y| | \big)^{\alpha_3}, 
\qquad 
(\tau,y) \in \Lambda_{t,x}. 
\end{equation}
Then the solution to the wave equation enjoys the following properties.

\vskip.15cm

\begin{subequations}

\noindent{\it  -- Case 0 (interior).} When the support of $f$ is contained in $\{r\leq t-1\}$, and $\alpha_1 = -2-\upsilon, \alpha_2 = 0, \alpha_3 = -1+\mu$ for $0<|\upsilon|\leq 1/2$, $0<\nu\leq 1/2$, one has 
\begin{equation}\label{eq1-29-11-2022-M}
|\Box^{-1}_\source[f](t,x)|
\lesssim
C_1
\begin{cases}
\frac{1}{\nu\mu}(t-r)^{\mu-\upsilon} t^{-1}, \qquad & 0< \nu\leq 1/2,
\\
\frac{1}{|\nu|\mu}(t-r)^{\mu} t^{-1 -\upsilon}, &-1/2\leq \nu < 0.
\end{cases}
\end{equation}

\noindent  {\it  -- Case 1 (typical).} When $\alpha_1 = -1+\upsilon$ and $\alpha_2 = -1-\nu$ and $\alpha_3=-1+\mu$ 
for some  
$\upsilon + \mu < \nu$ and $0<\mu,\nu,\upsilon\leq 1/2$,
one has 
\begin{equation} \label{eq5-24-12-2020}
|\Box^{-1}_\source[f](t,x)|
\lesssim
C_1 \, \big(\upsilon^{-1} + \mu^{-1} + |\mu-\nu|^{-1}\big) \, |\upsilon + \mu-\nu|^{-1} (t+r)^{-1}.
\end{equation} 

\vskip.15cm

\noindent{\it  -- Case 2 (sub-critical).} When $\alpha_1=0$ and $\alpha_2 = -2-\nu$ and $\alpha_3 = -1+\mu$ 
for some $0 < \nu, \mu \leq 1/2$, one has 
\begin{equation} \label{Linfini wave ineq}
|\Box^{-1}_\source[f](t,x)|
\lesssim C_1 \, 
\begin{cases} 
\mu^{-1} |\mu-\nu|^{-1} (t+r)^{-1}t^{\mu-\nu},\qquad  
&\mu>\nu,
\\
\mu^{-1} (t+r)^{-1} \ln (t+1),\quad  & \mu=\nu, 
\\
\mu^{-1} |\mu-\nu|^{-1} (t+r)^{-1}, & \mu<\nu.
\end{cases} 
\end{equation} 

\vskip.15cm

\noindent{\it  -- Case 3 (critical).} When $\alpha_1 = 0$ and $\alpha_2 = -2$ and $\alpha_3= -1-\mu$ for some 
$\mu\in (0,1/2)$, one has  
\begin{equation} \label{eq1-10-01-2021}
|\Box^{-1}_\source[f](t,x)| \lesssim C_1\,\mu^{-1} (t+r)^{-1}\Big(1 + \crochet^{-\mu}\ln\big(t / \crochet \big)\Big).
\end{equation} 

\vskip.15cm

\noindent{\it   -- Case 4 (super-critical).} When $\alpha_1 = 0$ and $\alpha_2 = -2+\nu$ and $\alpha_3 = -1-\mu$ for some 
$0<\nu< \mu < 1/2$, one has  
\begin{equation} \label{eq1-10-01-2021-case4}
|\Box^{-1}_\source[f](t,x)| \lesssim
C_1 \, \big(|\mu-\nu|^{-1} + \mu^{-1}\nu^{-1}\crochet^{-\mu}t^{\nu}\big) (t+r)^{-1}.
\end{equation} 
(In the last two cases, observe that $\crochet \equiv 1$ when $r\leq t-1$). 
\end{subequations}
\end{proposition}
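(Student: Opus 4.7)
The plan is to rely on Kirchhoff's representation formula for the 3D wave equation, reduce the resulting integral to a two-dimensional one in appropriately chosen null coordinates, and then carry out a case-by-case analysis of the five assumed decay rates for the source $f$.

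\textbf{Step 1: Kirchhoff representation.} Since $\Box^{-1}_{\source}$ corresponds to vanishing initial data, Kirchhoff's formula gives
\begin{equation*}
\Box^{-1}_{\source}[f](t,x) \,=\, \frac{1}{4\pi}\int_1^t \frac{1}{t-\tau} \int_{|y-x|=t-\tau} f(\tau, y)\, dS(y)\, d\tau.
\end{equation*}
I would then parameterize the sphere $\{|y-x|=t-\tau\}$ by $\lambda := |y|$ and an azimuthal angle around the axis through $x$ and the origin. A direct computation (using $|y|^2 = r^2 + 2(t-\tau)r\cos\theta + (t-\tau)^2$ with $r=|x|$) gives $dS = (t-\tau)\lambda \, r^{-1} d\lambda\, d\varphi$, and the admissible range of $\lambda$ is $|r - (t-\tau)| \leq \lambda \leq r + (t-\tau)$. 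Inserting the pointwise bound \eqref{eq1-27-12-2020} leads to
\begin{equation*}
|\Box^{-1}_{\source}[f](t,x)| \,\lesssim\, \frac{C_1}{r}\iint_{D} \lambda \, \tau^{\alpha_1}(\tau+\lambda)^{\alpha_2}\bigl(1+|\tau-\lambda|\bigr)^{\alpha_3}\, d\tau\, d\lambda,
\end{equation*}
where $D$ is the two-dimensional region described just above.

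\textbf{Step 2: Null coordinates.} I would pass to the null coordinates $p = \tau+\lambda$ and $q = \lambda - \tau$, with Jacobian $1/2$, so that the weight $\tau+\lambda = p$ and the distance to the light cone $|\tau-\lambda| = |q|$ both become coordinate functions. The region $D$ transforms into a quadrilateral bounded by $p \in [|t-r|, t+r]$, $q \in [r-t, p]$, and $2 \leq p+q \leq 2t$. The integral becomes
\begin{equation*}
|\Box^{-1}_{\source}[f](t,x)| \,\lesssim\, \frac{C_1}{r}\iint_{D'} (p-q)\Bigl(\frac{p+q}{2}\Bigr)^{\alpha_1} p^{\alpha_2}(1+|q|)^{\alpha_3}\, dp\, dq.
\end{equation*}
The virtue of this change of variables is that the $p$- and $q$-integrations nearly decouple: the $p$-integral controls the spatial decay at rate $(t+r)^{-1}$, while the $q$-integral controls the behavior near/far from the light cone.

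\textbf{Step 3: Case-by-case analysis.} In each of Cases 1--4, I would first perform the $p$-integration over $[|t-r|,t+r]$ of $p^{\alpha_2+1}$ (using $p-q \sim p$ on most of the range), producing a factor of size $(t+r)^{\alpha_2+2}$ or a logarithm at a critical value, and then integrate $(1+|q|)^{\alpha_3}$ with respect to $q$. When $\alpha_3 < -1$ this $q$-integral is bounded; when $\alpha_3 > -1$ it gives a contribution $\crochet^{\alpha_3+1}$; at the critical value $\alpha_3 = -1$ it produces $\ln(t/\crochet)$. The residual $\tau^{\alpha_1}$ factor is absorbed either by a mean-value argument or by moving it inside the $p$-integral (using $\tau \sim p$ for $|q| \ll p$). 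For Case 0 (interior) the support assumption $r \leq t - 1$ restricts the integration to $q \leq -1$, and one must split the $\tau$-integral according to the sign of $\upsilon$ to determine whether the mass concentrates near $\tau = 1$ or near $\tau = t$.

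\textbf{Main obstacle.} The delicate point is the critical Case 3 ($\alpha_2 = -2$, $\alpha_3 = -1-\mu$): the $p$-integration of $p^{\alpha_2+1} = p^{-1}$ produces a logarithm $\ln((t+r)/|t-r|)$, which must be carefully paired with the factor $\crochet^{-\mu}$ coming from the $q$-integration near the cone in order to reconstruct the form $1 + \crochet^{-\mu}\ln(t/\crochet)$ stated in \eqref{eq1-10-01-2021}. Cases 2 and 4 require similar care when $\mu = \nu$ or $\mu > \nu$, as a logarithm may replace an algebraic gain. Throughout, the sharp tracking of the small-parameter constants $\mu^{-1},\nu^{-1},|\mu-\nu|^{-1}$, which reflect the proximity to these borderline exponents, is what makes the estimates uniform and usable in the later bootstrap.
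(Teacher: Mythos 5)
Your framework—Kirchhoff's formula, reduction to a planar integral in the $(\tau,\lambda)=(\tau,|y|)$ variables, passage to null coordinates $p=\tau+\lambda$, $q=\lambda-\tau$—is indeed the standard route and is the one used in the authors' companion work (the paper refers the proof to \cite{PLF-YM-main}; the earlier, interior-support version appears in \cite{PLF-YM-two}). The surface-area computation giving $dS=(t-\tau)\lambda r^{-1}\,d\lambda\,d\varphi$ and hence the $\frac{1}{r}\iint \lambda\,|f|\,d\lambda\,d\tau$ bound is correct, and your observation that the Case~0 support hypothesis localizes the integration to $q\le -1$ is exactly the right point.

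Two things need attention. First, the integrand after the change of variables is written inconsistently with your declared convention: with $p=\tau+\lambda$, $q=\lambda-\tau$ one has $\lambda=\tfrac{p+q}{2}$ and $\tau=\tfrac{p-q}{2}$, so the Jacobian $\lambda$ should become $\tfrac{p+q}{2}$ and $\tau^{\alpha_1}$ should become $\bigl(\tfrac{p-q}{2}\bigr)^{\alpha_1}$, not the reversed assignment you wrote; similarly the constraint $1\le\tau\le t$ reads $2\le p-q\le 2t$, not $p+q$, and the upper limit of the $q$-range is $p-2$, not $p$. These are fixable but would propagate into each case if carried through literally. Second, and more substantially, the claim that the $p$- and $q$-integrations "nearly decouple" hides the real work. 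The Jacobian factor $\lambda=\tfrac{p+q}{2}$ and (in Cases~0,~1) the factor $\tau^{\alpha_1}=\bigl(\tfrac{p-q}{2}\bigr)^{\alpha_1}$ both couple $p$ and $q$, and one cannot simply integrate $p^{\alpha_2+1}$ over $[|t-r|,t+r]$ against a $q$-only weight: crude bounds like $\lambda\le p$ do not produce the factor $r$ needed to cancel the $1/r$ prefactor in all regimes of $r/t$, and obtaining the sharp forms with explicit $\mu^{-1}$, $\nu^{-1}$, $|\mu-\nu|^{-1}$, $|\upsilon+\mu-\nu|^{-1}$ constants requires splitting the region (near/far light cone, $r\gtrless t$) and estimating each piece separately. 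Your sketch correctly identifies the logarithmic criticalities (Case~3, and the $\mu=\nu$ sub-case of Case~2) and the Case~0 sign split in $\upsilon$, but the bulk of the proof—the actual region-by-region integrations in each of the five cases—remains to be carried out.
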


%-----------------------------------------------------------------------------------------------------------------------------------------

\subsection{Pointwise estimates for Klein-Gordon equations}

The method below was introduced in Klainerman~\cite{Klainerman85}  
and the derivation below was proposed in LeFloch and Ma~\cite{PLF-YM-one}. 
Here, we present yet another version of the argument which 
 takes into account the contribution from the boundary (namely the light cone). 
 
We focus here on the hyperboloidal domain and rely on the decomposition 
\begin{equation}
g^{\alpha\beta}\del_{\alpha}\del_{\beta} = s^{-3/2}(t/s)^2\gu^{00}\Lcal^2(s^{3/2}\phi) + \RR_g[\phi],
\end{equation}
where $g^{\alpha\beta} = g_{\Mink}^{\alpha\beta} + H^{\alpha\beta}$ and 
the following field (which is nothing but the unit normal to the hyperboloids for the Minkowski metric) 
\be
\Lcal := (s/t)\del_t + (x^a/s)\delus_a = (t/s)\del_t + (x^a/s)\del_a,
\ee
while 
\be
\aligned
\RR_g[\phi] =& -(3/4)s^{-2}\phi - 3s^{-1} (x^a/s)\delus_a\phi - (x^ax^b/s^2)\delus_a\delus_b\phi - \sum_a\delus_a\delus_a\phi
\\
&-(t/s)^2\Hu^{00}\Big((3/4)s^{-2}\phi + \big(3+(r/t)^2\big)t^{-1}\del_t\phi + 3s^{-1}(x^a/s)\delus_a\phi\Big)
\\
&-(t/s)^2\Hu^{00}\Big((2x^a/t)\del_t\delus_a\phi + (x^ax^b/s^2)\delus_a\delus_b\phi\Big)
\\
& \quad +\Hu^{a0}\delus_a\del_t\phi + \Hu^{0a}\del_t\delus_a\phi + \Hu^{ab}\delus_a\delus_b\phi 
+ H^{\alpha\beta}\del_{\alpha}\big(\PsiH_{\beta}^{\beta'}\big)\delu_{\beta'}\phi.
\endaligned
\ee
For a Klein-Gordon equation 
$
\Boxt_g \phi - c^2\phi = f,
$
thanks to $\gu^{00} = -(s/t)^2 + \Hu^{00}$, under the assumption
$
(s/t)^2|\Hu^{00}|\leq 1/3,
$
the above identity leads us to
\begin{equation}
\Lcal^2(s^{3/2}\phi) + \frac{c^2}{1-(t/s)^2\Hu^{00}} (s^{3/2}\phi) = \frac{s^{3/2}(f + \RR_g)}{1-(t/s)^2\Hu^{00}}.
\end{equation}
For any function $\phi$ defined in $\MH_{[s_0,s_1]}$ and at each point $(t,x)\in \MH_{[s_0,s_1]}$, we use the notation $\Phi_{t,x}(\lambda) := \lambda^{3/2}\phi(\lambda t/s,\lambda x/s)$. Since $\Lcal(s^{3/2}\phi)|_{(\lambda t/s,\lambda x/s)} = \Phi_{t,x}'(\lambda)$, we find 
\begin{equation}\label{eq2-22-11-2022}
\Phi_{t,x}''(\lambda) + \frac{c^2}{1-\Hb_{t,x}}\Phi_{t,x} = \frac{\lambda^{3/2}}{1-\Hb_{t,x}}(f+\RR_g)|_{(\lambda t/s,\lambda x/s)},
\end{equation}
where  $\Hb_{t,x} = (t/s)^2\Hu^{00}|_{(\lambda t/s,\lambda x/s) }$. 

We observe that $(s/t)$ is constant along a given path $\gamma_{t,x}$, and we multiply the above equation by $(s/t)^{\eta}$ with $\eta\in\RR$. By an elementary ODE lemma, we then arrive at the following result. 

\begin{proposition}[Sharp decay of Klein-Gordon solutions in the hyperboloidal domain]
\label{prop1-23-11-2022-M}
Suppose that for all $(t,x)\in \MH_{[s_0,s_1]}$ and for all $\lambda_0\leq \lambda\leq s_1$, one has 
\begin{equation}\label{eq4-23-11-2022-M}
|\Hb_{t,x}|\leq 1/3,\qquad \int_{\lambda_0}^{s_1}|\Hb_{t,x}'(\lambda)|d\lambda\lesssim 1. 
\end{equation}
Then for any $\eta\in \RR$, any solution $\phi$ to the Klein-Gordon equation $\Boxt_g \phi - c^2\phi = f$ satisfies 
\begin{equation}\label{eq5-23-11-2022-M}
\aligned
& (s/t)^{\eta}s^{3/2} \, \Big(|\phi(t,x)|+ (s/t) \, |\del\phi(t,x)| \Big)
\\
& \lesssim (s/t)^{\eta}s^{1/2}|\phi|_1(t,x) 
+ \sup_{\MH_{s_0}\cup \Lscr_{[s_0,s]}}(s/t)^{\eta}\Big( t^{1/4}(t^{1/2}|\phi| + |\del\phi| + t \, |\delsN\phi|)\Big)
\\
& \quad + (s/t)^{\eta}\int_{\lambda_0}^s\lambda^{3/2}
\big( |f| + |\RR_g[\phi] \big)\big|_{(\lambda t/s,\lambda x/s)} d\lambda,
\endaligned
\end{equation}
in which 
\be
\lambda_0 = 
\begin{cases}
s_0,\quad &0\leq r/t\leq \frac{s_0^2-1}{s_0^2+1},
\\
\sqrt{\frac{t+r}{t-r}},&  \frac{s_0^2-1}{s_0^2+1}\leq r/t<1
\end{cases}
\ee
and 
\begin{equation}\label{eq11-23-11-2022-M}
\aligned
|\RR_g[\phi]|_{p,k}
& \lesssim s^{-2}|\phi|_{p+2} + (t/s)^2\sum_{p_1+p_2=p}|\Hu^{00}|_{p_1} \big(
s^{-2}|\phi|_{p_2+2} + t^{-1}|\del\phi|_{p_2+1} \big)
\\
& \quad + \sum_{p_1+p_2=p}|H|_{p_1} \big(
t^{-1}|\del\phi|_{p_2+1} + t^{-2}|\phi|_{p_2+2} \big).
\endaligned
\end{equation}
\end{proposition}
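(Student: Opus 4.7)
The plan is to run an ODE energy argument along rays on equation \eqref{eq2-22-11-2022}. Two geometric observations are central: the ratio $r/t$, hence $s/t$, is preserved along the ray $\gamma_{t,x}(\lambda) := (\lambda t/s, \lambda x/s)$, and each point of $\gamma_{t,x}$ at parameter $\lambda$ has hyperbolic time exactly $\lambda$ (since $(\lambda t/s)^2 - (\lambda x/s)^2 = \lambda^2$). First I multiply \eqref{eq2-22-11-2022} by $(s/t)^\eta$ to obtain the same ODE satisfied by the rescaled unknown $\widetilde\Phi(\lambda) := (s/t)^\eta\Phi_{t,x}(\lambda)$ with source $\widetilde F(\lambda) := (s/t)^\eta \lambda^{3/2}(f + \RR_g[\phi])/(1-\Hb_{t,x}(\lambda))$; since $(s/t)^\eta$ is constant along $\gamma_{t,x}$, no extra error is produced.

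Next I introduce the ODE energy $E(\lambda) := |\widetilde\Phi'(\lambda)|^2 + \omega(\lambda)|\widetilde\Phi(\lambda)|^2$ with $\omega(\lambda) := c^2/(1-\Hb_{t,x}(\lambda))$. Differentiating and substituting the ODE gives $E'(\lambda) = 2\widetilde\Phi'(\lambda)\widetilde F(\lambda) + \omega'(\lambda)|\widetilde\Phi(\lambda)|^2$. Under the hypothesis $|\Hb_{t,x}|\leq 1/3$ one has $\omega \asymp c^2$, so that $E$ controls $|\widetilde\Phi'|$ and $c|\widetilde\Phi|$ simultaneously, and $|\omega'|/\omega \lesssim |\Hb_{t,x}'|$. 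Cauchy--Schwarz yields $|E'|\leq 2\sqrt{E}\,|\widetilde F| + C|\Hb_{t,x}'|\,E$, hence
\[
\frac{d}{d\lambda}\sqrt{E(\lambda)} \;\leq\; |\widetilde F(\lambda)| + \tfrac{C}{2}\,|\Hb_{t,x}'(\lambda)|\sqrt{E(\lambda)}.
\]
Integration and Gronwall's lemma, combined with the second assumption in \eqref{eq4-23-11-2022-M}, produce
\[
\sqrt{E(\lambda)} \;\lesssim\; \sqrt{E(\lambda_0)} + \int_{\lambda_0}^{\lambda} |\widetilde F(\mu)|\,d\mu,
\]
which after using $1-\Hb_{t,x}\geq 2/3$ accounts for the source integral on the right-hand side of \eqref{eq5-23-11-2022-M}.

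The starting time $\lambda_0$ is the hyperbolic time at which the backward ray $\gamma_{t,x}$ enters the domain $\MH_{[s_0,s_1]}$: either the ray crosses the initial hyperboloid $\MH_{s_0}$, which happens exactly when $r/t \leq (s_0^2-1)/(s_0^2+1)$ and yields $\lambda_0 = s_0$, or it exits through the lateral boundary $\{t-r=1\} = \Lscr$, where the exit relation $\lambda(t-r)/s=1$ combined with $s^2 = (t-r)(t+r)$ gives $\lambda_0 = s/(t-r) = \sqrt{(t+r)/(t-r)}$. Evaluating the estimate at $\lambda = s$, the term $(s/t)^\eta s^{3/2}|\phi(t,x)|$ is recovered from the $\omega|\widetilde\Phi|^2$ part of $E(s)$, while $(s/t)^\eta s^{3/2}(s/t)|\del\phi(t,x)|$ is obtained from $|\widetilde\Phi'(s)|$ through the frame identity $(s/t)\del_t = \Lcal - (x^a/s)\delus_a$; the harmless remainder $(3/2)s^{1/2}\phi$ coming from $\Phi_{t,x}'(s) = (3/2)s^{1/2}\phi + s^{3/2}\Lcal\phi$ is absorbed into the local first-order term $(s/t)^\eta s^{1/2}|\phi|_1(t,x)$.

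The boundary contribution $\sqrt{E(\lambda_0)}$ is handled separately on $\MH_{s_0}$ and on $\Lscr$. The key point on $\Lscr$, which I anticipate to be the main obstacle, is the decomposition of $\Lcal$ at the boundary point into a tangential piece along the cone plus a small transverse piece: on $\Lscr$ one has $t/s - r/s = 1/s \asymp t^{-1/2}$ and $r/s \asymp t^{1/2}$, so $\Lcal = (r/s)(\del_t + \del_r) + (1/s)\del_t \asymp \sqrt{t}\,\delsN + t^{-1/2}\del_t$. Together with $\lambda_0 \asymp \sqrt{t}$ and $s/t \asymp t^{-1/2}$ on $\Lscr$, this exactly produces the combination $t^{1/4}(t^{1/2}|\phi| + |\del\phi| + t\,|\delsN\phi|)$ appearing in the supremum of \eqref{eq5-23-11-2022-M}; the null-tangential $\delsN$ absorbs the extra power of $t$ that a naive transverse-derivative bound would lose. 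The auxiliary estimate \eqref{eq11-23-11-2022-M} follows by applying an admissible operator $Z$ with $\ord(Z) = p$ and $\rank(Z) = k$ term by term to the explicit expression for $\RR_g[\phi]$ stated just before the proposition, expanding by Leibniz, and invoking Lemma~\ref{prop-newpropo} and the frame-coefficient bounds: this yields the hierarchical control in which $\phi$ carries at most $p+2$ derivatives (from the $\delus\delus\phi$ summands) and $H$ at most $p$ derivatives, each product weighted by the geometric factors $s^{-2}$, $t^{-1}$, or $(t/s)^2$ visible in $\RR_g$.
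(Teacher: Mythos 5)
Your proof follows essentially the same approach as the paper's: multiply the ray ODE \eqref{eq2-22-11-2022} by $\Phi'_{t,x,\eta}$, form the ODE energy, control the variable-coefficient term by $|\Hb_{t,x}'|$ using $|\Hb_{t,x}|\leq 1/3$, apply Gronwall on $[\lambda_0,s]$ together with hypothesis \eqref{eq4-23-11-2022-M}, then identify $\lambda_0$ via the exit point of $\gamma_{t,x}$ from the hyperboloidal slab and convert $\Phi'$ back to $\del\phi$ and $\delsN\phi$ using the frame identities for $\Lcal$. Your explicit decomposition of $\Lcal$ on $\Lscr$ into $t^{1/2}\delsN + t^{-1/2}\del_t$ and the bookkeeping with $\lambda_0 \asymp t^{1/2}$, $s/t\asymp t^{-1/2}$ at the exit point correctly reproduces the boundary factor $t^{1/4}(t^{1/2}|\phi|+|\del\phi|+t\,|\delsN\phi|)$, matching the paper's argument.
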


\begin{proof} Observe that the integral curve of $\Lcal$ is $\gamma_{t,x} = \{(\lambda t/s,\lambda x/s)\}$. When
$0\leq r/t\leq \frac{s_0^2-1}{s_0^2+1}$, the segment
$
\big\{( \lambda t/s,\lambda x/s)|s_0 \leq \lambda\leq s \big \}
$
is contained in $\MH_{[s_0,s]}$ and $\gamma_{t,x}$ meets $\del \MH_{[s_0,s]}$ at $(s_0/s)(t,x)\in\MH_{s_0}$. When $\frac{s_0^2-1}{s_0^2+1}\leq r/t<1$, the segment
$$
\Big\{(\lambda t/s,\lambda x/s)| \sqrt{\frac{t+r}{t-r}}\leq \lambda\leq s \Big\}
$$
is contained in $\MH_{[s_0,s_1]}$ and meets $\del\MH_{[s_0,s_1]}$ at the point $\frac{1}{t-r}(t,r)\in \MH_{[s_0,s_1]}\cap \MME_{[s_0,s_1]}$. Here we emphasize  that 
$
\sqrt{\frac{t+r}{t-r}}\simeq t/s.
$
We apply the multiplier $(s/t)^{2\eta}\Phi'_{t,x}$ to \eqref{eq2-22-11-2022} and obtain
$$
\frac{d}{d\lambda}\Big(|\Phi_{t,x,\eta}'|^2 + \frac{|\Phi_{t,x,\eta}|^2}{1-\Hb_{t,x}}\Big) - \frac{\Hb_{t,x}'}{(1-\Hb_{t,x})^2} |\Phi_{t,x,\eta}|^2 = \Phi_{t,x,\eta}'\frac{(s/t)^{\eta}(f + \RR_g)|_{(\lambda t/s,\lambda x/s)}}{1-\Hb_{t,x}}
$$
with $\Phi_{t,x,\eta} := (s/t)^{\eta}\Phi_{t,x}$. Then by Gronwall's inequality  applied on the interval $[\lambda_0,s]$, we find
$$
\aligned
|\Phi_{t,x,\eta}'(s)| + |\Phi_{t,x,\eta}(s)| & \lesssim  \big(|\Phi_{t,x,\eta}'(\lambda_0)| + |\Phi_{t,x,\eta}(\lambda_0)| \big)e^{\int_{\lambda_0}^s|\Hb_{t,x}(\lambda)|d\lambda}
\\
&\qquad + (s/t)^{\eta}\int_{\lambda_0}^s|f(\lambda t/s,\lambda x/s)|e^{\int_{\lambda}^s|\Hb_{t,x}(\tau)|d\tau}d\lambda.
\endaligned
$$
On the other hand, when $\lambda_0\geq s_0\geq 2$ we have 
$$
|\Phi_{t,x}'(\lambda)| + |\Phi_{t,x}(\lambda)|\simeq \lambda^{3/2}\big(|\phi(\lambda t/s,\lambda x/s)| + |\Lcal\phi(\lambda t/s,\lambda x/s)|\big)
$$
in which $\Lcal = (t/s)\del_t + (x^a/s)\del_a = (t/s)\del_t + (x^a/s)\del_a$. We also observe that
$$
\Lcal = (s/t)\del_t + s^{-1}(x^a/t)L_a,\quad \del_a = t^{-1}L_a - (x^a/t)\del_t,
$$
thus
$$
\aligned
& 
(s/t)^{\eta}\lambda^{3/2}\big(|\phi(\lambda t/s,\lambda x/s)| + (s/t) \, |\del \phi(\lambda t/s,\lambda x/s)|\big) 
\\
& \lesssim  (s/t)^{\eta}\lambda^{-1} |\phi|_1(\lambda t/s,\lambda x/s)
   +|\Phi_{t,x,\eta}'(\lambda)| + |\Phi_{t,x,\eta}(\lambda)|,
\endaligned
$$
and
$$
|\Phi_{t,x,\eta}'(\lambda)| + |\Phi_{t,x,\eta}(\lambda)|
\lesssim 
(s/t)^{\eta}\lambda^{3/2} \, \big(
|\phi|_1+(s/t) \, |\del\phi| + (t/s)|\delsN\phi|\big) \big|_{(\lambda t/s,\lambda x/s)}.
$$
This gives the desired result.
\end{proof}

%--------------------------------------------- 

We continue to rely on the linear structure of the Klein-Gordon equation. 
We use here the notation $\MMEnear_{[s_0,s_1]} := \MME_{[s_0,s_1]} \cap \big\{ t-1\leq r \leq 2t \big\}$, while the complement is denoted by $\Mfar_{[s_0,s_1]}$.  In the near light cone region we take advantage of the Klein-Gordon structure and control the mass term by the wave operator and a source term, while Sobolev decay is available in the far region. 

\begin{proposition}[Pointwise decay of Klein-Gordon fields]
\label{lem 1 d-KG-e}
Given any exponent $\eta \in (0,1)$, any  solution $v$ to $- \Box v + c^2 \, v = f$ defined in $\MME_{[s_0,s_1]}$ satisfies  
$$
c^2 \, |v|_{p,k} \lesssim 
\begin{cases}
r^{-2} \crochet^{1-\eta} \, \Fenergy_{\eta,c}^{\ME,p+4,k+4}(s,v) + |f|_{p,k}
&  \text{ in }\Mnear_{[s_0,s_1]},
\\
r^{-1-\eta} \, \Fenergy_{\eta,c}^{\ME,p+2,k+2}(s,v)\quad 
& \text{ in }\Mfar_{[s_0,s_1]}.
\end{cases}
$$
\end{proposition}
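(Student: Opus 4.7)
The plan is to split $\MME_{[s_0,s_1]}$ into the near and far regions and to apply the weighted Sobolev inequality of Proposition~\ref{pro204-11-2}, invoking the Klein-Gordon structure only where the Sobolev estimate alone is insufficient. In the far region $\Mfar \subset \MME\cap\{r > 2t\}$ one has $\crochet \simeq r$, so for any admissible operator $Z$ with $\ord(Z)\leq p$ and $\rank(Z)\leq k$ the inequality
$$
r\,\crochet^{\eta}\,|Zv(t,x)|
\lesssim
\sum_{|I|+|J|\leq 2}\|\crochet^{\eta}\,\delsME^{I}\Omega^{J}Zv\|_{L^{2}(\MME_s)}
$$
combined with the definition of $\Fenergy_{\eta,c}(s,v)$, after expressing $\delsME$ and $\Omega$ in terms of admissible operators and handling the resulting commutators via Lemma~\ref{prop-newpropo}, should yield $|Zv| \lesssim r^{-1-\eta}\,\Fenergy_{\eta,c}^{\ME,p+2,k+2}(s,v)$. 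Multiplying by the constant $c^{2}$ (and using $\Fenergy_{\eta}\leq \Fenergy_{\eta,c}$) gives the far-region estimate.

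For the near region $\Mnear = \MME\cap\{t-1\leq r\leq 2t\}$, where $\crochet$ can be as small as $1$, a direct Sobolev bound on $v$ only delivers the pointwise decay $r^{-1}\crochet^{-\eta}$, which is strictly weaker than the announced $r^{-2}\crochet^{1-\eta}$. I would therefore use the Klein-Gordon equation to trade the mass term for the flat wave operator. Since the admissible operators $\del^{I}L^{J}\Omega^{K}$ all commute with the Minkowski $\Box$, one has for each such $Z$
$$
c^{2}\,Zv \,=\, \Box Zv + Zf, \qquad\text{hence}\qquad c^{2}\,|v|_{p,k} \,\lesssim\, |\Box v|_{p,k} + |f|_{p,k}.
$$
The contribution of $f$ is already displayed in the statement. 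To bound $|\Box v|_{p,k}$ pointwise I would apply the weighted Sobolev inequality of Proposition~\ref{pro204-11-2} to the second-order derivatives of $v$ appearing in $\Box Zv$: two derivatives carried by the Sobolev embedding, two more by $\Box$, and $p$ by $Z$ amount to at most $p+4$ derivatives of $v$, of rank at most $k+4$. The weighted Hardy inequality of Proposition~\ref{lem1-hardy} then converts the resulting weighted $L^{2}$-norms of undifferentiated terms into $\Fenergy_{\eta,c}^{\ME,p+4,k+4}(s,v)$.

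The main obstacle is the recovery of the improvement factor $\crochet/r$ in the near region, since the naive Sobolev bound exactly misses it. This gain should arise from a two-step refinement of the Sobolev embedding along each slice near the light cone: a Sobolev inequality on the sphere of radius $r$ extracts a factor $r^{-1}$ at the cost of two rotation vector fields, while the weighted Poincar\'e-type inequality of Proposition~\ref{propo-Poincare-ext} trades one power of $\crochet$ against one tangential derivative. Combined, these two refinements replace the factor $r^{-1}\crochet^{-\eta}$ by $r^{-2}\crochet^{1-\eta}$ at the price of two additional derivatives, which accounts precisely for the jump from $p+2$ to $p+4$ when passing from $\Mfar$ to $\Mnear$. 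Controlling the commutators that appear when the operators $\delsME$, $\Omega$ are pushed past $Z$ and $\Box$ is then routine via Lemma~\ref{prop-newpropo}.
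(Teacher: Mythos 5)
The paper does not spell out the proof of this proposition in the excerpt (it appears as part of the technical toolkit with only a one-sentence hint), so I cannot compare against the paper's detailed argument. What I can say is that your high-level split (Sobolev in the far region, Klein-Gordon structure $c^2v=\Box v+f$ in the near region) matches exactly the paper's stated strategy, and your treatment of the far region is correct in substance: in $\Mfar$ one has $\crochet\simeq r$, the weighted Sobolev inequality \eqref{ineq 2 sobolev} applied to $Zv$ and the $c^2u^2$ term in the energy produce $c\,|v|_{p,k}\lesssim r^{-1}\crochet^{-\eta}\Fenergy_{\eta,c}^{\ME,p+2,k+2}$, and absorbing one factor of $c$ into the implicit constant gives $r^{-1-\eta}$.

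The near region, however, contains a genuine gap. You correctly observe that the direct route --- apply the weighted Sobolev inequality to $\Box Zv$ --- only returns $|\Box v|_{p,k}\lesssim r^{-1}\crochet^{-\eta}\Fenergy_{\eta,c}^{\ME,p+4,k+4}$, missing the claimed $r^{-2}\crochet^{1-\eta}$ by a factor of $\crochet/r$. But the two-step ``refinement'' you then propose does not close that gap. The Sobolev-on-the-sphere step is not an additional gain: the factor $r^{-1}$ it produces is precisely the one already built into the weighted Sobolev inequality of Proposition~\ref{pro204-11-2}, so invoking it separately either double-counts the two rotations or produces no improvement. And the Poincar\'e inequality of Proposition~\ref{propo-Poincare-ext} is an $L^2\to L^2$ statement (it trades $\crochet^{-1}$ for a tangential $\delsME$ derivative inside an $L^2$ norm); it cannot be applied to upgrade a pointwise bound in the way your sketch requires, and in particular it does not convert $\crochet^{-\eta}$ into $\crochet^{1-\eta}$ in an $L^\infty$ estimate. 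Finally, you say these two steps cost ``two additional derivatives'' and identify that with the jump from $p+2$ to $p+4$, but that jump is already consumed by the two derivatives in $\Box$; your mechanism would require a further increase that the statement does not allow.

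The missing idea is that the extra factor $\crochet/r$ must be extracted \emph{structurally from the wave operator near the light cone}, not from the Sobolev/Poincar\'e machinery. Concretely, once $\Box Zv$ is written in the semi-null frame (or via the conformal-type identity that expresses $(t^2-r^2)\Box$ in terms of $L_a$, $\Omega_{ab}$ and $S$), the ``bad'' second-order derivatives of $Zv$ appear multiplied by coefficients of size $(t-r)/r\simeq\crochet/r$, while the remaining pieces involve only tangential/good derivatives $\delsN$ or rotations $\Omega$ divided by $r$, which the improved inequalities \eqref{eq 1 lem 2 d-e-I} and \eqref{eq 1 lem 2 d-e-I-facile} control with the extra decay $r^{-1-\eta}$ or $r^{-1}\crochet^{1-\eta}$, respectively. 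It is that frame decomposition of $\Box$, combined with the Hardy-improved Sobolev bound $|v|_{p+2,k+2}\lesssim r^{-1}\crochet^{1-\eta}\Fenergy_\eta^{\ME,p+4,k+4}$ for undifferentiated terms, that supplies the factor $\crochet/r$; without it the near-region estimate does not follow. Your write-up never introduces this decomposition, so as stated the proposal does not establish the near-region claim.
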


%=======================================================================================

\section{Nonlinear stability of self-gravitating massive fields} 
\label{section-strategy}

\subsection{Nonlinear stability statement} 

\paragraph{Merging the Minkowski and Schwarzschild solutions.}

We present now our stability theory in the form established in \cite{PLF-YM-lambda1} (corresponding to $\lambda=1$ in~\cite{PLF-YM-main})
and refer the reader to \cite{PLF-YM-main} for the treatment of weaker spatial decay conditions. 
In wave coordinates the Schwarz\-schild metric $g_{\Sch}$ reads 
$$ 
\aligned
g_{\Sch,00} & =   - \frac{r-m}{r+m},
\quad g_{\Sch,0a} = 0,
\quad 
g_{\Sch, ab} = \frac{r+m}{r-m} \omega_a \omega_b + \frac{(r+m)^2}{r^2}(\delta_{ab} - \omega_a \omega_b), 
\endaligned
$$
with $\omega_a := x_a/r$. 
Let $\chi^\star(r)$ be (regular) cut-off function vanishing for all $r\leq 1/2$ and which is identically $1$ for all $r\geq 3/4$. Given a mass coefficient $m>0$, the reference metric of interest here is (by restricting attention 
to $t \geq 2$) for convenience in the discussion) 
\begin{equation}\label{equa-defineMS-new} 
g_\glue^\star = \gMink + \chi^\star (r) \, \chi^\star(r/(t-1)) (g_\Sch - \gMink), 
\qquad  
t \geq 2,
\end{equation} 
which coincides with $\gMink$ in the cone $\big\{ r/(t-1)< 1/2 \big\}$ and with $g_\Sch$ in the exterior $\big\{ r/(t-1)\geq 3/4 \big\}$ (containing the light cone).  
This metric satisfies the light-bending property in the sense that the coefficient 
\begin{equation}\label{equa-Sch-bending-new} 
r \, g_{\glue}^{\star}(\lbf,\lbf) = 4m + \Ocal(1/r) \qquad \text{ for the metric } g_\glue^\star, 
\end{equation}
is positive ---the  light cone direction being 
$
\lbf := \del_t - (x^a/r)\del_a.
$  

%---------------------------------------------------------------------

\paragraph{Class of initial data sets.}

The initial metric $g_0$ is assumed to be close to the Euclidean metric while the initial second fundamental form $k_0$ is small. We assume the following decomposition ($a, b=1,2,3$) 
\begin{equation}\label{equa-decomp-data-3} 
g_{0ab} = g^\star_{0ab} + u_{0ab} = \delta_{ab} + h^\star_{0ab} + u_{0ab}, 
\qquad 
k_{0 ab} = k^{\star}_{0ab} + l_{0ab},
\end{equation}
and we propose the following terminology. 

\begin{itemize}

\item The part $h^\star_{0}$ is referred to as the {\it  initial reference} and should be small in a (weighted, high-order) pointwise norm. 

\item The part $u_{0}$ is referred to as the {\it  initial perturbation} and should be small in  (weighted, high-order) energy norm. 

\end{itemize} 

An example of a such decomposition is provided by the construction in Lindblad and Rodnianski~\cite{LR1}, where the initial data is decomposed as the sum of a finite-energy perturbation plus an (asymptotically) Schwarzschild metric outside of a compact set (with sufficiently small and positive mass). In our theory~\cite{PLF-YM-main}, the two parts are treated differently. Indeed, $h_{0}^{\star}$ is the initial trace of $h^\star$ while $u_0$ is propagated.  

%------------------------------------

Let us fix some exponents $\kappa\in(1/2,1)$ and $\mu\in(3/4,1)$. For the metric perturbation and the matter field, we introduce the energy norms 
\begin{equation}\label{equa-norms} 
\aligned
\Fenergy^{\text{metric}}_{\kappa,N} (g_0, k_0)
:= &
\sum_{|I|\leq N}\big\|\la r\ra^{\kappa + |I|} \big( |\del_x^I\del_x u_0| + |\del_x^I u_1| \big) \big\|_{L^2(\RR^3)},
\\
\Fenergy^{\text{matter}}_{\mu,N} (\phi_0, \phi_1) 
:= &
\sum_{|I|\leq N}\big\|\la r\ra^{\mu + N} (|\del_x^I\del_x\phi_0| + |\del_x^I\phi_0| + |\del_x^I\phi_1|)\big\|_{L^2(\RR^3)}.
\endaligned
\end{equation}
Given an initial data set we decompose it according to \eqref{equa-decomp-data-3} and we introduce the {\sl linear development} denoted by $u_\init$ of the initial data set $(u_{0 \alpha\beta},u_{1 \alpha\beta})$, that is, we introduce the solution to the (free, linear) wave equation with this initial data. It can be checked that (using here that $\kappa>1/2$)
\begin{equation}\label{eq3-09-05-2021-new-L}
|u_{\init}|  \lesssim C_0\eps (t+r+1)^{-1}. 
\end{equation} 
Our main assumption beyond the smallness on the norms~\eqref{equa-norms} is the following
 {\sl light-bending condition:}
\begin{equation}\label{eq3'-27-05-2020-initial-new-Sch}
\inf_{\Mscr^\near_\ell} \big(4 m + r \, u_\init(\lbf,\lbf) \big) 
\geq m. 
\end{equation}
Here, the parameter $\ell \in (0,1/2]$ is fixed and we focus on the near-light cone domain 
$\Mscr^\near_{\ell} = \Big\{  t \geq 2, \quad t-1 \leq r \leq \frac{t}{1-\ell} \Big\}$. 

\bei 

\item either $\eps$ is small with respect to $m$,  
so that the contribution $m$ from the Schwarzschild metric dominates, 

\item or $u_\init(\lbf,\lbf) $ is non-negative  
(which can follow from positivity assumptions on the initial data and 
the fact that the fundamental solution to the wave equation is a non-negative measure),  

\item or yet a combination of the above two extreme examples, namely, the negative contribution of the perturbation is small with respect to the Schwarzschild mass. 

\eei

%---------------------------------------------------------------------
 
\paragraph{Main statement for the Einstein equations.}

We are in a position to state our main result. In fact, a slightly more general statement concerning perturbations of reference metrics with harmonic decay is actually established \cite{PLF-YM-lambda1} (while much weaker decay is proven to be sufficient for nonlinear stability in \cite{PLF-YM-main}. As explained earlier on in this text, similar results (in a rather different functional framework) was simultaneously and independently established by Ionescu and Pausader~\cite{IP3}. 

\begin{theorem}[Nonlinear stability of self-gravitating Klein-Gordon fields. Near Schwarzschild decay]
\label{theo-main-result} 
A constant $C_\star >0$ being fixed, the following result holds for all sufficiently small $\eps, m$
satisfying $\eps\leq C_\star m$. Consider the reference metric $g_\glue^\star$ defined in \eqref{equa-defineMS-new} by merging together the Minkowski  and Schwarzschild metrics. Consider constraint-satisfying initial data $(g_0,k_0,\phi_0,\phi_1)$, a large integer $N$, and  exponents $(\kappa,\mu,\eps)$  
satisfying  
\begin{equation}\label{eq2-10-04-2022-M-33}
\kappa\in(1/2,1), 
\qquad 
\mu\in(3/4,1),
\qquad 
\kappa\leq \mu.  
\end{equation}
Then provided the initial data satisfies the light-bending condition \eqref{eq3'-27-05-2020-initial-new-Sch} together with the smallness condition
\begin{equation}\label{eq1-10-04-2022-M}
\aligned 
\Ebf^{\text{metric}}_{\kappa,N}  (g_0, k_0)
+ \Ebf^{\text{matter}}_{\mu,N} (\phi_0, \phi_1) 
 \leq \eps, 
\qquad 
\endaligned
\end{equation}
the maximal globally hyperbolic Cauchy development of $(g_0,k_0,\phi_0,\phi_1)$ associated with the Einstein-massive field system
 is future causally geodesically complete, and asymptotically approaches Minkowski spacetime in all (timelike, null, spacetime) directions. Moreover, the component $g(\lbf, \lbf)$ has a harmonic decay and enjoys the light-bending condition, namely 
\begin{equation}\label{eq3-09-05-2021-new-L-g}
|g(\lbf, \lbf)| \lesssim {m+\eps \over t+r+1}, 
\qquad \quad
\inf_{\Mscr^\near_{\ell}} r \, g(\lbf,\lbf)  
\geq m/2.
\end{equation}
\end{theorem}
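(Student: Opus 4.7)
My plan is to reduce the problem to a coupled wave-Klein-Gordon system in wave coordinates and to close a bootstrap argument on the Euclidean-hyperboloidal foliation of Section~\ref{sect-geomEHF}. I would first decompose the unknown metric as $g_{\alpha\beta} = g^\star_{\glue,\alpha\beta} + h_{\alpha\beta}$, with the reference $g^\star_\glue$ defined by~\eqref{equa-defineMS-new} already absorbing the Schwarzschild tail (and hence the linearized light-bending property~\eqref{equa-Sch-bending-new}), while the propagated perturbation $h_{\alpha\beta}$ inherits the weighted smallness from~\eqref{eq1-10-04-2022-M}. In wave gauge, $(h_{\alpha\beta},\phi)$ satisfies a system of the schematic form
\[
\BoxChapeau_g h_{\alpha\beta} = \Fbb_{\alpha\beta}(g,g;\del h,\del h) + S^\star_{\alpha\beta} - 16\pi\bigl(\del_\alpha\phi\del_\beta\phi + U(\phi)g_{\alpha\beta}\bigr),
\qquad
\BoxChapeau_g\phi - U'(\phi) = 0,
\]
where the reference source $S^\star_{\alpha\beta}$ generated by $g^\star_\glue$ decays like $1/r^3$ away from the light cone.

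The heart of the argument is a bootstrap in $s$ on $\Mscr_{[s_0,s_1]}$ with assumptions of the form $\Fenergy_{\kappa}(s,Zh_{\alpha\beta})\lesssim C_1\eps\, s^{\delta(k)}$ for admissible operators $Z$ with $\ord(Z)\leq N$ and $\rank(Z)=k$, and analogously $\Fenergy_{\mu,c}(s,Z\phi)\lesssim C_1\eps\, s^{\delta'(k)}$ for the matter field, organized as a rank-dependent growth hierarchy; the list is completed by pointwise bounds on low-order derivatives (obtained from the Sobolev inequalities of Propositions~\ref{prop:glol-Soin}--\ref{pro204-11-2}) and by the propagation of the light-bending bound $\inf_{\Mscr^\near_\ell} r\,g(\lbf,\lbf)\geq m/2$. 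Proceeding as in the model problem of Section~\ref{section---666}, I would commute by admissible operators using Lemma~\ref{prop-newpropo} together with the quasi-linear hierarchies of Propositions~\ref{lm 2 dmpo-cmm-H} and~\ref{prop1-12-02-2020-interior}, insert the output into the weighted energy identity~\eqref{eq1-27-11-2022-M}, and control the source terms via the Hardy inequality of Proposition~\ref{lem1-hardy} and the Poincar\'e-type inequalities of Proposition~\ref{propo-Poincare-ext}. Derivatives of $\phi$ are estimated through the Klein-Gordon decay of Propositions~\ref{prop1-23-11-2022-M} and~\ref{lem 1 d-KG-e}, so that their quadratic contribution to the $h$-equation is uniformly integrable in~$s$.

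The principal obstacle is the quasi-null coefficient $\Hu^{00}$ appearing in $H^{\alpha\beta}\del_\alpha\del_\beta h$: without structural information it would trigger a logarithmic loss at top order, but Proposition~\ref{prop1-12-02-2020-interior} isolates this contribution precisely as $T^{\textbf{hier}}$ and allows it to be absorbed once one obtains an improved estimate on $\Hu^{00}$ itself, derived by using the wave-gauge identities $\Gamma^\alpha=0$ to rewrite $\del_t\Hu^{00}$ in terms of tangential ``good'' derivatives $\delsH$. Propagation of the light-bending condition is the second delicate point: writing the wave equation satisfied by $g(\lbf,\lbf)$ and applying Proposition~\ref{Linfini wave} (Case~1, with $\alpha_2=-1-\nu$), the Schwarzschild contribution $4m/r$ coming from the reference part dominates the perturbative correction of size $\eps/r$ thanks to the smallness $\eps\leq C_\star m$, while the linear development~\eqref{eq3-09-05-2021-new-L} and the initial light-bending assumption~\eqref{eq3'-27-05-2020-initial-new-Sch} control the contribution from the Cauchy data.

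Once all bootstrap inequalities are recovered with strictly smaller constants, the continuity and blow-up criteria stated in the Introduction yield global existence of the wave-gauge system. Future causal geodesic completeness and the asymptotic convergence to Minkowski in all (timelike, null, spacelike) directions follow from the resulting pointwise decay of $h_{\alpha\beta}$ and $\del h_{\alpha\beta}$ by standard integration of the geodesic equation, and the harmonic decay~\eqref{eq3-09-05-2021-new-L-g} together with its light-bending refinement is read off from the same Kirchhoff representation used to close the bootstrap, now taken in the limit $s\to+\infty$.
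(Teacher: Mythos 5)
Your sketch follows the same overall strategy the paper outlines in Section~4.2: decomposition $g=g_\glue^\star+\text{(perturbation)}$ in wave gauge, bootstrap on the Euclidean-hyperboloidal foliation with a rank-dependent growth hierarchy, commutators via Lemma~\ref{prop-newpropo} and Propositions~\ref{lm 2 dmpo-cmm-H}--\ref{prop1-12-02-2020-interior}, insertion into the weighted energy identity~\eqref{eq1-27-11-2022-M}, Hardy/Poincar\'e control of low-order terms, Klein-Gordon decay from Propositions~\ref{prop1-23-11-2022-M} and~\ref{lem 1 d-KG-e}, propagation of light-bending via Kirchhoff's formula (Proposition~\ref{Linfini wave}) and $\eps\leq C_\star m$, and finally geodesic completeness from the pointwise decay of the metric. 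So the structure is right, and your treatment of $\Hu^{00}$ through the wave-gauge identity $\Gamma^\alpha=0$ plus the hierarchy term $T^{\textbf{hier}}$ in Proposition~\ref{prop1-12-02-2020-interior} is exactly one of the two principal obstacles the paper identifies.

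However there is a genuine gap: you address only the \emph{quasilinear} difficulty (the coefficient $\Hu^{00}$ in the principal part), but you do not address the \emph{semilinear} quasi-null source terms. The Ricci curvature contributes the quadratic forms $\Fbb_{\alpha\beta}=\Pbb_{\alpha\beta}+\Qbb_{\alpha\beta}$ of \eqref{equa:sec8-03}, and the quasi-null part $\Pbb_{\alpha\beta}$ does \emph{not} satisfy the null condition and does not have integrable decay a priori. This is what the paper explicitly flags as a major challenge. The mechanism that rescues it is tensorial: as Lemmas~\ref{lem1-31-01-2021} and~\ref{eq3 05-juillet-2019-hyper} show, the quasi-null forms become genuine null forms when evaluated on the ``good'' metric components, so one must set up a component hierarchy for the Einstein system (estimate the good components first, then feed those improved bounds into the evolution of the remaining components, in particular the null metric coefficient $g(\lbf,\lbf)$). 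Without this step your bootstrap would not close: a naive estimate of $\Pbb_{\alpha\beta}$ in the wave equation for a general metric component would produce a top-order loss that no choice of weights $\kappa,\mu$ in \eqref{eq2-10-04-2022-M-33} absorbs. A related omission is the bad/good region decomposition near the light cone (a $\sqrt{t}$-thick neighborhood) that the paper uses to establish the near-Schwarzschild decay of $g(\lbf,\lbf)$: applying Case~1 of Proposition~\ref{Linfini wave} directly, as you propose, is the argument in the good region, but you also need the radial integration from the good region into the bad region. These two missing ingredients are precisely the content of Section~4.3 and the ``near-Schwarzschild decay of the null metric component'' and ``sharp decay for good metric components'' items in Section~4.2, and they should be incorporated for the sketch to represent a viable route to \eqref{eq3-09-05-2021-new-L-g}.
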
 

In the rest of this paper, we will provide a full proof of nonlinear stability for a simplified model, which retains some of the main challenges arising with the Einstein-massive matter system; cf.~Theorem~\ref{theo-stable-model}. 

%-------------------------------------------------------------------------------------------------------------------------------------------

\subsection{Analysis of the Einstein equations in the Euclidean-hyperboloidal foliation} 

Next, having presented all of our technical tools, we turn our attention to the Einstein equations. The existence theory is established in wave gauge, and one of our tasks is to connect geometric components and 
PDEs components. 

\begin{itemize} 
\item {\it  Structure of Einstein's field equations.}
We decompose the Einstein-massive matter system in a form that is adapted to the Euclidean-hyper\-boloidal foliation 
and we analyze the nonlinear structure of these equations. In particular, we exploit the wave gauge conditions
and distinguish between different components of the metric.  
One major challenge comes from the fact that the nonlinearities arising in the Einstein equations
 do not obey the null condition. 
The wave gauge conditions play a central role in several instances, in the derivation of, both, energy and pointwise estimates.  
Cf. the statements in Lemmas~\ref{lem1-31-01-2021} and 
\ref{eq3 05-juillet-2019-hyper}. 

\item {\it  Consequences of the energy estimates.}
We proceed by postulating certain bootstrap assumptions which distinguish between low- and high-order derivatives of the metric and matter fields, and involve the translations, the boosts, and the spatial rotations. 
Our estimates involve the geometric weight denoted by $\zeta$, which allows us to 
distinguish between the interior and exterior domains of our foliation. 
Decay in space is incorporated by 
subtracting a reference metric and adding the weight $\crochet$ in terms of the distance from the light cone. A broad range of metric and matter exponents are allowed by our method. By applying the energy estimate, we derive directly several bounds and, in turn, we write 
direct consequences of the weighted Poincar\'e inequality in Proposition~\ref{propo-Poincare-ext}
(see for instance Proposition~\ref{eq3-15-05-2020}, below)
and 
of the generalized Sobolev inequality in Proposition~\ref{lem 2 d-e-I}. 

\item {\it  Commutator and Hessian estimates for the metric.}  
We then focus on the metric perturbation and establish estimates that are localized near the light cone as well as estimates
 away from it. We use various calculus rules enjoyed by our frame of vector fields 
and we uncover the boost-rotation hierarchy enjoyed by quasi-linear commutators, as 
stated earlier 
in Propositions~\ref{lm 2 dmpo-cmm-H} and \ref{prop1-12-02-2020-interior}.

\item {\it  Near-Schwarzschild decay of the null metric component.} 
A key contribution in our method is proving that a certain component of the metric, referred to as the
 null metric component, has a `near-Schwarzschild' decay; see~\eqref{eq3-09-05-2021-new-L-g}.
  In addition, a related argument allows us to prove a so-called light-bending condition: see \eqref{eq3-09-05-2021-new-L-g}. 
We find it useful to decompose the spacetime domain into two sub-domains, referred to as the ``bad'' and ``good'' regions:
in the bad region, which is a (thick) neighborhood of the light cone (covering points up to a distance $\sqrt{t}$) we 
integrate toward the light cone from the good region;
in the good region we apply Kirchhoff's formula and integrate the effect of the initial data, by taking the properties of the 
source terms into account. Here, we make use of the assumed decay of the reference metric and the contribution of the initial perturbation.  

\item {\it  Sharp decay for good metric components.}  
Our next task is to estimate the gradient and Hessian of the ``good metric components'' 
and derive suitably weighted pointwise estimate. 
In the wave equations satisfied by the metric perturbation, the  source-terms
 contain  the quasi-null terms $\Pbb$ which may not enjoy integrable decay.
By virtue of the tensorial structure, the quasi-null terms in the evolution equations of   
the good components of the metric are actually null terms and, consequently, enjoy sufficient decay. This allows us to uncover a hierarchy between the Einstein equations. 

\item {\it  Pointwise estimate for metric components at low order.}  
We next control general components of the metric at low order of differentiation,  
and we derive a near-Schwarzschild decay which is essential in order to deal with massive matter fields.
On the other hand, for massless fields a weaker estimate would be sufficient to close the bootstrap argument.   

\item {\it   Improved energy estimates.} 
In turn we can close the bootstrap argument by establishing improved energy estimates at the highest-order of differentiation, both, first for general metric components  
and then for the Klein-Gordon field. The boost-rotation hierarchy made evident in our earlier estimates is also here 
the key ingredient of this final step of the prof. For this argument applied to the model, cf.~Section~\ref{section---67}. 

\item {\it  Asymptotically hyperboloidal domain.} 
Estimates are required also within the asymptotically hyperboloidal domain. 
For this region, a proof of global existence was given first in our monograph \cite{PLF-YM-two} in which the emphasis was on spacetime coincide exactly with the Schwarzschild spacetime outside a (large, say) light cone. In the light cone region, we also 
investigated the structure of the nonlinearities of the Einstein equations coupled to a Klein-Gordon fields, and uncovered 
 the boost-rotation hierarchy, as well. 

\end{itemize}

%------------------------------------------------------------------------------------------------------------------------------------------------------ 

\subsection{Null and quasi-null structures in the Euclidean-hyperboloidal foliation}

The Einstein equations can be decomposed in the frames that are relevant in the Euclidean--hyperboloidal foliation framework. In the global coordinate chart $(x^\alpha) =( t, x^a)$, we introduce   
$\Gamma^\gamma := 
g^{\alpha \beta} \Gamma_{\alpha \beta}^\gamma$ and $ \Gamma_\alpha := 
g_{\alpha \beta} \Gamma^\beta$
determined from the corresponding Christoffel symbols. 
\label{equa-the-system}
The Ricci curvature depends upon (up to) second-order derivatives of the metric $g$ and specifically (\cite{PLF-YM-two}) 
\begin{equation} \label{equa:sec8-01} 
2 \, R_{\alpha\beta}
= - g^{\mu\nu} \del_\mu \del_\nu g_{\alpha\beta} 
+  \Fbb_{\alpha\beta}(g,g;\del g, \del g) 
+ \big(\del_\alpha\Gamma_\beta + \del_\beta\Gamma_\alpha\big) 
+  W_{\alpha\beta},
\end{equation}
where $W_{\alpha\beta} := g^{\delta \delta'} \del_{\delta} g_{\alpha \beta} \Gamma_{\delta'} - \Gamma_\alpha \Gamma_\beta$ and $\Fbb_{\alpha\beta} = \Pbb_{\alpha\beta} + \Qbb_{\alpha\beta}$ which involves 
the
{\it   
quasi-null quadratic forms} 
\begin{equation} \label{equa:sec8-03} 
\Pbb_{\alpha\beta} (g,g;\del g, \del g) 
:= - \frac{1}{2} g^{\mu\mu'} g^{\nu\nu'} \del_\alpha g_{\mu\nu} \del_\beta g_{\mu'\nu'} + \frac{1}{4} g^{\mu\mu'} g^{\nu\nu'} \del_\alpha g_{\mu\mu'} \del_\beta g_{\nu\nu'}
\end{equation}
and {\it  null quadratic forms}
\begin{equation}
\aligned
&\Qbb_{\alpha\beta}(g,g;\del g, \del g) 
 := 
g^{\mu\mu'} g^{\nu\nu'} \del_\mu g_{\alpha\nu} \del_{\mu'} g_{\beta\nu'}
- g^{\mu\mu'} g^{\nu\nu'} \big(\del_\mu g_{\alpha\nu'} \del_\nu g_{\beta\mu'} - \del_\mu g_{\beta\mu'} \del_\nu g_{\alpha\nu'} \big)
\\
&  + g^{\mu\mu'} g^{\nu\nu'} \big(\del_\alpha g_{\mu\nu} \del_{\nu'} g_{\mu'\beta} - \del_\alpha g_{\mu'\beta} \del_{\nu'} g_{\mu\nu} \big)
+ \frac{1}{2} g^{\mu\mu'} g^{\nu\nu'} \big(\del_\alpha g_{\mu\beta} \del_{\mu'} g_{\nu\nu'} - \del_\alpha g_{\nu\nu'} \del_{\mu'} g_{\mu\beta} \big)
\\
&  + g^{\mu\mu'} g^{\nu\nu'} \big(\del_\beta g_{\mu\nu} \del_{\nu'} g_{\mu'\alpha} - \del_\beta g_{\mu'\alpha} \del_{\nu'} g_{\mu\nu} \big)
+ \frac{1}{2} g^{\mu\mu'} g^{\nu\nu'} \big(\del_\beta g_{\mu\alpha} \del_{\mu'} g_{\nu\nu'} - \del_\beta g_{\nu\nu'} \del_{\mu'} g_{\mu\alpha} \big). 
\endaligned
\end{equation}

%-------------------------------------- 

We  focus on the quadratic terms in the perturbation, namely
\be
\Pbb^{\star}_{\alpha\beta}[u,u] := \Pbb_{\alpha\beta}(g^{\star},g^{\star}; \del u,\del u),\quad
\Qbb^{\star}_{\alpha\beta}[u,u] := \Qbb_{\alpha\beta}(g^{\star},g^{\star}; \del u, \del u).
\ee
Since these expressions are quadratic in $(g,g)$ as well as in $(\del g, \del g)$, we apply a polarization argument and define the corresponding symmetric bilinear forms.  We consider  
$|\Qbb^{\star}[u, v]|_{p,k} := \max_{\alpha,\beta} |\Qbb_{\alpha\beta}^\star[u, v] |_{p, k}$.

\begin{lemma}[Null interaction terms at arbitrary order] 
\label{Null-Euclidean bilinear}
Null forms are controlled by good derivatives and a contribution depending upon the reference metric and, specifically,
n the Euclidean-merging domain one has
\begin{equation} \label{equa-new-Qzero}
\aligned
|\Qbb^{\star}[u, v]|_{p,k} 
& \lesssim  \sum_{p_1+p_2 = p\atop k_1+k_2=k} 
\Big( |\del u|_{p_1, k_1} |\delsN v |_{p_2, k_2}  + |\del v |_{p_1, k_1} |\delsN u|_{p_2, k_2}  \Big) 
\\
& \textstyle
\quad + | h^\star |_p  
\sum_{p_1+p_2=p\atop k_1+k_2=k} |\del u|_{p_1, k_1} |\del v |_{p_2, k_2}, 
\endaligned
\end{equation} 
while, in the hyperboloidal domain $\Mscr^\H$,  
\begin{equation}
|\Qbb^\star[u] |_p
\lesssim  \sum_{p_1+p_2 = p} |\del u|_{p_1} \big( |\delsH u|_{p_2} + (s/t)^2 |\del u|_{p_2} \big)
+ | h^\star |_p\sum_{p_1+p_2=p} |\del u|_{p_1} |\del u|_{p_2}.
\end{equation}
\end{lemma}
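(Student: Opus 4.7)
My plan is to first establish the zeroth-order null structure algebraically and then propagate it to arbitrary order via the Leibniz rule together with the commutator estimates already developed for the admissible family of vector fields. The strategy is standard in spirit (Klainerman-type null condition), but the novelty is in matching the bookkeeping of the $(p,k)$-hierarchy built into the Euclidean-hyperboloidal framework.

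\textbf{Base case ($p=0$).} Writing $g^{\star} = \gMink + h^{\star}$, I would expand the coefficients $g^{\star\mu\mu'} g^{\star\nu\nu'}$ appearing in the explicit formula for $\Qbb_{\alpha\beta}$ as a constant Minkowski part plus an $h^{\star}$-dependent part. For the Minkowski part, I pass to the semi-null frame via $\del_{\alpha} = \PsiN_{\alpha}^{\beta}\delN_{\beta}$; the antisymmetric pairings hard-wired into $\Qbb$ cause every pure $\del_t\del_t$ contribution to cancel, so at least one factor in each surviving monomial is a tangential derivative $\delsN_a$. This delivers $|\Qbb^{\star}[u,v]|_{0} \lesssim |\del u|\,|\delsN v| + |\del v|\,|\delsN u| + |h^{\star}|\,|\del u|\,|\del v|$, which is the announced bound at order zero.

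\textbf{Higher-order propagation.} For any admissible $Z$ with $\ord(Z)\leq p$, $\rank(Z)\leq k$, I would apply $Z$ to the pointwise identity used in the base case and distribute it by Leibniz among the two derivative factors and the coefficient $h^{\star}$. The $h^{\star}$-piece directly reproduces the second sum in \eqref{equa-new-Qzero}. For the null-structured piece, whenever $Z$ acts on a $\delsN$-factor I swap the order to obtain $\delsN(Z'u)$ plus a commutator $[Z',\delsN]u$; the identities $[\del,\delsN]=0$, $[L_a,\delsN_b],[\Omega_{ab},\delsN_c]\in\mathrm{span}(\delsN,r^{-1}\LOmega)$ show that these commutators are again of good-derivative type (up to harmless rescalings that are absorbed in the $|\cdot|_{p_i,k_i}$ factors), so iterating yields exactly the double sum in \eqref{equa-new-Qzero}. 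The rank accounting in this swap is supplied by Lemma~\ref{prop-newpropo} and the partition identities of Propositions~\ref{lm 2 dmpo-cmm-H}--\ref{prop1-12-02-2020-interior}.

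\textbf{Hyperboloidal variant.} In $\Mscr^{\H}$ I would redo the base case in the semi-hyperboloidal frame, using $\del_a = \delsH_a - (x^a/t)\del_t$. Now the expansion of $\Qbb$ splits into a genuinely good-derivative part giving the $|\delsH u|_{p_2}$ contribution and a residue carrying the factor $x^a/t$; by the same antisymmetric cancellations, this residue collects into the overall scalar $1-|x|^2/t^2 = (s/t)^2$ on hyperboloids, producing the $(s/t)^2|\del u|_{p_2}$ term. Propagation to order $p$ proceeds verbatim as above, using that $[L_a,\delsH_b]$ and $[\Omega_{ab},\delsH_c]$ lie in the $\delH$-span.

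\textbf{Main obstacle.} The algebra at the base case is classical; the real difficulty is the $(p,k)$-bookkeeping in the propagation step, because one must ensure (a)~that rank is additive across the two factors and that no spurious rank is generated by commuting $Z$ past $\delsN$ or $\delsH$, and (b)~that derivatives falling on $h^{\star}$ do not migrate into the null-structured term and spoil its pure coefficient character. Both points are handled precisely by the commutator hierarchies of Lemma~\ref{prop-newpropo} and Propositions~\ref{lm 2 dmpo-cmm-H}--\ref{prop1-12-02-2020-interior}, which were designed to track this boost/rotation accounting exactly.
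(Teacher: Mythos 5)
Your overall strategy --- expand the null form in the adapted frame to manifest the vanishing of the $\del_t\otimes\del_t$ coefficient, then propagate to arbitrary order by Leibniz and $(p,k)$-bookkeeping --- is indeed the correct route, and the Euclidean-merging case essentially works as you describe. Two points in the write-up need attention, one cosmetic and one substantive.

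First, the commutator claims are inaccurate and, more importantly, the whole "swap" step is unnecessary. You assert $[\del,\delsN]=0$ and that $[L_a,\delsN_b]$, $[\Omega_{ab},\delsN_c]$ lie in $\mathrm{span}(\delsN,r^{-1}\LOmega)$. In fact $[\del_b,\delsN_a]=r^{-1}(\delta_{ab}-\omega^a\omega^b)\del_t$ is nonzero, and a direct computation gives
\[
[L_a,\delsN_b]=(t/r-1)(\delta_{ab}-\omega^a\omega^b)\,\del_t-\omega^b\,\delsN_a,
\]
which carries an explicit $\del_t$-contribution not in your claimed span. However, once you write the Minkowski part of $\Qbb^\star$ as $\sum_{\mu\nu} c^{\mu\nu}(x/r)\,\delN_\mu u\,\delN_\nu v$ with $c^{00}\equiv 0$ and the $c^{\mu\nu}$ bounded with bounded admissible derivatives in $\MME$, the Leibniz rule applied to $Z$ directly produces factors $Z_2\delsN_a v$, and since $|\delsN v|_{p_2,k_2}$ is by definition a max over $|Z'\delsN_a v|$ (not over $|\delsN_a Z' v|$), each such factor is already dominated by $|\delsN v|_{\ord Z_2,\rank Z_2}$. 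No swap and hence no commutator is needed; the double sum in \eqref{equa-new-Qzero} falls out, and the $h^\star$-piece is handled the same way. Avoid the commutation entirely and the first issue evaporates.

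Second --- and this is the genuine gap --- in the hyperboloidal domain you state that propagation "proceeds verbatim," but the key frame coefficient $c^{\H 00}=-(s/t)^2$ is not constant and its admissible derivatives are a priori \emph{not} small like $(s/t)^2$. Indeed $\del_t(s/t)^2 = 2r^2/t^3 = O(t^{-1})$ and $\del_a(s/t)^2=O(t^{-1})$, which near the light cone far exceeds $(s/t)^2$ in general. Similarly, if you instead commute first and then change frame at the end, the residues $[\del_\alpha,\delsH_b]=O(t^{-1})\del_t$ appear. In both versions of the argument the required control comes only from the geometric fact that on $\MH_s$ one has $t-r\ge 1$, hence
\[
(s/t)^2=\frac{(t-r)(t+r)}{t^2}\;\gtrsim\;\frac{1}{t},
\]
which is precisely what ensures $|Z(s/t)^2|\lesssim(s/t)^2$ for every admissible $Z$ (and more generally absorbs every $O(t^{-1})\del u$ residue into the claimed $(s/t)^2|\del u|_{p_2}$ term). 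Without this inequality the estimate in the hyperboloidal domain cannot be closed, so this step must be spelled out explicitly rather than subsumed under "harmless rescalings."

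Finally, a minor remark: Lemma~\ref{prop-newpropo} and Propositions~\ref{lm 2 dmpo-cmm-H}--\ref{prop1-12-02-2020-interior} concern commutators of the form $[Z,\del]$, $[Z,\del\del]$ and $[Z,H\del\del]$ and are not the right references for the $[Z,\delsN]$ or $[Z,\delsH]$ bookkeeping you invoke; once you adopt the pure Leibniz route this citation can simply be dropped.
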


Next, in the Euclidean-merging domain let us introduce 
\begin{equation}\label{eq8-04-10-2022} 
w[v]_{\gamma} := (\gMink + v)^{\alpha\beta} \del_{\alpha} v_{\beta\gamma} - \frac{1}{2} (\gMink + v)^{\alpha\beta} \del_\gamma v_{\alpha\beta},
\end{equation}
\begin{equation} \label{eq2-04-12-2020-v}
\aligned
\mathbb{W}^{\ME}_{p,k}[v]
&:=    
{ \sum_\gamma |w[v]_\gamma|_{p,k} 
}
+ \sum_{p_1+p_2 = p\atop k_1+k_2=k} |\del  v |_{p_1,k_1} | v |_{p_2,k_2}. 
\endaligned
\end{equation}

\begin{lemma}[Quasi-null interaction terms at arbitrary order. Euclidean-merging domain] 
\label{lem1-31-01-2021}
In the Euclidean-merging domain $\MME$, under the smallness condition $| h^\star |_p + |u|_{[p/2]} \ll 1$ and 
$|v|_{[p/2]} \ll 1$ one has 
$$
\aligned
|\slashed{\Pbb}^{\star\N}[u, v] |_{p,k}
&
\lesssim  \hskip-.2cm 
\sum_{p_1+p_2=p\atop k_1+k_2=k}\big(  |\del u|_{p_1,k_1} |\delts v |_{p_2,k_2} +  |\del \phi |_{p_1} |\delts u|_{p_2} \big) 
+  \hskip-.4cm  \sum_{p_1+p_2+p_3=p}  \hskip-.3cm  | h^\star |_{p_3} |\del u|_{p_1} |\del \phi |_{p_2},
\\
|\Pbb_{00}^{\star \Ncal}[u, v] |_{p,k}
& \lesssim \sum_{p_1+p_2=p\atop k_1+k_2=k} |\del \uts|_{p_1,k_1} |\del \slashed v^\Ncal |_{p_2,k_2} 
+ \hskip-.3cm  {\sum_{p_1+p_2=p} |\delts u|_{p_1} |\del \phi |_{p_2}} 
+  \hskip-.3cm  \sum_{p_1+p_2=p}  \hskip-.3cm   |\delts v |_{p_1} |\del u |_{p_2} 
\\
& \quad  
+  \hskip-.3cm  \sum_{p_1+p_2=p\atop k_1 + k_2 = k} \mathbb{W}^{\EM}_{p_1,k_1}[v] \, |\del u |_{p_2,k_2}
+ \hskip-.3cm  \sum_{p_1+p_2=p}\hskip-.3cm \SbbME_{p_1}[u] |\del \phi |_{p_2} 
+ \hskip-.5cm \sum_{p_1+p_2+p_3=p} \hskip-.5cm  | h^\star |_{p_3} |\del u|_{p_1} |\del \phi |_{p_2}.
\endaligned
$$
\end{lemma}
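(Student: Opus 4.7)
The plan is to reduce the arbitrary-order estimate to its $p=k=0$ counterpart by commuting an admissible operator $Z$ with order $p$ and rank $k$ through $\Pbb_\alpha^{\star\N}\beta[u,v]$, and then to establish the base-level pointwise estimate by hand via a careful null-frame decomposition. More precisely, I would first write
$\Pbb_{\alpha\beta}^{\star}[u,v] = A^{\mu\mu'\nu\nu'}_\star\,\del_\alpha u_{\mu\nu}\,\del_\beta v_{\mu'\nu'}+B^{\mu\mu'\nu\nu'}_\star\,\del_\alpha u_{\mu\mu'}\,\del_\beta v_{\nu\nu'}$, where the coefficients involve only $(g^\star)^{\alpha\beta}=g_{\Mink}^{\alpha\beta}+(\text{contribution controlled by }|h^\star|)$. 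The indices $(\alpha,\beta)$ are then projected onto the semi-null frame via the matrices $\PsiN$, while the contracted indices $(\mu,\mu',\nu,\nu')$ are rotated using the polarization identity $g^{\mu\mu'}=g^{\N\,\mu\mu'}+\text{(correction from }\PsiN\text{)}$ so that the contribution of the reference metric is isolated into the $|h^\star|$-weighted term appearing on the right-hand side of each claimed inequality.

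For the tangential components $\slashed{\Pbb}^{\star\N}$, the outer free index carries at least one spatial slot in the null frame, and so at least one of the factors $\del^\N_\alpha u$, $\del^\N_\beta v$ is a good derivative of type $\delts$. This yields the first line of the $\slashed{\Pbb}^{\star\N}$ estimate at $p=k=0$. For the $\Pbb_{00}^{\star\N}$ component both outer derivatives are $\del_t$, so no free direction gives a good derivative; the structure must instead come from the contraction $(\mu,\nu)$. Here one uses the wave-coordinate defect identity \eqref{eq8-04-10-2022}: after rearranging $g^{\mu\mu'}g^{\nu\nu'}\del_0 v_{\mu'\nu'}$ in the null frame, one isolates the trace combination $g^{\mu\nu}\del_0 v_{\mu\nu}$ modulo terms containing only good derivatives $\delts v$ or the defect $w[v]_\gamma$. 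Substituting back into $\Pbb_{00}$ produces a pairing of the form $(\delts u\cdot\delts v)+(\delts u\cdot\del v)+(\del u\cdot\delts v)+w[v]\cdot\del u+\mathbb{S}^{\ME}[u]\cdot\del v$, in line with the stated right-hand side; the bilinear residue $|\del v|\,|v|$ hidden in $\mathbb{W}^{\ME}$ collects the quadratic-in-$v$ terms produced by expanding the inverse of $\gMink+v$.

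For general $(p,k)$, I apply $Z$ to $\Pbb^{\star}[u,v]$ and distribute via the Leibniz rule, separating genuine second factors of $\del u$, $\del v$ from commutator remainders. The derivatives falling on the reference coefficient $(g^\star)^{\mu\mu'}$ are absorbed into $|h^\star|_{p_3}$; those falling on $\del u$ or $\del v$ produce factors of $|\del u|_{p_1,k_1}$, $|\del v|_{p_2,k_2}$ governed by the split $p_1+p_2=p$, $k_1+k_2=k$. The crucial point is that the null-frame projection and the wave-gauge substitution performed in the base step commute with $Z$ up to terms governed by Lemma~\ref{prop-newpropo} and Proposition~\ref{lm 2 dmpo-cmm-H}; that is, expressions of the form $[Z,\PsiN]$ and $[Z,\del]$ yield lower-rank remainders that match the stated hierarchy, provided the smallness assumption $|h^\star|_p+|u|_{[p/2]}+|v|_{[p/2]}\ll 1$ is invoked to control the expansion of $(\gMink+v)^{-1}$ at high order.

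The main obstacle is the bookkeeping of the $\Pbb_{00}^{\star\N}$ estimate after commutation. Two interlocking issues must be tracked: first, the wave-gauge replacement is not purely algebraic, since differentiating $w[v]_\gamma$ introduces new copies of $\del v$ and $v$ that must be re-absorbed into $\mathbb{W}^{\ME}_{p_1,k_1}[v]$ without inflating the rank; second, the null-frame projection matrices $\PsiN,\PhiN$ have coefficients $x^a/r$ whose angular derivatives are benign under $L^J\Omega^K$ but whose translational derivatives $\del^I$ produce $r^{-1}$ weights that must be absorbed in good derivatives. I would therefore perform the rank-$k$ analysis first for pure boosts-rotations, where $\PsiN$ commutes nicely, and then extend to ordered operators $\del^IL^J\Omega^K$ using the commutator hierarchy; the $\mathbb{S}^{\ME}$ term appearing in the statement is precisely the reservoir collecting the gauge-defect residues generated in this last step.
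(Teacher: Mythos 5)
The paper itself does not supply a proof of Lemma~\ref{lem1-31-01-2021}; as with the other statements in Section~4 it defers entirely to the external monograph \cite{PLF-YM-main}, so there is no proof in this document to compare against literally. With that caveat, your high-level plan — project the free indices of $\Pbb^\star_{\alpha\beta}$ onto the semi-null frame, observe that at least one tangential derivative appears in the $\slashed{\Pbb}^{\star\N}$ block, and invoke the wave-gauge structure for the $\Pbb^{\star\N}_{00}$ block — is the standard and correct strategy for quasi-null nonlinearities of this type, and the presence of $\mathbb{W}^{\ME}[v]$ and $\SbbME[u]$ in the stated bound is indeed a footprint of the gauge-defect substitution you describe.

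That said, the proposal as written leaves the hard part essentially unestablished. The statement "one isolates the trace combination $g^{\mu\nu}\del_0 v_{\mu\nu}$ modulo terms containing only good derivatives $\delts v$ or the defect $w[v]_\gamma$" is a claim, not a derivation. The crux of the lemma is the explicit algebraic identity that replaces the would-be dangerous pairing $\del_t u^{\N}_{00}\,\del_t v^{\N}_{00}$ coming from the contractions $(g^\star)^{\mu\mu'}(g^\star)^{\nu\nu'}$ by the six precise families of terms listed on the right-hand side. You do not exhibit that identity nor explain how the particular combination $-\tfrac12\del_\alpha u_{\mu\nu}\del_\beta v_{\mu'\nu'}+\tfrac14\del_\alpha u_{\mu\mu'}\del_\beta v_{\nu\nu'}$ (the signature of the quasi-null form) is what makes the wave-gauge substitution close up, versus an arbitrary quadratic form in $\del u,\del v$ for which the cancellation would fail. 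Without that, the base step is not proved.

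The second gap is in the rank bookkeeping. Notice that some sums on the right carry the joint constraint $p_1+p_2=p$, $k_1+k_2=k$ (e.g.\ the $|\del\uts|_{p_1,k_1}|\del\slashed{v}^{\N}|_{p_2,k_2}$ and $\mathbb{W}^{\ME}_{p_1,k_1}[v]\,|\del u|_{p_2,k_2}$ terms), while others carry only the order constraint and no rank split at all. This asymmetry is the whole point of the boost-rotation hierarchy and has to come out of the Leibniz/commutator expansion in a controlled way; your appeal to ``expressions of the form $[Z,\PsiN]$ and $[Z,\del]$ yield lower-rank remainders that match the stated hierarchy'' does not address why the rank constraint is retained in exactly those terms and dropped in the others. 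Lemma~\ref{prop-newpropo} and Proposition~\ref{lm 2 dmpo-cmm-H} give commutator estimates for $[Z,\del]$, $[Z,\del\del]$, and $[Z,H\del\del]$, but neither of them covers commutators of $Z$ with the frame change $\PsiN$ or with the wave-gauge substitution, so the claim that those commute ``up to terms governed by'' these results needs its own justification. Finally, $\SbbME$ is never defined in this paper (only $\Sbb_p^{\H}$ is, in \eqref{eq2-04-12-2020-hyper}), so treating it as a catch-all ``reservoir'' is speculation rather than a derivation; a proof would have to specify $\SbbME$ and then exhibit that the residual terms land there.
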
 
 
%-----------------------------------------------------------

Finally, in the hyperboloidal domain we introduce 
\begin{equation}\label{eq2-04-12-2020-hyper}
\aligned
\Sbb_p^\H[u] 
& := t^{-1} |u|_{p_1} + \big(|\del h^\star |_{p_1} + t^{-1} | h^\star |_{p_1} \big)
+ \sum_{p_1+p_2=p_1} \big(|\del  u|_{p_1} |u|_{p_2} + |u|_{p_1} |u|_{p_2} \big)
\\
& \quad +   \sum_{p_1+p_2=p_1} \big(| h^\star |_{p_1} |\del u|_{p_2} + |u|_{p_1} |\del h^\star |_{p_2} + | h^\star |_{p_1} |\del h^\star |_{p_2} \big)
\\
& \quad
+  \sum_{p_1+p_2=p_1} 
\big(| h^\star |_{p_1} | u|_{p_2} + |u|_{p_1} | h^\star |_{p_2} + | h^\star |_{p_1} | h^\star |_{p_2} \big). 
\endaligned
\end{equation}

%---------------------------- 

\begin{lemma}[Quasi-null interaction terms at arbitrary order. Hyperboloidal domain] 
\label{eq3 05-juillet-2019-hyper}
In the hyperboloidal domain $\Mcal^\Hcal$ and under the smallness condition $| h^\star |_p + |u|_{[p/2]} \ll 1$,  the quasi-null terms satisfy 
$$
\aligned
|\slashed{\Pbb}^{\star\H} [u] |_p
& \lesssim \sum_{p_1+p_2=p} |\del u|_{p_1} \big(  |\delsH u|_{p_2} + (s/t)^2 |\del u|_{p_2} \big) 
+ \sum_{p_1+p_2+p_3=p} | h^\star |_{p_3} |\del u|_{p_1} |\del u|_{p_2},
\\
|\Pbb_{00}^{\star\H} [u] |_{p,k}
& \lesssim     \hskip-.3cm  \sum_{p_1+p_2=p}   \hskip-.3cm  \Big(|\del \usH |_{p_1} |\del \usH |_{p_2} 
+ \big( |\delsH u|_{p_1} + (s/t)^2 |\del u|_{p_1} \big)  |\del u|_{p_2}
\Big) 
\\
& \qquad 
+   \hskip-.3cm 
\sum_{p_1+p_2=p}  \hskip-.3cm  | \Sbb_p^\H[u] |_{p_1} |\del u|_{p_2}  
+  \hskip-.3cm  \sum_{p_1+p_2+p_3=p}   \hskip-.3cm  | h^\star |_{p_3} |\del u|_{p_1} |\del u|_{p_2}.
\endaligned
$$ 
\end{lemma}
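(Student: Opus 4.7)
The plan is to combine the algebraic expansion of the quasi-null form in the semi-hyperboloidal frame with the wave-gauge identities bundled into $\Sbb_p^\H[u]$, and then promote everything to high order via the commutator hierarchy of Proposition \ref{prop1-12-02-2020-interior}. I first write $g^{\alpha\beta} = g_\Mink^{\alpha\beta} + H^{\alpha\beta}$ with $H = -(h^\star + u) + O((h^\star+u)^2)$; the smallness $|h^\star|_p + |u|_{[p/2]} \ll 1$ makes this Neumann-type expansion uniformly convergent, and its non-Minkowski part contributes the cubic tail $|h^\star|_{p_3}|\del u|_{p_1}|\del u|_{p_2}$ that appears verbatim on the right-hand sides. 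For the remaining Minkowski-coefficient piece I use the frame change $\del_0 = \delH_0$, $\del_a = \delsH_a - (x^a/t)\del_t$ together with the identity $g_\Mink^{\mu\nu}\delH_\mu\delH_\nu = -(s/t)^2\del_t\del_t$ plus tangential contributions. Thus every double contraction against $g_\Mink^{\mu\mu'}g_\Mink^{\nu\nu'}$ either produces one explicit $\delsH u$ factor or carries the $(s/t)^2$ prefactor, which directly yields the tangential bound on $\slashed{\Pbb}^{\star\H}$.

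The component $\Pbb_{00}^\star$ is harder because the naive expansion leaves bilinears in $\del_0 u_{ab}$ with no good derivative and no $(s/t)^2$ factor. Here I invoke the wave-gauge quantity $w[u]_\gamma$ of \eqref{eq8-04-10-2022}: since $\Gamma^\alpha = 0$ along the solution, one solves algebraically for the troublesome bad components of $\del u$ in terms of good derivatives $\delsH u$, of $w[u]$, and of the lower-order nonlinear debris collected in $\Sbb_p^\H[u]$ from \eqref{eq2-04-12-2020-hyper}. After substituting and regrouping by the semi-hyperboloidal frame components $\usH^{\alpha\beta}$, every surviving bad--bad product takes one of the admissible shapes in the target bound: a $|\del\usH|_{p_1}|\del\usH|_{p_2}$ contribution (benefiting from the algebraic smallness of $\usH^{00}$ provided by the gauge), a product containing $\delsH u$ or $(s/t)^2|\del u|$, a product containing $\Sbb_p^\H[u]$, or the cubic $h^\star$ remainder.

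To reach arbitrary order $(p,k)$, I apply $Z = \del^I L^J \Omega^K$ with $\ord(Z) = p$ and $\rank(Z) = k$ to the pointwise identities above and distribute through Leibniz rule; commutators incurred when $Z$ crosses the frame coefficients $x^a/t$, the metric coefficients, and the wave-gauge substitutions are exactly those controlled by Proposition \ref{prop1-12-02-2020-interior}, with lower-rank contributions producing the mixed sums $p_1+p_2=p$, $k_1+k_2=k$ on the right-hand side, and with the $t^{-1}$ factors and $\delsH$ preservation of the hierarchy structure matching the shape of $\Sbb_p^\H[u]$. The main obstacle is the $00$ component: I must verify that the wave-gauge substitutions survive commutation with $Z$ without destroying the quasi-null cancellation, which reduces to checking that $\Sbb_p^\H[u]$ as defined in \eqref{eq2-04-12-2020-hyper} is rich enough to absorb every low-order remainder produced by the interplay between the boost--rotation operators and the gauge identities; once this bookkeeping is set up, the rest of the argument is a careful but essentially mechanical bilinear accounting.
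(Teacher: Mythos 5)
Your overall architecture (frame decomposition, expansion of the reference-metric contraction, then high-order promotion via the commutator hierarchy) is sensible, but there are two material errors in the middle.

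First, the Neumann expansion. You write $H = -(h^\star+u)+\Ocal((h^\star+u)^2)$, i.e.\ you treat the contractions as coming from $(\gMink + h^\star + u)^{-1}$. But $\Pbb^\star_{\alpha\beta}[u,u] := \Pbb_{\alpha\beta}(g^\star,g^\star;\del u,\del u)$ contracts against $g^\star = \gMink + h^\star$ \emph{only}; the perturbation $u$ enters only through the $\del u \, \del u$ slots. Your expansion would generate a cubic tail $|u|_{p_3}\,|\del u|_{p_1}\,|\del u|_{p_2}$, which is simply not present on the right-hand side of the first inequality (the estimate for $|\slashed{\Pbb}^{\star\H}[u]|_p$ has no $\Sbb_p^\H$ term to absorb it). The correct expansion is of $(g^\star)^{-1} = \gMink^{-1} - \gMink^{-1}h^\star\gMink^{-1}+\cdots$, giving exactly the $|h^\star|_{p_3}|\del u|_{p_1}|\del u|_{p_2}$ remainder and nothing else.

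Second, and more seriously, your treatment of $\Pbb_{00}^{\star\H}$ via the wave-gauge identity $\Gamma^\alpha=0$ and the quantity $w[u]_\gamma$ is not the right mechanism here. The lemma is a calculus estimate valid for \emph{any} sufficiently small $u$; it does not assume $u$ solves any equation, and indeed $\Sbb_p^\H[u]$ from \eqref{eq2-04-12-2020-hyper} is a pure pointwise-norm expression containing no $w[u]$. (Compare with the Euclidean-merging statement in Lemma~\ref{lem1-31-01-2021}, whose bound \emph{does} feature $\mathbb{W}^\EM$ built from $w[v]_\gamma$ as in \eqref{eq2-04-12-2020-v} --- that is precisely the case where a gauge ingredient is needed and is made visible in the bound.) In the hyperboloidal domain the dangerous bad--bad product $\del_t u_{00}\,\del_t u_{00}$ is tamed \emph{algebraically}: rewriting the Minkowski contractions in the semi-hyperboloidal frame, $g_\Mink^{\H\,00}=-(s/t)^2$, so the coefficient multiplying $(\del_t u_{00})^2$ in both pieces of $\Pbb_{00}$ carries a factor $(g_\Mink^{\H 00})^2 = (s/t)^4 \leq (s/t)^2$, placing it directly under the $(s/t)^2|\del u|_{p_1}|\del u|_{p_2}$ term. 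All remaining contractions pick up at least one tangential component $\usH_{0a}$, $\usH_{ab}$ (controlled by $|\del\usH|$) or a good derivative $\delsH u$, and the derivatives that fall on the frame coefficients $x^a/t$ produce the $t^{-1}|u|_p$-type debris that $\Sbb_p^\H$ is designed to collect. So no gauge identity is required; in fact invoking it here would prove a weaker, gauge-dependent statement than what the lemma asserts.

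Once those two points are corrected, the final high-order step (applying $Z=\del^I L^J\Omega^K$, Leibniz, and absorbing commutators with the frame coefficients and the reference-metric contractions via the hierarchy of Proposition~\ref{prop1-12-02-2020-interior}) is indeed the right way to finish.
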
 

%=======================================================================================

\section{Nonlinear stability for the wave-Klein-Gordon model} 
\label{section-555}

We work with the proposed Euclidean--hyperboloidal foliation  $\bigcup_{s \geq 1} \Mscr_s$ 
of Minkowski spacetime.
For any integer $N \geq 1$ and any values $s \geq 1$ of the foliation parameter,
we consider the following weighted energy functional for the wave component $u$ on the hypersurface $\Mscr_s$: 
\begin{equation}
\Eenergy^N_\kappa(s,u) := \sum_{\ord(Z)\leq N} \Eenergy_{\kappa}(s, Z u), 
\qquad \Fenergy_{\kappa}^N (s,u)
:= 
\big( \Eenergy^N_\kappa(s,u) \big)^{1/2},
\end{equation}
while for the Klein-Gordon field with mass coefficient $c>0$ we set 
\begin{equation}
\Eenergy^N_{c,\kappa}(s, \phi) := \sum_{\ord(Z) \leq N} \Eenergy_{c, \kappa}(s, Z \phi),\qquad
\Fenergy^N_{c,\kappa}(s, \phi) 
:= \big( \Eenergy^N_{c,\kappa}(s, \phi) \big)^{1/2}.
\end{equation}
The summations above are over all ordered admissible operators of order $\leq N$. Later on, we will also need the notation
 $\Fenergy_\kappa^{p, k}(s,u)$ for the energy defined by restricting the summation to fields with $\ord(Z) \leq p$ and $\rank(Z) \leq k$. We also write $\Fenergy_\kappa^{\ME,p, k}(s,u)$  and $\Fenergy^{\H,p, k}(s,u)$ 
 when the integrals defining the energy is restricted to the domains $\MME$ or $\MH$, respectively. (The subscript $\kappa$ is irrelevant and omitted for the energy in the hyperboloidal domain.)

We now state our main result for
 the model.

\begin{theorem}[Global existence theory for the wave-Klein-Gordon model]
\label{theo-stable-model}
Consider the nonlinear wave-Klein-Gordon model \eqref{eq 1 model} with given real constants
$P^{\alpha\beta}, R, H^{\alpha\beta}$, and a mass coefficient $c>0$. 
For $\kappa > 3/4$ and any integer $N \geq 11$, there exists a sufficiently small $\eps>0$ such that the  initial value problem associated with the system \eqref{eq 1 model} admits a global-in-time solution $(u,v)$, 
 provided the data set on the initial hypersurface  has sufficiently small energy in the sense that  (for a fixed $C_0>0$) 
\begin{equation}\label{eq h initi}
\Fenergy^N_\kappa(s_0,u) + \Fenergy^N_{c, \kappa}(s_0,\phi)
\leq C_0 \, \eps,
\end{equation}
\begin{equation}\label{eq l initi}
\Fenergy_\kappa^{N-4}(s_0,u) + \Fenergy^{N-4}_{c, \kappa}(s_0,\phi)
\leq C_0 \, \eps
\end{equation}
and, for every admissible field $Z$ with $\ord(Z) \leq N-4$, the solution $w$ to the following 
free wave problem 
\be
\aligned
\Box w = 0, \quad 
\qquad
 w |_{\Mscr_{s_0}} & = w_0, \quad\qquad  & \del_t w |_{\Mscr_{s_0}} & = w_1,  
\\
  w_0 & = Z u  |_{\Mscr_{s_0}}, \quad      & w_1 & = \del_t Z u  |_{\Mscr_{s_0}}, 
\endaligned
\ee
satisfies the decay property 
\begin{equation}\label{eq172-bis-22}
|w(t,x)| \lesssim \eps \, (r+t)^{-1}. 
\end{equation}
\end{theorem}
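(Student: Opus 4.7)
The proof will proceed by a continuity/bootstrap argument on the Euclidean-hyperboloidal slab $\Mscr_{[s_0,s_1]}$, exploiting a hierarchy between a high-order level $N$, whose energies will be allowed to grow as $s^{\delta}$, and a low-order level $N-4$, whose energies remain essentially bounded. On the maximal interval $[s_0,s_1]$ on which a smooth solution exists, I would postulate
\begin{equation*}
\Fenergy_\kappa^N(s,u) + \Fenergy^N_{c,\kappa}(s,\phi) \leq C_1\eps\, s^{\delta},
\qquad
\Fenergy_\kappa^{N-4}(s,u) + \Fenergy^{N-4}_{c,\kappa}(s,\phi) \leq C_1\eps,
\end{equation*}
with $0<\delta \ll \kappa-3/4$ and $C_1$ sufficiently large, and show that each of these inequalities is in fact valid on $[s_0,s_1]$ with $C_1/2$ in place of $C_1$. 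Combined with local existence, the continuity criterion, and the blow-up criterion, this strict improvement extends the solution past $s_1$ and ultimately forces $s_1=+\infty$.

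The intermediate step converts the bootstrap bounds into pointwise decay and then feeds the decay back into the energy inequalities. In the hyperboloidal region $\MH_s$, the Sobolev estimate of Proposition~\ref{prop:glol-Soin} applied to $Zu$ and $Z\phi$ with $\ord(Z)\leq N-6$ gives the crude decay $\lesssim \eps\, t^{-3/2}$; the sharper Klein-Gordon behaviour $(s/t)^{\eta}s^{3/2}\big(|Z\phi|+(s/t)|\del Z\phi|\big)\lesssim \eps$ is then extracted from Proposition~\ref{prop1-23-11-2022-M} after identifying the coefficients of the modified Klein-Gordon operator with $H^{\alpha\beta}u$ and using the low-order bound on $u$ to verify the smallness hypothesis $|\Hb_{t,x}|\leq 1/3$. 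In $\MME_s$, Proposition~\ref{pro204-11-2} yields $r\crochet^\eta |Zu|\lesssim \eps$ and Proposition~\ref{lem 1 d-KG-e} supplies $|Z\phi|_{p,k}\lesssim \eps\, r^{-2}\crochet^{1-\eta}$ near the cone together with $\lesssim \eps\, r^{-1-\eta}$ far away; the wave component is further refined through Kirchhoff's formula (Proposition~\ref{Linfini wave}, Cases~2--3) applied to the splitting $u=\Box^{-1}_\init+\Box^{-1}_\source$, whose initial-data contribution is bounded by the hypothesis~\eqref{eq172-bis-22}. With these $L^\infty$ bounds in hand I would close the energy by commuting the equations with an arbitrary admissible operator $Z$ and applying the flat identity~\eqref{eq1-27-11-2022-M} to $Zu$ and its curved analogue (with metric $g^{\alpha\beta}=\gMink^{\alpha\beta}+H^{\alpha\beta}u$) to $Z\phi$: the wave source $Z(P^{\alpha\beta}\del_\alpha\phi\del_\beta\phi+R\phi^2)$ is dominated by Leibniz with one factor placed in $L^\infty$ via the decay above and the other in $L^2$ via the high-order energy, while the quasi-linear commutator $[Z,H^{\alpha\beta}u\,\del_\alpha\del_\beta]\phi$ is controlled through the hierarchy Propositions~\ref{lm 2 dmpo-cmm-H} and~\ref{prop1-12-02-2020-interior}; the Hardy inequality of Proposition~\ref{lem1-hardy} and the Poincar\'e-type inequality of Proposition~\ref{propo-Poincare-ext} are invoked to reinstate the undifferentiated $\phi^2$ and $Hu\,\del\del\phi$ contributions into the energy functional, and a Gr\"onwall argument then delivers the strict improvement of the bootstrap constants provided $\eps$ is small enough.

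The main obstacle is expected to be the top-order part of the commutator $[Z,H^{\alpha\beta}u\,\del_\alpha\del_\beta]\phi$ in $\MH_s$, specifically the $T^{\textbf{hier}}$ contribution $|\Hu^{00}|\,|\del\del Z\phi|$ from Proposition~\ref{prop1-12-02-2020-interior}: the top-order Hessian of $\phi$ is not directly controlled in $L^2$ with a favourable weight, so one must trade $\del_0\del_0\phi$ against the Klein-Gordon equation itself (gaining $c^2\phi$ plus nonlinear terms already estimated) and absorb the remainder using the near-$1/r$ decay of $u$ propagated by Kirchhoff from the initial assumption~\eqref{eq172-bis-22}. A dual difficulty appears in the wave equation's quasi-null source $P^{\alpha\beta}\del_\alpha\phi\del_\beta\phi$ at order $N$: one factor of $\del\phi$ must be placed in $L^\infty$ and only the refined bound of Proposition~\ref{prop1-23-11-2022-M} delivers the extra factor $(s/t)$ needed for integrability in $s$, and it is at this point that the restriction $\kappa>3/4$ enters decisively, through the weighted Poincar\'e inequality~\eqref{Poincare-trex-sans-zeta} applied to undifferentiated $\phi$-factors in the Euclidean-merging domain.
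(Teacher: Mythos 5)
Your overall strategy is correct in outline: bootstrap with a high/low-order split, pointwise decay extracted from the Sobolev, Kirchhoff, and sharp Klein-Gordon propositions, then energy estimates closed via the hierarchy commutators. However, two substantive points diverge from the paper's proof and, as written, would cause the argument to fail.

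First, the bootstrap ansatz is not the one that closes. You postulate
$\Fenergy^N_\kappa(s,u)+\Fenergy^N_{c,\kappa}(s,\phi)\lesssim\eps\,s^{\delta}$ and
$\Fenergy^{N-4}_\kappa(s,u)+\Fenergy^{N-4}_{c,\kappa}(s,\phi)\lesssim\eps$, but the paper's bootstrap \eqref{eq h bootstrap}--\eqref{eq l bootstrap} allows $\Fenergy^N_{c,\kappa}(s,\phi)\lesssim\eps\,s^{1/2+\delta}$ at top order and $\Fenergy^{N-4}_{c,\kappa}(s,\phi)\lesssim\eps\,s^{\delta}$ at low order; only the wave component stays bounded at low order. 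This is forced by the structure of the quasi-linear commutator $[Z,H^{\alpha\beta}u\,\del_\alpha\del_\beta]\phi$: the terms with many boosts landing on $u$ produce, after integration in $s$, exactly an $s^{1/2}$ loss for the Klein-Gordon energy (cf.~\eqref{eq3-30-11-2022-M} and the ensuing Gr\"onwall step in \eqref{eq3-02-12-2022-M}). With your stronger ansatz the improved bound is not reachable and the continuity argument breaks.

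Second, the key mechanism the paper uses in the hyperboloidal domain is a \emph{rank induction}: within a fixed order $p$, the number of boosts and rotations $k$ is increased one unit at a time, with auxiliary sup-norm quantities $\Abf^{\flat}_k$, $\Abf^{\sharp}_k$, $\Bbf_k$ satisfying a coupled system of Gr\"onwall-type inequalities \eqref{eq1-26-11-2022-M}, \eqref{eq1-26-11-2022-Mdeux}, \eqref{eq2-26-11-2022-M} that are closed with a small loss $s^{2k\theta}$ at each step (where $\theta\leq\delta/(10N)$). This hierarchy is what resolves the top-order commutator term you flag: in $T^{\textbf{hier}}$ of Proposition~\ref{prop1-12-02-2020-interior}, the leading contribution $|u|\,|\del\del Z\phi|$ carries a rank \emph{strictly lower than} $k$, and the $t^{-1}$ decay of $u$ (propagated through Kirchhoff from \eqref{eq172-bis-22}, yielding $\Bbf_k(s)\lesssim\eps\,s^{2k\theta}$) makes the inductive step integrable. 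Your proposal instead suggests trading $\del_0\del_0\phi$ against the Klein-Gordon equation; this is not the paper's device here, and while it is a sensible idea, without the rank induction you still cannot absorb the commutator terms with $k$ boosts on $u$ and few derivatives on $\phi$. Finally, the role of $\kappa>3/4$ enters chiefly through the exponent bookkeeping in the Kirchhoff estimate and the boundary term \eqref{eq10-23-11-2022-M} (where $\kappa-3/4\geq\delta$ is invoked), rather than primarily through the Poincar\'e inequality, which only needs $\kappa>1/2$.
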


On the other hand, similarly as we did for the Einstein equations, we could also introduce a notion of reference solution and establish the above existence theory under milder decay conditions.  On the other hand, we point out that \eqref{eq172-bis-22}
is satisfied not 
only by compactly supported initial data, but also by a class of non-compact initial data described in \cite[Section 10]{PLF-YM-main}. 
To proceed, we fix some exponents 
\begin{equation}\label{eq:choiceexpo}
3/4<\kappa <1, 
\qquad
0 < \delta\ll \kappa-3/4, 
\end{equation}
where the exponent $\delta$ is used to allow a (mild) growth of our energy norms. 
(For instance, it is  possible to take $\delta = (\kappa-3/4)/20$.) 
Our {\sl bootstrap energy assumptions} are stated on a time interval $s \in [s_0, s_1]$ and are based 
on a sufficient large constant $C_1>0$ (in comparison to $C_0$ in \eqref{eq h initi} and \eqref{eq l initi}) which 
will be chosen later on: 
\begin{equation}\label{eq h bootstrap}
s^{-\delta} \, \Fenergy^N_\kappa(s,u)
+ s^{-1/2-\delta} \, \Fenergy^N_{c, \kappa}(s, \phi)
\leq C_1\eps,
\end{equation}
\begin{equation}\label{eq l bootstrap}
\Fenergy_\kappa^{N-4}(s,u) + s^{-\delta} \, \Fenergy^{N-4}_{c, \kappa}(s, \phi)
\leq C_1\eps.
\end{equation} 
We emphasize that, in \eqref{eq h bootstrap}, the high-order energy norm of the wave component may grow at the mild rate $s^\delta$, 
while the high-order energy norm of the Klein-Gordon component may grow at the rate $s^{1/2+\delta}$. On the other hand, in \eqref{eq l bootstrap} a uniform control is required on the low-order energy norm of the wave component, 
while the Klein-Gordon component may suffer a mild growth~$s^\delta$. 

%----------------------------------------------------------------

\subsection{High-order estimates} 

Functional inequalities together with calculus rules 
 allow us to derive estimates at arbitrary orders and for instance we arrive at the following statement. 

\begin{proposition}[Sobolev decay for wave fields in the Euclidean-merging domain] 
\label{lem 2 d-e-I}
For all $\eta \in [0,1)$ and all functions $u$, one has (for $k \leq p$)
\begin{subequations}
\begin{equation} \label{eq 1 lem 2 d-e-I}
\aligned
& 
\big\| r  \, \crochet^\eta \, |\del u|_{p,k} \big\|_{L^\infty(\MME_s)}  
+ \big\|  r^{1+ \eta} \, | \delsN  u |_{p,k} \big\|_{L^\infty(\MME_s)} 
\\
& \lesssim (1-\eta)^{-1} \, \Fenergy_\eta^{\ME,p+3, k+3}(s,u)
\endaligned
\end{equation} 
and, for $\eta \in (1/2, 1)$,  
\begin{equation} \label{eq 1 lem 2 d-e-I-facile}
\| r \, \crochet^{-1+\eta} |u|_{N-2}\|_{L^\infty(\MME_s)} 
\lesssim (2\eta-1)^{-1} 
\, \Fenergy_\eta^{\ME,N}(s,u) + \Fenergy_{\eta}^{0}(s,u). 
\end{equation}
\end{subequations}
\end{proposition}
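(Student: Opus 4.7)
The plan is to deduce both estimates from the weighted Sobolev inequality of Proposition~\ref{pro204-11-2}, after translating the resulting weighted $L^2$ norms into the energy $\Fenergy_\eta^{\ME}$.

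For \eqref{eq 1 lem 2 d-e-I}, fix any admissible operator $Z$ with $\ord(Z)\le p$ and $\rank(Z)\le k$ and apply \eqref{ineq 2 sobolev} to the functions $\del_\alpha Z u$. This yields a pointwise bound on $r\crochet^\eta|\del_\alpha Zu|$ by a sum of $L^2$ norms of the form $\|\crochet^\eta\delsME^I\Omega^J\del_\alpha Zu\|_{L^2(\MME_s)}$ with $|I|+|J|\le 2$. Via the commutator estimates of Lemma~\ref{prop-newpropo}, together with the observation that each $\delsME$ is a combination of coordinate derivatives (so only the order is raised) while each $\Omega$ raises the rank by one, each such term reduces --- modulo harmless lower-order contributions treated by induction on the order --- to $\|\crochet^\eta\del Z'u\|_{L^2(\MME_s)}$ with $\ord(Z')\le p+2$ and $\rank(Z')\le k+2$. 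Since the energy density contains both $\zeta^2|\del_t Z'u|^2\crochet^{2\eta}$ and $|\delsEH Z'u|^2\crochet^{2\eta}$, and since $\zeta$ is uniformly bounded below away from a bounded-thickness merging strip whose contribution is absorbed in a remainder, every such $L^2$ norm is controlled by $\Fenergy_\eta^{\ME,p+3,k+3}(s,u)$.

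To reach the stronger weight $r^{1+\eta}$ for $\delsN u$, I use the frame identity $\delsN_a = \delsEH_a + (x^a/r)(1-\del_r\Time)\del_t$; in $\MME$ one has $1-\del_r\Time = 1 - r\xi/(s^2+r^2)^{1/2}$, which is pointwise dominated by $\zeta^2$. Therefore $|\delsN Zu|$ is controlled by $|\delsEH Zu| + \zeta^2|\del_t Zu|$, both of which appear in the energy density and can be handled by the same Sobolev-plus-commutator argument as above. The additional $r^\eta$ decay characteristic of a good derivative is recovered by integrating $\delsN Zu$ along outgoing radial rays from spatial infinity: the associated integral $\int_r^\infty s^{-1-\eta}\,ds$ together with the weighted Hardy bound of Proposition~\ref{lem1-hardy} supplies exactly the diverging factor $(1-\eta)^{-1}$.

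For the low-order estimate \eqref{eq 1 lem 2 d-e-I-facile} the aim is to control $r\crochet^{-1+\eta}|u|_{N-2}$, whose weight $\crochet^{-1+\eta}$ is sharper than the $\crochet^\eta$ available in Proposition~\ref{pro204-11-2}. To bridge this gap I apply, for each admissible $Z$ with $\ord(Z)\le N-2$, the Poincar\'e inequality \eqref{Poincare-trex-sans-zeta} of Proposition~\ref{propo-Poincare-ext} with $\delta=\eta-1/2>0$, which produces the constant $(2\eta-1)^{-1}\simeq 1+\delta^{-1}$; the first term on its right-hand side is controlled by $\Fenergy_\eta^{\ME,N}(s,u)$, and the second term by the weighted Hardy inequality of Proposition~\ref{lem1-hardy}, supplying the $\Fenergy_\eta^0(s,u)$ contribution. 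A final application of Proposition~\ref{pro204-11-2} then converts this $L^2$ control into the claimed $L^\infty$ bound. The main obstacle throughout is the mismatch between the tangent frame $\delsEH$ on which the energy is built and the semi-null frame $\delsN$ adapted to near-light-cone decay: the defect $\delsN - \delsEH \sim \zeta^2\del_t$ is precisely what produces the $(1-\eta)^{-1}$ divergence and forces the estimate to degenerate as $\eta\to 1$.
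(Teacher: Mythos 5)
Your overall architecture (weighted Sobolev from Proposition~\ref{pro204-11-2}, commutator estimates from Lemma~\ref{prop-newpropo}, Poincar\'e from Proposition~\ref{propo-Poincare-ext}, Hardy from Proposition~\ref{lem1-hardy}) is the right set of tools, and the treatment of the $r\crochet^\eta|\del u|_{p,k}$ term is essentially on track. But there are two concrete gaps.

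\textbf{The $r^{1+\eta}|\delsN u|$ bound.} The frame identity $\delsN_a = \delsEH_a + (x^a/r)(1-\del_r\Time)\del_t$ together with $1-\del_r\Time\lesssim\zeta^2$ is correct, but after running it through the Sobolev-plus-commutator machine it only produces an $L^\infty$ bound with weight $r\crochet^\eta$, not $r^{1+\eta}$. Since $\crochet\lesssim r$ throughout $\MME_s$ (indeed $\crochet = r-t+3$ there), the desired weight $r^{1+\eta}$ is \emph{strictly stronger} than $r\crochet^\eta$ near the light cone, by a factor $(r/\crochet)^\eta$ that can be as large as $r^\eta$. You then try to recover this missing factor by ``integrating $\delsN Zu$ along outgoing radial rays from spatial infinity.'' This step does not work: integrating a first derivative along rays reconstructs the undifferentiated function $Zu$, not an improved pointwise bound on $\delsN Zu$; and even taken at face value, the integral $\int_r^\infty s^{-1-\eta}\,ds = \eta^{-1}r^{-\eta}$ produces a factor $\eta^{-1}$, not the $(1-\eta)^{-1}$ appearing in \eqref{eq 1 lem 2 d-e-I}. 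The gain in decay for $\delsN_a$ has to come from an algebraic identity such as $r\delsN_a = L_a + (r-t)\del_a$ (or the equivalent decomposition into $L$-, $\Omega$- and $\del$-parts), which exchanges a power of $r$ for a boost and a $(r-t)$-weighted bad derivative; the $(r-t)$-factor is then absorbed into $\crochet$ and the boost term is handled separately. Your proposal never introduces this structure.

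\textbf{The estimate \eqref{eq 1 lem 2 d-e-I-facile}.} You suggest first applying the Poincar\'e and Hardy inequalities to get $L^2$ control of $\crochet^{-1+\eta}u$, and then ``a final application of Proposition~\ref{pro204-11-2}'' to pass to $L^\infty$. But Proposition~\ref{pro204-11-2} is only stated for exponents $\eta\ge 0$, and here the relevant weight is $\crochet^{-1+\eta}$ with $-1+\eta<0$; you cannot feed a negative exponent into that inequality. The correct order is the reverse: first apply the Sobolev inequality \eqref{ineq 2 sobolev} to the auxiliary function $\crochet^{-1}Zu$ (for $\ord(Z)\le N-2$), which is legitimate because the Sobolev weight stays at $\crochet^\eta$ while the $\crochet^{-1}$ factor rides along; then expand $\delsME^I\Omega^J(\crochet^{-1}Zu)$ by Leibniz (using $\Omega\crochet=0$ and $|\delsME^{I'}\crochet^{-1}|\lesssim\crochet^{-1}$ since $\crochet\ge 1$) to reduce everything to terms of the form $\|\crochet^{-1+\eta}|u|_{N}\|_{L^2(\MME_s)}$; and only then invoke the $L^2$ Hardy--Poincar\'e bound (essentially Proposition~\ref{eq3-15-05-2020}), which is where the factor $(2\eta-1)^{-1}$ and the $\Fenergy_\eta^0(s,u)$ term originate. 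As written, your argument never generates a valid $L^\infty$ inequality with the negative weight.

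A smaller point: your claim that the $\zeta$-dependence can be absorbed because ``$\zeta$ is uniformly bounded below away from a bounded-thickness merging strip'' is not justified as stated --- the merging strip $\MM_s$ does have unit thickness, but $\zeta$ drops to order $s^{-1}$ at its inner boundary, so the $\zeta^{-1}$ loss there is not controlled by a thickness argument alone. What actually saves the day is that the Sobolev inequality's right-hand side involves only the tangent operators $\delsME^I\Omega^J$, which appear in the energy density \emph{without} the $\zeta$-weight, so the $\zeta^{-1}$ loss never materializes; your argument should appeal to that tangential structure rather than to the size of the strip.
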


% The following technical observations will be also used.  
 
\begin{proposition}[Sobolev decay for wave fields in the hyperboloidal domain]\label{prop1-11-22-2022-M}
For all function $u$ defined in $\MH_{[s_0,s_1]}$, one has (for $k\leq p$)
\begin{equation}\label{eq1-22-11-2022-M}
\|t^{3/2}(s/t) \, |\del u|_{p,k}\|_{L^{\infty}(\MH_s)} + \|t^{3/2}|\delus u|_{p,k}\|_{L^{\infty}(\MH_s)}
\lesssim \Fenergy^{\H,p+2,k+2}(s,u),
\end{equation}
\begin{equation}\label{eq2-22-11-2022-M}
\aligned
& \|t^{3/2}(s/t) \, |\del \phi|_{p,k}\|_{L^{\infty}(\MH_s)} + 
\|t^{3/2}|\delus \phi|_{p,k}\|_{L^{\infty}(\MH_s)} 
+ c\|t^{3/2}|\phi|_{p,k}\|_{L^{\infty}(\MH_s)}
\\
& \lesssim \Fenergy_c^{\H,p+2,k+2}(s,u).
\endaligned
\end{equation}
\end{proposition}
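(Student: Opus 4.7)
The plan is to reduce both estimates to the hyperboloidal sup-norm Sobolev inequality of Proposition~\ref{prop:glol-Soin}, applied to carefully chosen auxiliary fields whose $L^2$ norms (and those of their boosts up to order $2$) are directly controlled by the energy $\Fenergy^{\H, \cdot, \cdot}$. Throughout, I fix an admissible operator $Z$ with $\ord(Z)\leq p$ and $\rank(Z)\leq k$, and exploit the two elementary identities available in $\MH_s$: $\delus_a = L_a/t$ and $\del_a = t^{-1} L_a - (x^a/t)\del_t$, together with $(r/t)\leq 1$.

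For the $(s/t)|\del u|_{p,k}$ piece of \eqref{eq1-22-11-2022-M}, I would apply Proposition~\ref{prop:glol-Soin} to the auxiliary function $v_\alpha := (s/t)\del_\alpha Z u$, getting
$$\sup_{\MH_s}\, t^{3/2}|v_\alpha|\,\lesssim\,\sum_{|J|\leq 2}\|L^J v_\alpha\|_{L^2(\MH_s)}.$$
Commuting the $L^J$'s through $(s/t)$ and $\del_\alpha$ via Lemma~\ref{prop-newpropo} and the identities $L_b(s) = 0$ and $L_b(1/t) = -x^b/t^2$, each $L^J v_\alpha$ is a bounded combination of terms of the schematic form $(s/t)\del_\beta L^{J'} Z u$ with $|J'|\leq |J|$, modulo lower-order errors with the same structure. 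Then the decomposition $\del_a = \delus_a - (x^a/t)\del_t$ combined with $r/t\leq 1$ gives $|(s/t)\del_a Z'u|\lesssim |\delus_a Z'u| + (s/t)|\del_t Z'u|$ for any $Z'$ of order $\leq p+2$ and rank $\leq k+2$; since both $(s/t)|\del_t \cdot|$ and $|\delus_a \cdot|$ appear directly in the hyperboloidal energy density, the $L^2$ norm is controlled by $\Fenergy^{\H,p+2,k+2}(s,u)$. For the $|\delus u|_{p,k}$ piece, I would apply Sobolev to $\delus_a Z u$ and use the commutator $[L_b,\delus_a]$, which expands into a combination of $\delus$ and $t^{-1}L$ terms at lower boost-rank; the $L^2$ bound is then again immediate from the energy.

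For \eqref{eq2-22-11-2022-M}, the first two terms are handled identically after replacing $\Fenergy^\H$ by $\Fenergy_c^\H$, since the Klein-Gordon energy contains the wave energy. For the mass term, I would apply Proposition~\ref{prop:glol-Soin} to $c\,Z\phi$ directly,
$$\sup_{\MH_s}\, t^{3/2}|cZ\phi|\,\lesssim\,\sum_{|J|\leq 2}\|c L^J Z\phi\|_{L^2(\MH_s)},$$
and observe that $L^J Z$ is admissible of order $\leq p+2$ and rank $\leq k+2$; since $\crochet\equiv 1$ in $\MH_s$ and $c^2 u^2$ sits in the Klein-Gordon energy density, the right-hand side is bounded by $\Fenergy_c^{\H,p+2,k+2}(s,\phi)$.

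The main obstacle is purely combinatorial bookkeeping: keeping track, inside $L^J v_\alpha$ and $L^J(\delus_a Zu)$, of the commutator terms produced by $L$'s traversing $(s/t)$, $\del_\alpha$ and $\delus_a$, and verifying that each schematic term still fits under an admissible operator of order $\leq p+2$ and rank $\leq k+2$. These commutator manipulations are entirely standard in the Euclidean-hyperboloidal framework (cf.\ Lemma~\ref{prop-newpropo} and the calculus rules of \cite{PLF-YM-main}), so no new analytic ingredient beyond Proposition~\ref{prop:glol-Soin} and the explicit form of the hyperboloidal energy density is required.
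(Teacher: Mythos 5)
The paper states this proposition without supplying a proof (it only remarks that ``Functional inequalities together with calculus rules allow us to derive estimates at arbitrary orders''), and your argument is exactly the intended route: apply the hyperboloidal sup-norm Sobolev inequality of Proposition~\ref{prop:glol-Soin} to the weighted quantities $(s/t)\del_\alpha Zu$, $\delus_a Zu$, $cZ\phi$, then commute the two boosts through $(s/t)$, $\del_\alpha$ and $\delus_a$ using $L_b(s)=0$, $\delus_a=t^{-1}L_a$, $[L_b,\delus_a]=-(x^a/t)\delus_b$ and $[L_b,\del_\alpha]$, and finally absorb the resulting $(s/t)\del$ and $\delus$ terms into the hyperboloidal energy density at order $p+2$, rank $k+2$ via $\del_a=\delus_a-(x^a/t)\del_t$. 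Your proof is correct, the bookkeeping is sound, and no ingredient beyond Proposition~\ref{prop:glol-Soin}, Lemma~\ref{prop-newpropo}, and the explicit energy density is needed.
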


\begin{proposition}[Hardy-Poincar\'e inequality for high-order derivatives] 
\label{eq3-15-05-2020}
For any $\eta>1/2$ and any function $u$ defined in $\Mscr_{[s_0,s_1]}$ and all $s \in [s_0, s_1]$ one has 
$$
\| \crochet^{-1 + \eta} |u|_{p,k}\|_{L^2(\MME_s)} 
\lesssim \big(1+ (2 \eta-1)^{-1} \big) \, \Fenergy_\eta^{\ME,p,k}(s,u) + \Fenergy_{\eta}^{0}(s,u). 
$$
\end{proposition}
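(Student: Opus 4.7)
The plan is to apply the weighted Poincar\'e inequality of Proposition~\ref{propo-Poincare-ext} to each differentiated function $Zu$ for every admissible operator $Z$ with $\ord(Z)\leq p$ and $\rank(Z)\leq k$, then handle the residual Hardy-weighted term by invoking Proposition~\ref{lem1-hardy}, and finally sum the resulting $L^2$ estimates over all admissible $Z$.

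Fixing such an operator $Z$, I would set $\delta := \eta - 1/2 > 0$, so that the Poincar\'e constant $(1+\delta^{-1})$ is comparable to $(1+(2\eta-1)^{-1})$. Proposition~\ref{propo-Poincare-ext} applied to $v = Zu$ on $\MME_s$ then yields
$$
\|\crochet^{-1+\eta} Zu\|_{L^2(\MME_s)} \lesssim \bigl(1+(2\eta-1)^{-1}\bigr)\|\crochet^\eta \delsME Zu\|_{L^2(\MME_s)} + \|r^{-1}\crochet^\eta Zu\|_{L^2(\MME_s)}.
$$
The first term on the right-hand side is a weighted tangential derivative; since $\delsME$ and $\delsEH$ are equivalent up to uniformly bounded frame-change factors recorded in Section~\ref{sect-geomEHF}, this quantity is one of the summands appearing in the hot energy $\Fenergy_\eta^{\ME,p,k}(s,u)$ and is therefore controlled by it.

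For the Hardy-weighted term, I would enlarge the domain trivially to $\Mscr_s$ and apply Proposition~\ref{lem1-hardy} to $Zu$, which yields control by $\Fenergy_\eta(s,Zu)$. When $Z = \mathrm{Id}$ this contribution is exactly the remainder $\Fenergy_\eta^0(s,u)$ appearing on the right-hand side of the statement; when $\ord(Z)\geq 1$, the bound $\Fenergy_\eta(s,Zu)$ is absorbed into the tangential energy already present in $\Fenergy_\eta^{\ME,p,k}(s,u)$, upon using the commutator estimates of Lemma~\ref{prop-newpropo} to reorganize $\delsEH Zu$ in terms of admissible operators of order at most $p$ and rank at most $k$.

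Squaring, summing over all admissible $Z$ with $\ord(Z)\leq p$ and $\rank(Z)\leq k$, and taking a square root then gives the claim. The hard part will be the careful tracking of the Poincar\'e constant so as not to lose the sharp factor $(2\eta-1)^{-1}$, which is the sole quantitative witness of the hypothesis $\eta > 1/2$; since all subsequent steps introduce only universal multiplicative constants, the scaling in $\eta$ propagates verbatim from the Poincar\'e step to the final estimate.
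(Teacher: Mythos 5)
Your overall strategy---apply Proposition~\ref{propo-Poincare-ext} to each $Zu$, handle the residual $\|r^{-1}\crochet^\eta Zu\|_{L^2}$ term via the Hardy inequality, and sum---is certainly the right starting point, and the bookkeeping of the Poincar\'e constant $(1+\delta^{-1})=(1+(2\eta-1)^{-1})$ is correct. However, there is a genuine gap in how you dispose of the Hardy-weighted term when $\ord(Z)\geq 1$, and it changes the form of the right-hand side.

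The problem is the domain of integration. When you ``enlarge the domain trivially to $\Mscr_s$'' and invoke Proposition~\ref{lem1-hardy}, the bound you obtain is $\Fenergy_\eta(s,Zu)$, which is the energy of $Zu$ over the \emph{entire} slice $\Mscr_s = \MH_s\cup\MME_s$. You then claim this quantity is ``absorbed into the tangential energy already present in $\Fenergy_\eta^{\ME,p,k}(s,u)$,'' but $\Fenergy_\eta^{\ME,p,k}$ is by definition restricted to the Euclidean-merging domain $\MME_s$. The hyperboloidal contribution $\int_{\MH_s}\bigl(\zeta^2|\del_t Zu|^2 + \sum_a|\delsEH_a Zu|^2\bigr)\crochet^{2\eta}\,dx$ is simply not contained in $\Fenergy_\eta^{\ME,p,k}(s,u)$, and no amount of commuting with Lemma~\ref{prop-newpropo} will move an integral from $\MH_s$ to $\MME_s$. (The appeal to Lemma~\ref{prop-newpropo} is in any case superfluous: since $\ord(Z)\leq p$ and $\rank(Z)\leq k$, the quantity $\Fenergy_\eta(s,Zu)$ is directly one of the summands in the definition of $\Fenergy_\eta^{p,k}(s,u)$---no reorganization of $\delsEH Zu$ is needed, nor would it change the order count.) What your argument actually proves is
$$
\| \crochet^{-1 + \eta} |u|_{p,k}\|_{L^2(\MME_s)}
\lesssim \bigl(1+ (2 \eta-1)^{-1} \bigr) \, \Fenergy_\eta^{\ME,p,k}(s,u) + \Fenergy_{\eta}^{p,k}(s,u),
$$
in which the Hardy contribution carries full-slice, high-order energy. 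This is strictly weaker than the stated inequality, where the full-slice energy enters only at order zero via $\Fenergy_\eta^0(s,u)$.

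To recover the sharper form in the Proposition (which mirrors the structure of \eqref{eq 1 lem 2 d-e-I-facile}), one must not invoke the global Hardy inequality at positive order. Instead one needs a Hardy-type estimate localized to the domain $\MME_s$ itself; integrating $\del_r$ by parts along rays from $r=\rho^\Hcal(s)$ outward and exploiting that the arising boundary contribution at $r=\rho^\Hcal(s)$ either carries a favorable sign or can be eliminated by weighting with $(r-\rho^\Hcal(s))$, one obtains $\|r^{-1}\crochet^\eta Zu\|_{L^2(\MME_s)}\lesssim\|\crochet^\eta\delsEH Zu\|_{L^2(\MME_s)}$ plus at most an order-zero remainder. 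It is only for $Z=\mathrm{Id}$ that the global inequality of Proposition~\ref{lem1-hardy} is invoked, and this is precisely what produces the lone $\Fenergy_\eta^0(s,u)$ in the statement. Your proof, as written, does not distinguish between the behavior of the boundary term at positive order and at order zero, which is the crux of why the stated right-hand side is achievable.
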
 

%=======================================================================================

\section{Proof of stability}
\label{section---666}

\subsection{Direct consequences in the Euclidean-merging domain} 

\paragraph{Hardy-Poincar\'e inequality.}

Throughout the following derivation, the notation $A\lesssim B$ stands for $A\leq C_{N,\delta}B$ where
 $C_{N,\delta}$ a constant depending upon $N,\delta$. In view of the bootstrap conditions \eqref{eq h bootstrap}-\eqref{eq l bootstrap} and Proposition~\ref{eq3-15-05-2020}, we have 
\begin{equation}\label{eq1-19-11-2022-M}
\|\crochet^{-1+\kappa}|u|_p\|_{L^2(\MME_s)}\lesssim
\begin{cases}
C_1\eps,\quad & p  =  N-4,
\\
C_1\eps s^{\delta}, & p  =   N.
\end{cases}
\end{equation}

%---------------------------------------------------------------------

\paragraph{Sobolev decay.}

In view of \eqref{eq h bootstrap}-\eqref{eq l bootstrap} together with \eqref{eq1-19-11-2022-M}, we can apply 
Proposition \ref{lem 2 d-e-I} to both the wave field $u$ and the Klein-Gordon field $\phi$, 
and we obtain
\begin{equation}\label{eq2-19-11-2022-M}
\la r\ra \crochet^{\kappa}|\del u|_p + \la r\ra^{1+\kappa}|\delsN u|_p\lesssim 
\begin{cases}
C_1\vep,\quad & p  =  N-7,
\\
C_1\vep s^{\delta},\quad & p  =   N-3,
\end{cases}
\quad \text{ in } \MME_{[s_0,s_1]}, 
\end{equation}
\begin{equation}\label{eq3-19-11-2022-M}
\la r\ra\crochet^{-1+\kappa}|u|_p\lesssim
\begin{cases}
C_1\vep,\quad & p  =   N-6,
\\
C_1\vep s^{\delta}\quad & p  =   N-2, 
\end{cases}
\quad \text{ in } \MME_{[s_0,s_1]}, 
\end{equation}
\begin{equation}\label{eq4-19-11-2022-M}
\la r\ra\crochet^{\kappa}|\del \phi|_p + \la r\ra^{1+\kappa}|\delsN \phi|_p + \la r\ra\crochet^{\kappa}|\phi|_{p+1}\lesssim
\begin{cases}
C_1\vep s^{\delta}, \,  & p  =   N-7,
\\
C_1\vep s^{1/2+\delta}, \,  & p  =  N-3, 
\end{cases}
\text{ in } \MME_{[s_0,s_1]}.
\end{equation}

%---------------------------------------------------------------------

\paragraph{Bounds on the light cone.}

Along the light cone $\Lscr_{[s_0,s]} = \MH_{[s_0,s_1]}\cap \MME_{[s_0,s_1]}$, we have $\delus_a = \delsN_a - (x^a/r)t^{-1}\del_t$. By restricting \eqref{eq2-19-11-2022-M} to $\Lscr_{[s_0,s]}$ and by observing that $t\simeq s^2$, we find 
\begin{equation}
t \, |\del u|_p + t^{1+\kappa}|\delus u|_p\lesssim
\begin{cases}
C_1\eps,\quad & p  =   N-7,
\\
C_1\eps \, t^{\delta/2}, & p  =  N-3,
\end{cases}
\qquad \text{on } \Lscr_{[s_0,s]}, 
\end{equation}
and 
\begin{equation}\label{eq1-23-11-2022-M}
t \, |\del \phi|_p + t^{1+\kappa}|\delus \phi|_p + t \, |\phi|_{p+1}\lesssim
\begin{cases}
C_1\eps \, t^{\delta/2},\quad & p  =   N-7,
\\
C_1\eps \, t^{1/4+\delta/2}, & p  =   N-3,
\end{cases}
\qquad \text{on } \Lscr_{[s_0,s]}.
\end{equation}
In the same manner, from \eqref{eq3-19-11-2022-M} we deduce 
\begin{equation}\label{eq3-22-11-2022-M}
t \, |u|_p\lesssim
\begin{cases}
C_1\eps,\quad & p  =  N-6,
\\
C_1\eps \, t^{\delta/2},\quad & p  =  N-2,
\end{cases} 
\qquad \text{on } \Lscr_{[s_0,s]}
\end{equation}
and, in particular, this leads us to the bound (since $\delus_a = t^{-1}L_a$):
\begin{equation}\label{eq12-23-11-2022-M}
t^2|\delus u|_{p-1}\lesssim
\begin{cases}
C_1\eps,\quad & p   =   N-6,
\\
C_1\eps \, t^{\delta/2},\quad & p   =   N-2,
\end{cases} 
\qquad \text{on } \Lscr_{[s_0,s]}.
\end{equation}

%------------------------------------------------------------------------------------------------------------------------------------------------------ 

\subsection{Direct consequences in the hyperboloidal domain} 

\paragraph{Sobolev decay.}

In view of the Sobolev decay stated in Proposition~\ref{prop1-11-22-2022-M} in combination with the bootstrap bounds \eqref{eq h bootstrap}-\eqref{eq l bootstrap}, we have 
\begin{equation}\label{eq7-23-11-2022-M}
t^{3/2}(s/t) \, |\del u|_p + t^{3/2}|\delus u|_p 
\lesssim 
\begin{cases}
C_1\eps,\quad & p = N-6,
\\
C_1\eps s^{\delta}, & p = N-2.
\end{cases}
\quad \text{ in } \MH_{[s_0,s_1]}, 
\end{equation}
\begin{equation}\label{eq2-24-11-2022-M}
t^{3/2}(s/t) \, |\del \phi|_p + t^{3/2}|\delus \phi|_p + t^{3/2}|\phi|_p 
\lesssim 
\begin{cases}
C_1\eps s^{\delta},\quad & p = N-6,
\\
C_1\eps s^{1/2+\delta}, & p = N-2, 
\end{cases}
\quad \text{ in } \MH_{[s_0,s_1]}.
\end{equation}
Furthermore, this implies that
\begin{equation}\label{eq6-23-11-2022-M}
(s/t)^{-2+\delta}s^{1/2}|\phi|_{N-3}\lesssim C_1\eps
\quad \text{ in } \MH_{[s_0,s_1]} 
\end{equation}
and, in particular,  
\be
\aligned
|\del_r (Z\delus_a u)| 
& \lesssim  
\begin{cases}
C_1\eps \, t^{-3/2}s^{-1},\quad & \ord Z\leq N-7, 
\\
C_1\eps \, t^{-3/2}s^{-1+\delta}, & \ord Z\leq N-3,
\end{cases}
\\
& \simeq 
\begin{cases}
C_1\eps \, t^{-2}(t-r)^{-1/2},\quad & \ord Z\leq N-7, 
\\
C_1\eps \, t^{-2+\delta/2}(t-r)^{-1/2+\delta/2}, & \ord Z\leq N-3,
\end{cases}
\quad \text{ in } \MH_{[s_0,s_1]}.
\endaligned
\ee

%---------------------------------------------------------------------

\paragraph{Integration from the light cone.}

Recalling \eqref{eq12-23-11-2022-M} and performing an integration ---along a constant-time slice--- from 
the light cone $\Lscr_{[s_0,s]} = \MH_{[s_0,s_1]}\cap \MME_{[s_0,s_1]}$
 towards the center, we deduce that
\begin{equation}\label{eq8-23-11-2022-M}
|\delus u|_p\lesssim
\begin{cases}
C_1\eps \, t^{-3/2}(s/t),\quad & p { =  N-7,}
\\
C_1\eps \, t^{-3/2}(s/t)s^{\delta}, & p { =  N-3,}
\end{cases}
\quad \text{ in } \MH_{[s_0,s_1]}.
\end{equation}
In the same manner, we can integrate $\del_r u = (x^a/r)\del_au$ from $\Lscr_{[s_0,s_1]}$ towards the center on a constant-time  slice and, in view of \eqref{eq3-22-11-2022-M} and \eqref{eq7-23-11-2022-M}, we find
\begin{equation}\label{eq4-24-11-2022-M}
|u|_p\lesssim
\begin{cases}
C_1\eps \, t^{-1}(t-r)^{1/2},\quad & p = N-6,
\\
 C_1\eps \, t^{-1+\delta/2}(t-r)^{1/2+\delta/2},  & p =  N-2, 
\end{cases}
\quad \text{ in } \MH_{[s_0,s_1]}.
\end{equation}

%---------------------------------------------------------------------

\paragraph{Further estimate.}

Thanks to \eqref{eq7-23-11-2022-M}, \eqref{eq8-23-11-2022-M} and the relation 
$\Hb_{t,x}'(\lambda) = (t/s)^2\Hu^{00}\Lcal u$,  
we have
\begin{equation}\label{eq9-23-11-2022-M}
|\Hb_{t,x}'(\lambda)|\lesssim C_1\eps (s/t)^{-1/2}s^{-3/2}
\quad \text{ in } \MH_{[s_0,s_1]} 
\end{equation}
(provided $N\geq 7$). This guarantees the assumption \eqref{eq4-23-11-2022-M} in Proposition~\ref{prop1-23-11-2022-M}
(by observing that $(t/s)\lesssim \lambda_0$) and this observation will be required in Section~\ref{section----64}, below. 

%------------------------------------------------------------------------------------------------------------------------------------------------------ 

\subsection{Commutators and source terms in the Euclidean-merging domain} 

\paragraph{Pointwise decay for the Klein-Gordon component.}

By a direct application of the bounds \eqref{eq3-19-11-2022-M}-\eqref{eq4-19-11-2022-M} we have 
$$
\aligned
|H^{\alpha\beta}u\del_{\alpha}\del_{\beta}\phi|_{p-4} & \lesssim  (C_1\eps)^2 r^{-2}\crochet^{1-2\kappa}
\begin{cases}
s^{1/2+2\delta},\quad & p  =  N, 
\\
s^{2\delta},\quad & p  =   N-4,
\end{cases}
\\
& \lesssim  C_1\eps \la r\ra^{-2}\crochet^{1-\kappa}\begin{cases}
s^{1/2+2\delta},\quad & p  =   N,
\\
s^{2\delta},\quad & p  =   N-4,
\end{cases}
\quad \text{ in } \MME_{[s_0,s_1]}.
\endaligned
$$
We emphasize that, for convenience in the discussion, the left-hand side is stated here in terms of a norm at the order $(p-4)$. 
Thanks to the pointwise decay property stated in Proposition \ref{lem 1 d-KG-e} in combination with the bootstrap bounds
  \eqref{eq h bootstrap}-\eqref{eq l bootstrap}, we find 
\begin{equation}\label{eq2-23-11-2022-M}
|\phi|_{p-4}\lesssim 
\begin{cases}
C_1\eps \la r\ra^{-2}\crochet^{1-\kappa} s^{1/2+2\delta},\quad & p= N,
\\
C_1\eps \la r\ra^{-2}\crochet^{1-\kappa} s^{2\delta},\quad & p= N - 4, 
\end{cases}
\quad \text{ in } \MME_{[s_0,s_1]}
\end{equation}
and, consequently, on the light cone 
\begin{equation}\label{eq3-23-11-2022-M}
|\phi|_p\lesssim
\begin{cases}
C_1\eps \, t^{-7/4 + \delta},\quad & p= N-4,
\\
C_1\eps \, t^{-2 + \delta},\quad & p= N - 8,
\end{cases}
\quad \text{on } \Lscr_{[s_0,s]}.
\end{equation}

%---------------------------------------------------------------------

\paragraph{Pointwise decay for the wave component.}

We now apply Kirchhoff formula. We have 
$$
|\del_{\alpha}\phi \del_{\beta}\phi|_{N-4} + |\phi^2|_{N-4} \lesssim (C_1\eps)^2\la r\ra^{-2-(1/2-(3/2)\delta)}\crochet^{-1 + 2(1-\kappa)}
\quad \text{ in } \MME_{[s_0,s_1]},
$$
where we used (provided $N\geq 5$)
$$
|\del \phi|_{N-5}\lesssim C_1\eps \la r\ra^{-2}\crochet^{1-\kappa}s^{1/2+2\delta},\quad 
|\del \phi|_{N-3}\lesssim C_1\vep \la r\ra^{-1}\crochet^{-\kappa} s^{1/2+\delta}
\quad \text{ in } \MME_{[s_0,s_1]}.
$$
Therefore, we have  the following low-order, pointwise bound on the source-term of the wave equation: 
\begin{equation}\label{eq1-24-11-2022-M}
|\Box u|_{N-4}\lesssim (C_1\eps)^2\la r\ra^{-2-(1/2-(3/2)\delta)}\crochet^{-1 + 2(1-\kappa)}
\quad \text{ in } \MME_{[s_0,s_1]}.
\end{equation}
In turn, we can apply Proposition~\ref{Linfini wave} (Case 2 therein) within the domain $\MME_{[s_0,s_1]}$, by observing that
 this region is ``past complete'' in the sense that for any $(t,x)\in \MME_{[s_0,s_1]}$ we have $\Lambda_{t,x}\subset \MME_{[s_0,s_1]}$. Therefore, by recalling our assumption \eqref{eq172-bis-22} concerning the sharp pointwise decay of the initial data, we arrive at the following key statement. 

\begin{lemma}
The wave field satisfies the low-order, sharp pointwise estimate
\begin{equation}\label{eq5-19-11-2022-M}
|u|_{N-4}\lesssim (C_1\eps)^2\la r\ra^{-1} + C_0\eps \, t^{-1}
\quad \text{ in } \MME_{[s_0,s_1]}.
\end{equation} 
\end{lemma}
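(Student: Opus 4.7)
The plan is to split each $Zu$ for admissible $Z$ with $\ord(Z)\leq N-4$ via Duhamel's formula into a free-wave piece carrying the initial data from $\Mscr_{s_0}$ plus a Duhamel source integral, and to control these two contributions separately. The reduction is particularly clean because the vector fields generating admissible operators---namely the translations $\del_\alpha$, the Lorentz boosts $L_a$, and the spatial rotations $\Omega_{ab}$---all commute \emph{exactly} with the flat d'Alembertian~$\Box$. Consequently,
\[
\Box(Zu) \,=\, Z\,\Box u \,=\, -\,Z\big(P^{\alpha\beta}\del_\alpha\phi\,\del_\beta\phi + R\,\phi^2\big),
\]
and I may write $Zu = w + \Box^{-1}_{\source}[Z\Box u]$, where $w$ is the free-wave development of the initial data $(Zu|_{\Mscr_{s_0}},\,\del_t Zu|_{\Mscr_{s_0}})$.

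The first piece is controlled directly by the standing hypothesis \eqref{eq172-bis-22}, which gives $|w(t,x)|\lesssim C_0\,\eps\,(r+t)^{-1}\lesssim C_0\,\eps\, t^{-1}$ throughout $\MME_{[s_0,s_1]}$. For the Duhamel integral I would invoke Proposition~\ref{Linfini wave}, Case~2 (sub-critical), applied to the quadratic bound \eqref{eq1-24-11-2022-M}---which, thanks to the exact commutation $[Z,\Box]=0$, controls $|Z\Box u|$ with the same profile as $|\Box u|_{N-4}$. Setting
\[
\nu \,:=\, \tfrac12-\tfrac{3}{2}\delta,\qquad \mu \,:=\, 2(1-\kappa),
\]
the source matches \eqref{eq1-27-12-2020} with $(\alpha_1,\alpha_2,\alpha_3)=(0,-2-\nu,-1+\mu)$ and both $\mu,\nu\in(0,1/2]$. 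The standing choice \eqref{eq:choiceexpo} ($\kappa>3/4$ with $\delta\ll\kappa-3/4$) places us precisely in the regime $\mu<\nu$, so the third sub-case of \eqref{Linfini wave ineq} returns
\[
\big|\Box^{-1}_{\source}[Z\Box u](t,x)\big|\,\lesssim\, (C_1\eps)^2\,\mu^{-1}\,|\mu-\nu|^{-1}\,(t+r)^{-1}\,\lesssim\, (C_1\eps)^2\,\la r\ra^{-1}.
\]
Summing the two contributions and taking the supremum over admissible $Z$ of order~$\leq N-4$ produces \eqref{eq5-19-11-2022-M}.

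Proposition~\ref{Linfini wave} requires that the source bound be valid on the entire truncated past cone $\Lambda_{t,x}$ issuing from $(t,x)\in\MME_{[s_0,s_1]}$, which is precisely the ``past-completeness'' of the Euclidean-merging region recorded just above the lemma. The main technical obstacle I anticipate is reconciling the weight $\crochet^{-1+\mu}$ in \eqref{eq1-24-11-2022-M} (which saturates at~$1$ strictly inside the light cone) with the symmetric factor $(1+|\tau-|y||)^{-1+\mu}$ appearing in the hypothesis of Proposition~\ref{Linfini wave}: the two profiles agree outside the cone but differ inside. Since in $\MME$ one has $r\gtrsim t/2$, the inside-cone portion of $\Lambda_{t,x}$ is a thin strip, and the defect can be absorbed either by splitting the cone integration along $\{r=t-1\}$ and using the trivial bound $\crochet^{-1+\mu}\leq 1$ on the interior piece (costing only a harmless reduction in~$\nu$) or by invoking a refined version of Case~2 phrased directly in terms of the $\crochet$ weight; in either case the final decay rate $(t+r)^{-1}$ survives.
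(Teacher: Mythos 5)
Your proof matches the paper's argument exactly: the source bound \eqref{eq1-24-11-2022-M} is fed into Case~2 (sub-case $\mu<\nu$) of Proposition~\ref{Linfini wave}, the past-completeness of $\MME_{[s_0,s_1]}$ guarantees the hypothesis holds on all of $\Lambda_{t,x}$, and the free-wave piece is handled by the standing decay assumption \eqref{eq172-bis-22}. The technical obstacle you anticipate is in fact absent: since $r\ge t-1$ throughout $\MME_{[s_0,s_1]}$ and hence on all of $\Lambda_{t,x}$, the energy weight there satisfies $1+|r-t|\le\crochet\le 3(1+|r-t|)$, so $\crochet^{-1+\mu}$ and $(1+|\tau-|y||)^{-1+\mu}$ are uniformly comparable and no interior strip ever needs separate treatment.
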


For points $(t,x)\in\MH_{[s_0,s_1]}$, the bound \eqref{eq1-24-11-2022-M} 
and Case 2 of Proposition~\ref{Linfini wave} 
 are also relevant to control the  contribution of the source due to the Euclidean-merging domain, but
  of course we still need to control the contribution due to the hyperboloidal domain, and we are going to apply Case 0 of  Proposition~\ref{Linfini wave}; cf.~\eqref{eq3-24-11-2022-M} below. 

%---------------------------------------------------------------------

\paragraph{Estimates for source-terms.}

We now derive $L^2$ estimates for the source terms (of the wave equation) and the commutators (of the Klein-Gordon equation). To this end, we want to apply the commutator estimate \eqref{eq7-14-03-2021} with $H = H^{\alpha\beta}u$ and 
the function $u$ therein chosen to be the Klein-Gordon field $\phi$. 
In \eqref{eq7-14-03-2021}, the operator $Z$ is of order $p=\ord(Z) \leq N$ and $k=\rank(Z) \leq p$. 
In view of the Sobolev bound
\eqref{eq2-19-11-2022-M} (low-order case therein) 
and the pointwise Klein-Gordon decay \eqref{eq2-23-11-2022-M}, 
and the sharp pointwise estimate \eqref{eq5-19-11-2022-M}, 
we find 
$
|u||\del\del \phi|_{p-1,k-1}\lesssim C_1\eps \, t^{-1}|\del \phi|_{p,k-1}
$ and 
$$
|\LOmega u|_{k_1-1}|\del\del\phi|_{p_2,k_2}\lesssim 
\begin{cases}
C_1\eps \, t^{-1}|\del\phi|_{p,k-1},\quad & k_1\leq N-4,
\\
C_1\eps \la r\ra^{-1}\crochet^{-\kappa}s^{\delta}|u|_{k_1}\quad & k_1\geq N-3, 
\end{cases}
$$
where the notation in \eqref{eq7-14-03-2021} is used. Here we applied the Sobolev bound \eqref{eq2-24-11-2022-M} to deal with the Klein-Gordon term $|\del\del \phi|_{p_2,k_2}$ and we used the conditions $p_2\leq N-8$ and $N\geq 11$. 
 For the last term in the right-hand side of \eqref{eq7-14-03-2021}, we have 
$$
|\del u|_{p_1-1,k_1}|\del\del \phi|_{p_2,k_2}\lesssim 
\begin{cases}
C_1\eps \, t^{-1}|\del \phi|_{p,k},\quad & p_1\leq N-4,
\\
C_1\eps \la r\ra^{-2}\crochet^{1-\kappa}s^{1/2+2\delta}|\del u|_{p,k},\quad & p_2\leq N-6.
\end{cases}
$$
Collecting the bounds above together and integrating over the Euclidean-merging domain, we obtain
\begin{equation}\label{eq8-29-11-2022-M}
\|J\zeta^{-1}\crochet^{\kappa}[Z,H^{\alpha\beta}\del_{\alpha}\del_{\beta}]\phi\|_{L^2(\MME_s)}
\lesssim C_1\eps s^{-1}\Fenergy_{\kappa,c}^{p,k}(s,\phi) + 
\begin{cases}
(C_1\eps)^2 s^{-5/4},\quad& k\leq N-4,
\\
(C_1\eps)^2 s^{-1/2}, & k\geq N-3.
\end{cases}
\end{equation}
On the other hand, for the wave equation we have
\begin{equation}\label{eq4-27-11-2022-M}
\aligned
|\Box u|_{p,k}
& \lesssim |\del_{\alpha}\phi\del_{\beta}\phi|_{p,k} + |\phi^2|_{p,k}
\\
&
\lesssim C_1\eps \la r\ra^{-2}\crochet^{1-\kappa}s^{1/2+2\delta}(|\del\phi|_{p,k} + |\phi|_{p,k})
\quad \text{ in } \MME_{[s_0,s_1]}
\endaligned
\end{equation}
(provided $N\geq 9$) and, after integration, with $p=\ord(Z) \leq N$ and $k=\rank(Z) \leq p$
\begin{equation}\label{eq1-01-12-2022-M}
\|J\zeta^{-1}\crochet^{\kappa}|\Box u|_{p,k}\|_{L^2(\MME_s)}
\lesssim (C_1\eps)^2s^{-3/2+3\delta}. 
\end{equation}
 
%------------------------------------------------------------------------------------------------------------------------------------------------------ 

\subsection{Sharp decay in the hyperboloidal domain: hierarchy inequalities}
\label{section----64}

\paragraph{Sharp decay of the wave component.} 

We now rely on Kirchhoff formula, that is, on Proposition~\ref{Linfini wave} and we thus need first to control $|\Box u|$. At this stage, \eqref{eq1-24-11-2022-M} already gives us the required control in $\MME_{[s_0,s_1]}$, and we can focus on the domain $\MH_{[s_0,s_1]}$. The Sobolev decay bound \eqref{eq2-24-11-2022-M} gives us 
\begin{equation}
|\Box u|_{N-6}\lesssim (C_1\eps)^2 t^{-2 -(1/2-\delta)}(t-r)^{-1+ (1/2+\delta)}
\quad \text{ in } \MH_{[s_0,s_1]}, 
\end{equation}
where we used that $|\del\phi\del\phi|_{N-6}\lesssim |\del \phi|_{N-6}|\phi|_{N-6}$
(provided $N\geq 7$).
In view of the estimate in Case 0 of Proposition~\ref{Linfini wave} and thanks to the sharp initial decay condition 
\eqref{eq172-bis-22}, we obtain the following almost sharp pointwise decay of the wave component at a low order: 
\begin{equation}\label{eq3-24-11-2022-M}
|u|_{N-6}\lesssim C_1\eps \, t^{-1+2\delta}
\quad \text{ in } \MH_{[s_0,s_1]}.
\end{equation}

%---------------------------------------------------------------------

\paragraph{Sharp decay of the Klein-Gordon component.}

On the other hand, we can apply the integration technique in Proposition \ref{prop1-23-11-2022-M} to the Klein-Gordon equation
$
-\Boxt_gZ\phi + c^2\phi = [Z,H^{\alpha\beta}\del_{\alpha}\del_{\beta}]\phi, 
$
in which 
$g^{\alpha\beta} = \eta^{\alpha\beta} + H^{\alpha\beta}$
and $\Boxt_g = g^{\alpha\beta}\del_{\alpha}\del_{\beta}$. 
We pick up any operator $Z$ such that 
$\ord(Z) = p\leq N-4$ and $\rank(Z) = k \leq p$. 
With the notation $|\RR_g[Z\phi]|$ given in \eqref{eq11-23-11-2022-M} and thanks to the Sobolev decay estimates, we claim that 
\begin{equation}
\aligned
|\RR_g[Z\phi]| 
& \lesssim
s^{-2}|\phi|_{N-2} + (t/s)^2s^{-2}|u|_{N-4}|\phi|_{N-2} 
+ (t/s^2)\sum_{p_1+p_2=N-4}|u|_{p_1}|\del\phi|_{p_2+1}
\\
& \lesssim (C_1\eps)(s/t)s^{-3+3\delta}
\qquad \text{ in } \MH_{[s_0,s_1]}.
\endaligned
\end{equation}
Indeed, this is checked as follow. For the first two terms in \eqref{eq11-23-11-2022-M} we apply the Sobolev decay \eqref{eq2-24-11-2022-M} and \eqref{eq4-24-11-2022-M}, and we obtain the upper bound $(C_1\eps)^2(s/t)^{3/2}s^{-3+2\delta}$.  
On the other hand, dealing with the last term  in \eqref{eq11-23-11-2022-M}  is more involved, and we 
observe that  $p_1\leq N-6$ or $p_2\leq N-7$ (with $N\geq 9$) and we write
$$
\aligned
&(t/s^2)|u|_{N-4}|\del\phi|_{N-6}\lesssim (C_1\eps)^2(s/t)s^{-2+2\delta}
\quad 
& \text{ in } \MH_{[s_0,s_1]}, 
\\
&(t/s^2)|u|_{N-6}|\del\phi|_{N-3}\lesssim (C_1\eps)^2(s/t)^{3/2-2\delta}s^{-3+3\delta} \quad 
& \text{ in } \MH_{[s_0,s_1]}, 
\endaligned
$$ 
where for the second case we use the decay \eqref{eq3-24-11-2022-M}.

Next let us consider the boundary ``sup'' term in \eqref{eq5-23-11-2022-M}. In view of the 
light cone estimates \eqref{eq1-23-11-2022-M} and \eqref{eq3-23-11-2022-M} and recalling $\kappa-3/4\geq \delta$, 
we find the following estimate on the boundary of the hyperboloidal domain: 
\begin{equation}\label{eq10-23-11-2022-M}
t^{1/4} \big( t^{1/2}|\phi|_p + |\del\phi|_p + t \, |\delsN\phi|_p \big) 
\lesssim 
\begin{cases}
C_1\eps  (s/t)^{1-\delta},\quad & p=N-4,
\\
C_1\eps (s/t)^{2-2\delta}, \quad & p=N-5.
\end{cases}\quad \text{on } \MH_{s_0}\cup \Lcal_{[s_0,s_1]}.
\end{equation}
We now recall \eqref{eq5-23-11-2022-M} together with \eqref{eq10-23-11-2022-M}, \eqref{eq6-23-11-2022-M} and \eqref{eq9-23-11-2022-M}, since we still need to bound the source term of the Klein-Gordon equation together with 
the additional term $\RR_g[\phi]$. We observe that, for a high-order operator $Z$ with $\ord(Z) = p\leq N-5$ and $\rank(Z) = k$ and thanks to \eqref{eq7-14-03-2021}, 
$$
\aligned
& \big|[Z,H^{\alpha\beta}u\del_{\alpha}\del_{\beta}]\phi\big| 
\\
& \lesssim  |u||\del\del\phi|_{p-1,k-1} 
+ \sum_{k_1+p_2=p\atop k_1+k_2=k}|\LOmega u|_{k_1-1}|\del\del\phi|_{p_2,k_2} 
+ \sum_{p_1+p_2=p\atop k_1+k_2=k}|\del u|_{p_1-1,k_1}|\del\del\phi|_{p_2,k_2}
\\
& \lesssim   (s/t)^{2-2\delta}s^{-5/2} \Abf^{\flat}_{k-1}(s)\Bbf_0(s) 
+ (s/t)^{2-2\delta}s^{-5/2} \sum_{k_1=1}^k \Abf^{\flat}_{k-k_1}(s)\Bbf_{k_1}(s)
\\
& \quad + C_1\eps (s/t)^{3/2-2\delta}s^{-3+\delta}\Abf^{\flat}_{k}(s),
\endaligned
$$
where  
we have introduced
$$
\aligned
& \Abf^{\flat}_k(s) := \sup_{\MH_{[s_0,s_1]}} \Big(
(s/t)^{-2+2\delta}s^{3/2}\big(|\phi|_{N-5,k} + (s/t) \, |\del \phi|_{N-5,k}\big)\Big),
\\
& \Bbf_k(s) := \sup_{\MH_{[s_0,s_1]}}\big( t \, |u|_{N-4,k}\big).
\endaligned
$$
Importantly, in the above estimate for $[Z,H^{\alpha\beta}u\del_{\alpha}\del_{\beta}]\phi$
the first two terms in the right-hand side do not exist when $k=0$. 
Consequently, by collecting the previous estimates together we are in a position to apply Proposition~\ref{prop1-23-11-2022-M} and obtain 
our inductive inequality (at the order $N-5$) concerning the Klein-Gordon component
\begin{equation}\label{eq1-26-11-2022-M}
\aligned
& 
\Abf^{\flat}_{k}(s) \lesssim  C_1\eps + C_1\eps(s/t)^{3/2-2\delta}\int_{\lambda_0}^s\lambda^{-3/2+\delta}\Abf^{\flat}_{k}(\lambda)d\lambda
\\
&  + (s/t)^{2-2\delta}\int_{\lambda_0}^{s} \lambda^{-1}\Abf^{\flat}_{k-1}(\lambda)\Bbf_0(\lambda)d\lambda
 + (s/t)^{2-2\delta}\sum_{k_1=1}^{N-5}\int_{\lambda_0}^s\lambda^{-1} \Abf^{\flat}_{k-k_1}(\lambda)\Bbf_{k_1}(\lambda)d\lambda,
\endaligned
\end{equation}
where the last two terms do not exist when $k=0$.

Following the same lines and setting 
\be
\Abf^{\sharp}_k := \sup_{\MH_{[s_0,s_1]}}\big(
(s/t)^{-1+\delta}s^{3/2}\big(|\phi|_{N-4,k} + (s/t) \, |\del \phi|_{N-4,k}\big)
\big),
\ee
we proceed with similar quantities with $N-5$ replaced by $N-4$ and, by now including a defavorable factor $(s/t)$,  
we find our inductive inequality (at the order $N-4$) 
\begin{equation}\label{eq1-26-11-2022-Mdeux}
\aligned 
&
\Abf^{\sharp}_{k}(s) 
  \lesssim  
C_1\eps + (s/t)^{1/2-\delta}\int_{\lambda_0}^s\lambda^{-3/2+\delta}\Abf^{\sharp}_{k}(\lambda)d\lambda
\\
&   
 + (s/t)^{1-\delta}\int_{\lambda_0}^{s} \lambda^{-1}\Abf^{\sharp}_{k-1}(\lambda)\Bbf_0(\lambda)d\lambda
 + (s/t)^{1-\delta}\sum_{k_1=1}^{N-4}\int_{\lambda_0}^s\lambda^{-1} \Abf^{\sharp}_{k-k_1}(\lambda)\Bbf_{k_1}(\lambda)d\lambda, 
\endaligned
\end{equation}
where the last two terms do not exist when $k=0$.

Finally, considering the case $k=0$ and applying Gronwall inequality to \eqref{eq1-26-11-2022-M}-\eqref{eq1-26-11-2022-Mdeux}, we arrive at the following statement. Larger values of $k$ will be treated next. 

\begin{lemma}[Version without boosts and rotations]
 In the hyperboloidal domain, the Klein-Gordon field satisfies the following 
sharp pointwise estimates at low-order: 
\begin{equation}\label{eq3-26-11-2022-M}
\aligned
&|\phi|_{N-5,0} + (s/t) \, |\del \phi|_{N-5,0} \lesssim C_1\eps (s/t)^{2-2\delta}s^{-3/2}
\quad 
&&\text{ in } \MH_{[s_0,s_1]}, 
\\
&|\phi|_{N-4,0} + (s/t) \, |\del \phi|_{N-4,0} \lesssim C_1\eps (s/t)^{1-\delta}s^{3/2}
\quad 
&&\text{ in } \MH_{[s_0,s_1]}. 
\endaligned
\end{equation}
\end{lemma}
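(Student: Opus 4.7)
The plan is to apply Gronwall's lemma to the scalar integral inequalities \eqref{eq1-26-11-2022-M} and \eqref{eq1-26-11-2022-Mdeux} in the special case $k=0$, where the ``hierarchy'' terms (the last two terms on each right-hand side) are absent by the comments immediately following those inequalities.

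First, I would carefully set up the quantities $\Abf^{\flat}_0(s)$ and $\Abf^{\sharp}_0(s)$ as suprema over $\MH_{[s_0,s]}$ (rather than the full slab). The key geometric observation is that along an integral curve $\gamma_{t,x} = \{(\lambda t/s,\lambda x/s)\}$ the ratio $s/t$ is conserved, so at the point corresponding to parameter $\lambda$ the local hyperbolic time equals $\lambda$ and the local ratio equals the base value $s/t$. Consequently the factors $(s/t)^{3/2-2\delta}$ and $(s/t)^{1/2-\delta}$ that appear outside the integrals in \eqref{eq1-26-11-2022-M}--\eqref{eq1-26-11-2022-Mdeux} may be absorbed using $s/t \leq 1$. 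The boundary contribution from $\lambda_0$ is already controlled by $C_1\eps$ thanks to the light-cone and initial-slice estimates collected in \eqref{eq10-23-11-2022-M}, which justifies the $C_1\eps$ term appearing as the constant in \eqref{eq1-26-11-2022-M} and \eqref{eq1-26-11-2022-Mdeux}.

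With $k=0$, the inductive inequalities thus reduce to the Volterra-type estimates
\[
\Abf^{\flat}_0(s) \lesssim C_1\eps + C_1\eps \int_{\lambda_0}^{s}\lambda^{-3/2+\delta}\,\Abf^{\flat}_0(\lambda)\,d\lambda,
\qquad
\Abf^{\sharp}_0(s) \lesssim C_1\eps + \int_{\lambda_0}^{s}\lambda^{-3/2+\delta}\,\Abf^{\sharp}_0(\lambda)\,d\lambda.
\]
Since the choice $\delta \ll \kappa - 3/4 < 1/4$ ensures $3/2 - \delta > 1$, the kernel $\lambda^{-3/2+\delta}$ is integrable on $[s_0,+\infty)$; in particular $\int_{s_0}^{+\infty} \lambda^{-3/2+\delta}\,d\lambda$ is a finite constant depending only on $\delta$. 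A standard Gronwall argument (applied to the non-decreasing majorant in $s$) then yields
\[
\Abf^{\flat}_0(s) + \Abf^{\sharp}_0(s) \lesssim C_1\eps
\]
uniformly for $s\in[s_0,s_1]$. Unraveling the definitions of $\Abf^{\flat}_0$ and $\Abf^{\sharp}_0$ gives precisely the two claimed pointwise decay statements.

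The only delicate point I anticipate is the bookkeeping between the pointwise inequality \eqref{eq5-23-11-2022-M} and the supremum quantities $\Abf^{\flat}_0$, $\Abf^{\sharp}_0$: since the integrand in the Volterra inequality involves the supremum up to time $\lambda$ evaluated along a fixed integral curve, one must verify that taking the supremum over the base point $(t,x)$ commutes harmlessly with the integral, which is where the invariance of $s/t$ along $\gamma_{t,x}$ is essential. Once this is handled, the proof reduces to the scalar Gronwall estimate above; the higher-$k$ cases (treated in the next subsection of the paper) will then be addressed by an induction on $k$ using the same mechanism but with the mixed-term contributions $\Abf^{\bullet}_{k-k_1}\Bbf_{k_1}$ now controlled by the lower-rank bounds just obtained.
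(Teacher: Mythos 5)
Your proposal is correct and matches the paper's argument: the paper's proof is precisely to set $k=0$ in \eqref{eq1-26-11-2022-M}--\eqref{eq1-26-11-2022-Mdeux} (where the hierarchy terms vanish) and apply Gronwall's inequality, exploiting the integrability of the kernel $\lambda^{-3/2+\delta}$ and $s/t\leq 1$ to absorb the prefactors. Your additional care about interpreting the suprema over $\MH_{[s_0,s]}$ and about the invariance of $s/t$ along the characteristics $\gamma_{t,x}$ is exactly the right bookkeeping, and you have also implicitly caught that the displayed bound in the lemma should read $s^{-3/2}$ rather than $s^{3/2}$ in the second line, consistent with the definition of $\Abf^{\sharp}_0$ and with the $k=0$ case of Lemma~\ref{prop1-27-11-2022-M}.
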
 

%---------------------------------------------------------------------

\paragraph{Sharp decay of the wave component.}

In order to now apply Proposition~\ref{Linfini wave}, we estimate $|\Box u|_k$ for $k\leq N-5$. In fact, 
in $\MH_{[s_0,s]}$ we have 
$$
|\del\phi\del\phi|_{k}\lesssim \sum_{p_1+p_2=k\atop k_1+k_2=k}|\del \phi|_{p_1,k_2} |\del \phi|_{p_2,k_2}
\lesssim (s/t)^{2-4\delta}s^{-3} \sum_{k_1=0}^{N-5}\Abf^{\flat}_{k_1}(s)\Abf^{\flat}_{N-5-k_1}(s).
$$
Similar estimate holds for $|\phi^2|_{N-5,k}$. We observe \eqref{eq1-24-11-2022-M} provides us with
a sufficient estimate on $|\Box u|_k$ in $\MME_{[s_0,s]}$. Thus by Proposition~\ref{Linfini wave} 
(Case 0 and Case 1 therein) together with the pointwise decay on the initial data \eqref{eq172-bis-22}, we find 
\begin{equation}\label{eq2-26-11-2022-M}
\textstyle 
\Bbf_k(s) \lesssim C_0\eps + \sum_{k_1=0}^{N-5}\Abf^{\flat}_{k_1}(s)\Abf^{\flat}_{N-5-k_1}(s).
\end{equation}
When $k=0$, thanks to \eqref{eq3-26-11-2022-M} this leads us to
$
\Bbf_0(s)\lesssim C_1\eps.
$

%------------------------------------------------------------------------------------------------------------------------------------------------------ 
 
\subsection{Sharp decay in the hyperboloidal domain: induction argument}

We consider \eqref{eq1-26-11-2022-M} together with \eqref{eq2-26-11-2022-M}. By induction on $k$ and Gronwall inequality, provided $C_1\eps \lesssim \theta\leq \delta/(10N)$ we obtain 
\begin{equation}\label{eq4-26-11-2022-M}
\Abf^{\flat}_{k}(s) \lesssim C_1\eps s^{2k\theta},\qquad \Bbf_k(s) \lesssim C_1\eps s^{2k\theta},\qquad 0\leq k\leq N-5.
 \end{equation}
Then we turn to the control of $\Abf^{\sharp}_k$ and $\Bbf_{N-4}$. To this end we need a more detailed analysis of the structure of commutators. Recalling the hierarchy structure \eqref{eq7-14-03-2021}, in the hyperboloidal domain and for 
all $\ord(Z) = p= N-4$ and $\rank(Z) = k\leq N-4$ we find 
\begin{equation}\label{eq5-26-11-2022-M}
\aligned
& \, \big|[Z,H^{\alpha\beta}u\del_{\alpha}\del_{\beta}]\phi\big| 
\\
& \lesssim  |u||\del\del\phi|_{p-1,k-1} 
+ \sum_{k_1+p_2=p\atop k_1+k_2=k}|\LOmega u|_{k_1-1}|\del\del\phi|_{p_2,k_2} 
+ \sum_{p_1+p_2=p\atop k_1+k_2=k}|\del u|_{p_1-1,k_1}|\del\del\phi|_{p_2,k_2}
\\
& \lesssim  (s/t)^{1-\delta}s^{-3/2}\Abf^{\sharp}_{k-1}(s) \Bbf_0(s) 
   + |\LOmega u||\del\del \phi|_{p-1,k-1} 
+ \hskip-.3cm \sum_{k_1+p_2=p,k_1\geq 2\atop k_1+k_2=k} \hskip-.3cm
|\LOmega u|_{k_1-1}|\del\del \phi|_{p_2,k_2} 
\\
& \quad + C_1\eps (s/t)^{1/2-\delta}s^{-3+\delta}\Abf^{\sharp}_k(s)
\\
& \lesssim  (s/t)^{1-\delta}s^{-3/2}\Abf^{\sharp}_{k-1}(s) \big(\Bbf_0(s) + \Bbf_1(s)\big) 
+ (s/t)^{2-2\delta}s^{-5/2}\sum_{k_1=2}^{k} \Abf^{\flat}_{k-k_1}(s)\Bbf_{k_1}(s)
\\
& \quad + C_1\eps (s/t)^{1/2-\delta}s^{-3+\delta}\Abf^{\sharp}_k(s).
\endaligned
\end{equation}
Then, thanks to \eqref{eq4-26-11-2022-M}, for all $k\leq N-5$ we have 
\begin{equation}
\aligned
\big|[Z,H^{\alpha\beta}\del_{\alpha}\del_{\beta}]\phi\big|
& \lesssim  C_1\eps (s/t)^{1/2-\delta}s^{-3+\delta}\Abf^{\sharp}_k(s)  
+ C_1\eps(s/t)^{1-\delta}s^{-5/2+2\theta}\Abf^{\sharp}_{k-1}(s) 
\\
& \quad + (C_1\eps)^2(s/t)^{2-2\delta}s^{-5/2+2k\theta} .
\endaligned
\end{equation}
When $k=N-4$, the second sum in the right-hand side of \eqref{eq5-26-11-2022-M} 
can be decomposed and controlled as follows: 
$$
\aligned
& \textstyle
(s/t)^{2-2\delta}s^{-5/2}\sum_{k_1=2}^{N-4} \Abf^{\flat}_{N-4-k_1}(s)\Bbf_{k_1}(s) 
\\
& \lesssim  (s/t)^{2-2\delta}s^{-5/2}\Abf^{\flat}_0(s)\Bbf_{N-4}(s) + (C_1\eps)^2(s/t)^{2-2\delta}s^{-5/2}s^{2(N-4)\theta}
\\
& \lesssim  C_1\eps (s/t)^{2-2\delta}s^{-5/2} \Bbf_{N-4}(s) + (C_1\eps)^2(s/t)^{2-2\delta}s^{-5/2}s^{2(N-4)\theta}.
\endaligned
$$

We are now ready to apply  Proposition~\ref{prop1-23-11-2022-M} concerning the sharp decay of Klein-Gordon solutions. 
Thanks to  \eqref{eq10-23-11-2022-M}, \eqref{eq6-23-11-2022-M}, and \eqref{eq9-23-11-2022-M},
 for all $k\leq N-5$ we find 
\begin{equation}
\aligned
\Abf^{\sharp}_k(s)& \lesssim  C_1\eps s^{2k\theta} 
+ C_1\eps(s/t)^{1/2-\delta} \int_{\lambda_0}^s\lambda^{-3/2}\Abf^{\sharp}_k(\lambda)d\lambda 
\\
& \quad + C_1\eps(s/t)^{1-\delta} \int_{\lambda_0}^s \lambda^{-1+2\theta}\Abf^{\sharp}_{k-1}(\lambda)d\lambda.
\endaligned
\end{equation}
By induction on $k$ and Gronwall's inequality, provided $C_1\eps \theta\lesssim 1$ we obtain
\begin{equation}
\Abf^{\sharp}_k(s)\lesssim C_1\eps s^{2k\theta},\quad k\leq N-5.
\end{equation}
On the other hand, we have 
\begin{equation}\label{eq7-26-11-2022-M}
\aligned
\Abf^{\sharp}_{N-4}(s)& \lesssim  C_1\eps s^{2(N-4)\theta} 
+ C_1\eps(s/t)^{1/2-\delta} \int_{\lambda_0}^s\lambda^{-3/2}\Abf^{\sharp}_{N-4}(\lambda)d\lambda 
\\
& \quad +C_1\eps (s/t)^{2-2\delta}\int_{\lambda_0}^s\lambda^{-1}\Bbf_{N-4}(\lambda)d\lambda.
\endaligned
\end{equation}
A final estimate on $\Bbf_{N-4}$ is required and we write 
$$
\aligned
|\del \phi\del \phi|_{N-4} &
\textstyle \lesssim  |\del\phi||\del\phi|_{N-4} + \sum_{k=1}^{N-5}|\del\phi|_k|\del\phi|_{N-4-k}
\\
& \textstyle
\lesssim  (s/t)^{-\delta}s^{-3/2}|\phi|_{1,0} \Abf^{\sharp}_{N-4}(s) + \sum_{k=1}^{N-5}\Abf^{\flat}_k(s)\Abf^{\flat}_{N-4-k}(s)
\\
& \lesssim  (s/t)^{-\delta}s^{-3/2}|\phi|_{N-5,0}\Abf^{\sharp}_{N-4}(s) + (C_1\eps)^2(s/t)^{2-4\delta}s^{-3 + 2(N-4)\theta}
\\
& \lesssim   C_1\eps (s/t)^{2-3\delta}s^{-3}\Abf^{\sharp}_{N-4}(s) + (C_1\eps)^2(s/t)^{2-4\delta}s^{-3 + 2(N-4)\theta}
\endaligned
$$
(provided $N\geq 6$). A similar (but simpler) estimate holds for $|\phi|^2$. Then from Proposition~\ref{Linfini wave} concerning the wave equation
(Case 0 and Case 1 therein), we deduce that 
\begin{equation}\label{eq6-26-11-2022-M}
\Bbf_{N-4}(s)\lesssim  C_0\eps  + C_1\eps \Abf^{\sharp}_{N-4}(s) + (C_1\eps)^2s^{2(N-4)\theta}.
\end{equation}
Combining this inequality \eqref{eq6-26-11-2022-M} together with \eqref{eq7-26-11-2022-M}, we  
 apply Gronwall's inequality and obtain
\begin{equation}\label{eq8-26-11-2022-M}
\Abf^{\sharp}_{N-4}(s)\lesssim C_1\eps s^{(N-4)\theta},
\qquad 
\Bbf_{N-4}(s) = C_1\eps s^{2(N-4)\theta}.
\end{equation}
We have arrived at the following main conclusion.

\begin{lemma}[Version with a hierarchy]
\label{prop1-27-11-2022-M}
 In the hyperboloidal domain $\MH_{[s_0,s_1]}$, the wave and Klein-Gordon fields satisfy the
following sharp pointwise estimates at low-order
(provided $C_1\eps \lesssim \theta \leq \delta/(10N)$):  
\begin{equation}\label{eq5-27-11-2022-M}
t \, |u|_{k}\lesssim C_1\eps s^{2k\theta},\qquad k\leq N-4, 
\end{equation}
\bse
\begin{equation}\label{eq6-27-11-2022-M}
|\phi|_{N-5,k} + (s/t) \, |\del\phi|_{N-5,k}\lesssim C_1\eps (s/t)^{2-2\delta}s^{-3/2+2k\theta},\qquad k\leq N-5,
\end{equation}
\begin{equation}
|\phi|_{N-4,k} + (s/t) \, |\del\phi|_{N-4,k}\lesssim C_1\eps (s/t)^{1-\delta}s^{-3/2+2k\theta},\qquad k\leq N-4.
\end{equation}
\ese
\end{lemma}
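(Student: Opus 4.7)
The plan is to prove the three bounds simultaneously by double induction on the rank $k$, feeding the three auxiliary quantities $\Abf^\flat_k(s)$, $\Abf^\sharp_k(s)$, and $\Bbf_k(s)$ defined in Section~\ref{section----64} into the integral inequalities \eqref{eq1-26-11-2022-M}, \eqref{eq1-26-11-2022-Mdeux}, and \eqref{eq2-26-11-2022-M}. The base case $k=0$ has already been established in \eqref{eq3-26-11-2022-M} together with $\Bbf_0(s)\lesssim C_1\eps$, which is obtained by substituting \eqref{eq3-26-11-2022-M} into the wave bound \eqref{eq2-26-11-2022-M}. This gives $\Abf^\flat_0, \Abf^\sharp_0, \Bbf_0 \lesssim C_1\eps$, matching \eqref{eq6-27-11-2022-M} and \eqref{eq5-27-11-2022-M} at rank zero.

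For the low-order inductive step ($p=N-5$ and $p=N-4$), assume the bounds $\Abf^\flat_j, \Bbf_j \lesssim C_1\eps \, s^{2j\theta}$ for all $j\leq k-1$. Substituting these into the cross terms $\int_{\lambda_0}^s \lambda^{-1}\Abf^\flat_{k-k_1}(\lambda)\Bbf_{k_1}(\lambda)\,d\lambda$ appearing in \eqref{eq1-26-11-2022-M}, one obtains a contribution dominated by $(C_1\eps)^2\, s^{2(k-1)\theta}/(2\theta)$, which under the smallness hypothesis $C_1\eps\lesssim\theta$ is absorbed into a multiple of $C_1\eps\, s^{2k\theta}$. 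The remaining linear self-coupling term has kernel $C_1\eps\,\lambda^{-3/2+\delta}$ which is integrable; Gronwall's inequality therefore closes the induction for $\Abf^\flat_k$, yielding \eqref{eq6-27-11-2022-M}. The companion bound $\Bbf_k(s)\lesssim C_1\eps\, s^{2k\theta}$ then follows immediately from \eqref{eq2-26-11-2022-M}, since the quadratic sum $\sum \Abf^\flat_{k_1}\Abf^\flat_{N-5-k_1}$ is controlled by $(C_1\eps)^2 s^{2(N-5)\theta}$. This completes the proof of \eqref{eq5-27-11-2022-M} and \eqref{eq6-27-11-2022-M}.

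The high-order case, namely the bound on $\Abf^\sharp_{N-4}$ and $\Bbf_{N-4}$, requires the refined hierarchy estimate \eqref{eq5-26-11-2022-M}. The decisive point is that the new commutator term $|\LOmega u|\,|\del\del\phi|_{p-1,k-1}$ is absorbed using the already-proved $\Bbf_1\lesssim C_1\eps\, s^{2\theta}$, while the sum $\sum_{k_1\geq 2}|\LOmega u|_{k_1-1}|\del\del\phi|_{p_2,k_2}$ is split at $k_1=N-4$: the top piece contains the factor $\Abf^\flat_0\,\Bbf_{N-4}\lesssim C_1\eps\,\Bbf_{N-4}$, while the lower pieces are controlled by $(C_1\eps)^2 s^{2(N-4)\theta}$ through \eqref{eq4-26-11-2022-M}. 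Inserting this into Proposition~\ref{prop1-23-11-2022-M} produces the coupled inequality \eqref{eq7-26-11-2022-M}; inserting \eqref{eq6-26-11-2022-M} for $\Bbf_{N-4}$ into its integrand gives a scalar Gronwall inequality for $\Abf^\sharp_{N-4}$ alone, whose solution is \eqref{eq8-26-11-2022-M}. Substituting back into \eqref{eq6-26-11-2022-M} yields the matching bound on $\Bbf_{N-4}$.

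The main obstacle is book-keeping the rank hierarchy inside \eqref{eq5-26-11-2022-M}: the commutator forces a trade between one rank on the metric-like factor $u$ and one rank on $\phi$, and it is this feature that compels the mild polynomial growth $s^{2k\theta}$ (rather than a uniform bound) and forces the condition $\theta\leq\delta/(10N)$ so that iterating $N-4$ times preserves the bootstrap hypothesis \eqref{eq h bootstrap}--\eqref{eq l bootstrap}. In particular, one must verify at each step that the exponents of $(s/t)$ appearing in \eqref{eq5-26-11-2022-M} remain compatible with the Gronwall factor produced by Proposition~\ref{prop1-23-11-2022-M}, which is why the two quantities $\Abf^\flat$ (weighted by $(s/t)^{2-2\delta}$) and $\Abf^\sharp$ (weighted by $(s/t)^{1-\delta}$) are tracked separately.
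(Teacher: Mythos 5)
Your proposal is correct and follows essentially the same route as the paper: an induction on the rank $k$ that simultaneously tracks $\Abf^\flat_k$, $\Abf^\sharp_k$, $\Bbf_k$ via the coupled integral inequalities \eqref{eq1-26-11-2022-M}, \eqref{eq1-26-11-2022-Mdeux}, \eqref{eq2-26-11-2022-M}, closed by Gronwall under $C_1\eps\lesssim\theta$, with the top order $k=N-4$ treated by substituting \eqref{eq6-26-11-2022-M} into \eqref{eq7-26-11-2022-M}. The only slip is cosmetic: the cross term $\int_{\lambda_0}^s\lambda^{-1}\Abf^\flat_{k-k_1}\Bbf_{k_1}\,d\lambda$ under the inductive hypothesis yields $(C_1\eps)^2 s^{2k\theta}/(2k\theta)$, not $s^{2(k-1)\theta}$, but the absorption into $C_1\eps\,s^{2k\theta}$ still goes through.
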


%------------------------------------------------------------------------------------------------------------------------------------------------------ 

\subsection{Commutators and source terms in the hyperboloidal domain}

We now turn our attention to the $L^2$-type estimates of the commutators and source terms.
First of all, to deal with the Klein-Gordon commutators, we apply the commutator estimate \eqref{eq7-14-03-2021}. For the first term in the right-hand side therein, we have
$$
|u||\del\del \phi|_{p-1,k-1}\lesssim C_1\eps s^{-1}(s/t) \, |\del\phi|_{p,k-1}.
$$
For the second term,  when $k_1\leq N-4$ we apply \eqref{eq5-27-11-2022-M} and obtain  
$$
|\LOmega u|_{k_1-1}|\del\del\phi|_{p_2,k_2}\lesssim C_1\eps  s^{-1+2k_1\theta}(s/t) \, |\del\phi|_{p,k-k_1}.
$$
The case $N-3\leq k_1\leq k$ occurs {\sl only when} $p\geq k\geq N-3$. In this case $p_2 = p-k_1\leq 3$ and 
$$
\aligned
|\LOmega u|_{k_1-1}|\del\del\phi|_{p_2,k_2}
& \lesssim  C_1\eps (s/t)^{2-2\delta} s^{-3/2+2k_2\theta}|\LOmega u|_{k_1-1} 
\\
& \lesssim  C_1\eps (s/t)^{1-2\delta} s^{-1/2+2k_2\theta}|\delus u|_{k_1-1},
\endaligned
$$
where we applied \eqref{eq6-27-11-2022-M} (provided $3\leq N-7$, namely $N\geq 10$).

For the last term in the right-hand side of \eqref{eq7-14-03-2021}, we observe
 that when $p_2\leq N-7$ we can apply \eqref{eq6-27-11-2022-M} on $|\del\del\phi|_{p_2,k_2}\leq |\phi|_{N-5,k_2}\lesssim C_1\eps (s/t)^{2-\delta}s^{-3/2+(N-5)\theta}$:
$$
|\del u|_{p_1-1}|\del\del \phi|_{p_2,k_2}\lesssim C_1\eps s^{-5/4}(s/t) \, |\del u|_{p_1-1}.
$$
When $N-2\geq p_2\geq N-6$, in this case $p_1\leq 6\leq N-5$. We apply \eqref{eq7-23-11-2022-M} to the term $|\del u|_{p_1-1}$ and we find
$$
|\del u|_{p_1-1}|\del\del \phi|_{p_2,k_2}\lesssim C_1\eps s^{-1} (s/t) \, |\del \phi|_{p,k}.
$$
In the hyperboloidal domain, for all $\ord(Z) = p$ and $\rank(Z) = k$ we conclude that
$$
\aligned
 \, \big|[Z,H^{\alpha\beta}\del_{\alpha}\del_{\beta}]\phi\big|
& \lesssim  C_1\eps s^{-1} (s/t) \, |\del u|_{p,k} + C_1\eps (s/t)^{-3/2+2(N-5)\theta}|\del u|_{p,k}
\\
& \quad +
\begin{cases}
C_1\eps \sum_{k_1=1}^k s^{-1+2k_1\theta}(s/t) \, |\del\phi|_{p,k-k_1},\quad &k\leq N-4,
\\
{
\aligned
& C_1\eps \sum_{k_1=1}^{N-4} s^{-1+2k_1\theta}(s/t) \, |\del\phi|_{p,k-k_1}
\\
&  +\sum_{k_1=0}^{k-N+3}s^{-1/2+2k_1\theta}|\delus u|_{k-k_1-1},
\endaligned 
}& k\geq N-3.
\end{cases}
\endaligned
$$
Provided that $2(N-5)\theta + \delta\leq 1/4$, for the $L^2$ norm this pointwise inequality  leads us to
\begin{equation}\label{eq3-30-11-2022-M}
\aligned
&
\|[Z,H^{\alpha\beta}\del_{\alpha}\del_{\beta}]\phi\|_{L^2(\MH_s)}
 \lesssim  (C_1\eps)^2s^{-5/4}
\\
& \quad +
\begin{cases}
C_1\eps s^{-1+2k_1\theta} \sum_{k_1=0}^k\Fenergy_{\kappa,c}^{p,k-k_1}(s,\phi),\quad & k\leq N-4,
\\
\\
{
\aligned
& \textstyle 
C_1\eps s^{-1+2k_1\theta} \sum_{k_1=0}^{N-4}\Fenergy_{\kappa,c}^{p,k-k_1}(s,\phi)
\\
\\
& \textstyle \quad + \sum_{k_1=0}^{k-N+3} s^{-1/2+2k_1\theta}\Fenergy_{\kappa,c}^{k-k_1-1}(s,u),
\endaligned
}&k\geq N-3
\end{cases}
\endaligned
\end{equation}
On the other hand, dealing with the source term for the wave equation is simpler and we write directly
\begin{equation}\label{eq1-30-11-2022-M}
\||\Box u|_{p,k}\|_{L^2(\MH_s)}
\lesssim 
\begin{cases}
(C_1\eps)^2 s^{-3/2+2\delta},\quad & p\leq N-4,
\\
(C_1\eps)\sum_{k_1=0}^ks^{-3/2+2k_1\theta}\Fenergy_{\kappa,c}^{p,k-k_1}(s,\phi),
& p\geq N-3.
\end{cases}
\end{equation}

%------------------------------------------------------------------------------------------------------------------------------------------------------ 

\subsection{Improved energy estimate and conclusion}
\label{section---67}

\paragraph{Preliminaries.}

Now we are ready to apply the energy estimate \eqref{eq1-27-11-2022-M} and establish the following {\sl improved energy estimates}:
\begin{equation}\label{eq2-27-11-2022-M}
s^{-\delta} \, \Fenergy^N_\kappa(s,u)
+ s^{-1/2-\delta} \, \Fenergy^N_{c, \kappa}(s, \phi)
\leq (C_1/2) \eps,
\end{equation}
\begin{equation}\label{eq3-27-11-2022-M}
\Fenergy_\kappa^{N-4}(s,u) + s^{-\delta} \Fenergy^{N-4}_{c, \kappa}(s, \phi)
\leq (C_1/2) \eps.
\end{equation} 
We need to distinguish between high-order case $p\geq N-3$ and low-order case $p\leq N-4$. But before these, we establish the common estimates needed for both.

\begin{lemma}\label{prop1-29-11-2022-M}
In the spacetime slab $\Mscr_{[s_0,s_1]}$, the following estimates hold for $\ord(Z) = p$ and $\rank(Z) = k$:
\begin{equation}\label{eq2-29-11-2022-M}
\int_{\Mscr_s}|G_{g,\kappa}[Z\phi]|\ Jdxds\lesssim 
C_1\eps s^{-1}\Eenergy_{\kappa}^{p,k}(s,\phi)ds,
\end{equation} 
\begin{equation}\label{eq3-29-11-2022-M}
\int_{\Mscr_s}\crochet^{2\kappa-1}\aleph'(r-t)H^{\N00}|\del_tZ\phi|\ Jdxds \lesssim  
C_1\eps s^{-1}\Eenergy_{\kappa}^{p,k}(s,\phi)ds.
\end{equation}
Here we have $g^{\alpha\beta} = \eta^{\alpha\beta} + H^{\alpha\beta}u$, while the notation 
$G_{g,\kappa}$ was introduced in \eqref{eq8-27-11-2022-M}. Furthermore, when $C_1\eps$ sufficiently small one also has 
\begin{equation}\label{eq4-29-11-2022-M}
\frac{1}{4} \, \Eenergy_{g,\kappa,c}(s,\phi)\leq \Eenergy_{\kappa,c}(s,\phi)\leq 4 \, \Eenergy_{g,\kappa,c}(s,\phi). 
\end{equation}
\end{lemma}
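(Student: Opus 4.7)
The plan is to exploit the structural identity $g^{\alpha\beta} = \eta^{\alpha\beta} + H^{\alpha\beta}u$ with $H^{\alpha\beta}$ constant, so that every derivative of the metric perturbation reduces to a derivative of the wave field $u$: concretely, $\del_\mu(H^{\alpha\beta}u) = H^{\alpha\beta}\del_\mu u$. We then feed in the sharp pointwise decay for $u$ and $\del u$ established in the earlier subsections, splitting $\Mscr_s$ into its Euclidean--merging part $\MME_s$ and its hyperboloidal part $\MH_s$ and using the pointwise estimate relevant to each region.

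For \eqref{eq2-29-11-2022-M}, by inspection of \eqref{eq8-27-11-2022-M} each of the three terms in $G_{g,\kappa}[Z\phi]$ is bounded pointwise by $|\del u|\,\crochet^{2\kappa}|\del Z\phi|^2$ (up to constants depending on $H^{\alpha\beta}$). On $\MME_s$, the Sobolev decay \eqref{eq2-19-11-2022-M} gives $|\del u|\lesssim C_1\eps\,\la r\ra^{-1}\crochet^{-\kappa}$; combined with $J\lesssim s$ and $r\geq\rhoH(s)\gtrsim s^2$, this yields $|\del u|\,J\lesssim C_1\eps\,s^{-1}\crochet^{-\kappa}$, so the integrand is controlled by $C_1\eps\,s^{-1}\crochet^{2\kappa}|\del Z\phi|^2$ (using $\crochet\geq 1$) and integration closes the estimate. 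On $\MH_s$, where $\crochet\equiv 1$, the Sobolev bound \eqref{eq7-23-11-2022-M} gives $|\del u|\lesssim C_1\eps\,(s\,t^{1/2})^{-1}$ and, with $J\lesssim s/t$, one has $|\del u|\,J\lesssim C_1\eps\,t^{-3/2}$. The crucial inequality $t\lesssim s^2$, valid on $\MH_s$ since $t^2=s^2+r^2$ with $r\leq(s^2-1)/2$, rewrites this as $|\del u|\,J\lesssim C_1\eps\,s^{-1}(s/t)^{2}$, and the $(s/t)^2$ factor is absorbed against the hyperboloidal energy density via the identity $(s/t)^2|\del_a Z\phi|^2\lesssim|\delsH_a Z\phi|^2+(s/t)^2|\del_t Z\phi|^2$.

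For \eqref{eq3-29-11-2022-M}, the cutoff $\aleph'(r-t)$ localizes the integrand to $\{r\geq t-2\}$, which up to a thin strip inside the cone lies in $\MME_s$ where $r\simeq t$. Writing $H^{\N 00} = H^{\alpha\beta}u$ contracted with bounded null-frame coefficients, the low-order sharp pointwise estimate \eqref{eq5-19-11-2022-M} yields $|H^{\N 00}|\lesssim (C_1\eps)^2\la r\ra^{-1}+C_0\eps\,t^{-1}\lesssim C_1\eps\,t^{-1}$ on the support of $\aleph'$. With $t\gtrsim s^2$ and $J\lesssim s$ this gives $|H^{\N 00}|\,J\lesssim C_1\eps\,s^{-1}$; using $\crochet^{2\kappa-1}\leq\crochet^{2\kappa}$, the integrand in \eqref{eq3-29-11-2022-M} (interpreted with $|\del_t Z\phi|^2$ as in \eqref{eq1-27-11-2022-M}) is dominated by $C_1\eps\,s^{-1}\crochet^{2\kappa}|\del_t Z\phi|^2$, whose integral is $\lesssim C_1\eps\,s^{-1}\Eenergy_\kappa^{p,k}(s,\phi)$.

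For \eqref{eq4-29-11-2022-M}, expanding the flux we get $V_{g,\kappa,c}[Z\phi]-V_{\kappa,c}[Z\phi]=H^{\alpha\beta}u\cdot Q(\del Z\phi,cZ\phi)$, where $Q$ is a quadratic form of the same order as the Minkowskian density. The pointwise bounds \eqref{eq5-19-11-2022-M} on $\MME_s$ and \eqref{eq4-24-11-2022-M} on $\MH_s$ (the latter giving $|u|\lesssim C_1\eps\,t^{-1}(t-r)^{1/2}\leq C_1\eps\,t^{-1/2}\leq C_1\eps\,s_0^{-1/2}$) imply $\|u\|_{L^\infty(\Mscr_s)}\lesssim C_1\eps$ uniformly on the bootstrap interval. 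Hence the perturbation is an $O(C_1\eps)$ fraction of the Minkowskian density and choosing $C_1\eps$ smaller than a fixed threshold gives the two-sided equivalence with constant $4$. The main obstacle is the sharpness required in \eqref{eq2-29-11-2022-M} on $\MH_s$: neither a weaker $L^\infty$ bound on $|\del u|$ nor a crude use of the Minkowskian energy without the $(s/t)^2$ weight would close the inequality, because $t\lesssim s^2$ is saturated along $r\simeq\rhoH(s)$, and only the precise $(t/s)$ loss in $|\del u|$ balanced against the $(s/t)$ gain from $J$ exactly matches the $(s/t)^2$ weight hidden in the hyperboloidal energy density.
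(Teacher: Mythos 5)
Your overall strategy (split $\Mscr_s$ into $\MH_s$ and $\MME_s$ and feed in the pointwise decay of $u$, $\del u$) matches the paper, and your treatment of the hyperboloidal domain is essentially the paper's: there you correctly recognize that the leftover $(s/t)^2$ must be matched against the $(s/t)^2$ in the hyperboloidal energy density. However, your treatment of the Euclidean--merging domain has a genuine gap: you bound the Jacobian by $J\lesssim s$ and discard the remaining $\crochet^{-\kappa}$, arriving at the integrand bound $C_1\eps\,s^{-1}\crochet^{2\kappa}|\del Z\phi|^2$. This does \emph{not} integrate to $C_1\eps\,s^{-1}\Eenergy_\kappa^{p,k}(s,\phi)$, because the Minkowskian energy density on a slice reads $\zeta^2\crochet^{2\kappa}|\del_t Z\phi|^2 + \sum_a\crochet^{2\kappa}|\delsEH_a Z\phi|^2 + c^2\crochet^{2\kappa}|Z\phi|^2$: the time derivative carries an extra $\zeta^2$, which is small near the light cone in the transition region $\MM_s$ (there $\zeta^2\simeq\crochet/r$). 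Your bound is thus larger than the energy density by the unbounded factor $\zeta^{-2}$. The paper recovers the missing $\zeta^2$ precisely from the sharper Jacobian bound $J\simeq\zeta^2 s$ in $\MM_s$ (with $J\lesssim s=\zeta^2 s$ trivially in $\Mext_s$ where $\zeta=1$), which converts $\crochet^{2\kappa}|\del Z\phi|^2\,J$ into $s\,|\zeta\crochet^\kappa\del Z\phi|^2$ and makes the estimate close.

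The same omission undermines your argument for \eqref{eq3-29-11-2022-M} in $\MME_s$ (again you get $C_1\eps\,s^{-1}\crochet^{2\kappa}|\del_t Z\phi|^2$ with no $\zeta^2$) and for the energy equivalence \eqref{eq4-29-11-2022-M}. For the latter you establish only $\|u\|_{L^\infty}\lesssim C_1\eps$, and conclude the metric perturbation is an $O(C_1\eps)$ fraction of the Minkowskian density. But the perturbed density differs from the Minkowskian one by terms of order $|u|\,|\del Z\phi|^2$, while the Minkowskian density controls only $\zeta^2|\del_t Z\phi|^2$; so what is actually needed is the sharper pointwise bound $|u|\lesssim C_1\eps\,\zeta^2$. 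The paper obtains this by combining $|u|\lesssim C_1\eps\, t^{-1}$ (from \eqref{eq5-19-11-2022-M} and \eqref{eq5-27-11-2022-M}) with the foliation estimate $t^{-1}\lesssim\zeta^2$, which holds on all of $\Mscr_s$. Your proof proposal would need to incorporate exactly these two ingredients --- $J\simeq\zeta^2 s$ in the merging region and $t^{-1}\lesssim\zeta^2$ --- to close.
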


\begin{proof}
We rely here on direct consequences of the pointwise decay derived earlier. Thanks to \eqref{eq2-19-11-2022-M} and \eqref{eq7-23-11-2022-M}, we have 
$$
\aligned
|G_{g,\kappa}[Z\phi]| J \lesssim |\del u| \crochet^{2\kappa}|\del Z\phi| 
& \lesssim 
\begin{cases}
C_1\eps \, t^{-1/2}s^{-1}(s/t)^{-1} |(s/t)\del Z\phi|^2 
\quad & \text{ in }\MH_{[s_0,s_1]},
\\
C_1\eps \, t^{-1}\crochet^{-\kappa} (J\zeta^{-2}) |\zeta\crochet^{\kappa}\del Z \phi|^2
\quad & \text{ in }\MME_{[s_0,s_1]},
\end{cases}
\\
& \lesssim 
\begin{cases}
C_1\eps s^{-1} |(s/t)\del Z\phi|^2
\quad & \text{ in }\MH_{[s_0,s_1]},
\\
C_1\eps s^{-1} |\zeta\crochet^{\kappa}\del Z \phi|^2
\quad & \text{ in }\MME_{[s_0,s_1]},
\end{cases}
\endaligned
$$
which, by integration, implies \eqref{eq2-29-11-2022-M}. On the other hand, the proof of \eqref{eq3-29-11-2022-M} is similar and 
we apply \eqref{eq5-19-11-2022-M} (in the Euclidean-merging domain) and \eqref{eq5-27-11-2022-M} (in the hyperboloidal domain).
Concerning the inequalities \eqref{eq4-29-11-2022-M}, we observe that
$
|\Eenergy_{g,\kappa,c}(s,Z\phi) - \Eenergy_{\kappa,c}(s,Z\phi)|
\lesssim \int_{\Mscr_s} \zeta^{-2}|u| \, |\zeta\crochet^{\kappa} \del Z\phi|^2 \, dx
$
and, in view of \eqref{eq5-19-11-2022-M} and \eqref{eq5-27-11-2022-M}, we find (since $t^{-1}\lesssim \zeta^2$) 
$$
|\Eenergy_{g,\kappa,c}(s,Z\phi) - \Eenergy_{\kappa,c}(s,Z\phi)|
\lesssim C_1\eps\int_{\Mscr_s} \Eenergy_{\kappa,c}(s,Z\phi).
\qedhere 
$$ 
\end{proof}

%---------------------------------------------------------------------

\paragraph{Energy inequalities.}

Now we apply the energy equation \eqref{eq1-27-11-2022-M}. For the wave operator, the metric under consideration in the model is flat, thus for all $\ord(Z) = p$ and $\rank(Z) = k$ 
$$ 
\frac{d}{ds}\Eenergy_{\kappa}(s,Zu) 
= \int_{\Mscr_s} \hskip-.3cm
\zeta\crochet^{2\kappa}\del_t Zu\,\crochet^{\kappa}J\zeta^{-1} \Box Zu\ dx
  \lesssim  \Fenergy_{\kappa}(s,Zu)^{1/2}\|J\zeta^{-1}\,\crochet^{\kappa} |\Box u|_{p,k}\|_{L^2(\Mscr_s)}.
$$
Thus we have $
\frac{d}{ds}\Fenergy_{\kappa}(s,Zu)\lesssim \|(s/t) \, |\Box u|_{p,k}\|_{L^2(\MH_s)} 
+ \|J\zeta^{-1}\, \crochet^{\kappa}|\Box u|_{p,k}\|_{L^2(\MME_s)}.
$
Thanks to the $L^2$ bound \eqref{eq1-01-12-2022-M} and \eqref{eq1-30-11-2022-M} concerning the source-terms, 
and after summation over $\ord(Z)\leq p$ and $\rank(Z)\leq k$, we find 
\begin{equation}\label{eq2-01-12-2022-M}
\aligned
& \frac{d}{ds}\Fenergy_{\kappa}^{p,k}(s,u)
\\
& \leq
\begin{cases}
C_{N,\delta}(C_1\eps)^2s^{-5/4},
& p\leq N-4,
\\
C_{N,\delta}(C_1\eps)^2s^{-5/4} 
+ C_{N,\delta} C_1\eps \sum_{k_1=0}^k s^{-3/2+2k_1\theta}\Fenergy_{\kappa,c}^{p,k-k_1}(s,\phi),
 & p\geq N-3,
\end{cases}
\endaligned
\end{equation}
where we the notation $C_{N,\delta}$ we now make explicit the implicit constant corresponding by the symbol ``$\lesssim$'' in all previous estimates. This constant is determined by $N,\delta$ and the model itself.

Now we turn our attention to the Klein-Gordon component. Recall \eqref{eq1-27-11-2022-M} together with 
Lemma~\ref{prop1-29-11-2022-M}, for all $\ord(Z)\leq p$ and $\rank(Z)\leq k$ we have 
$$
\frac{d}{ds}\Eenergy_{g,\kappa,c}(s,Z\phi)\lesssim C_1\eps s^{-1} \Eenergy_{\kappa}^{p,k}(s,\phi) + \Eenergy_{g,\kappa,c}(s,Z\phi)^{1/2} \big\|[Z,H^{\alpha\beta}\del_{\alpha}\del_{\beta}]\phi\big\|_{L^2(\Mscr_s)}.
$$
Summing over all $\ord(Z)\leq p$ and $\rank(Z)\leq k$, we find
$$
\frac{d}{ds}\Eenergy_{g,\kappa,c}^{p,k}(s,\phi)\lesssim C_1\eps s^{-1} \Eenergy_{\kappa}^{p,k}(s,\phi) + \Eenergy_{g,\kappa,c}(s,Z\phi)^{1/2} \big\|[Z,H^{\alpha\beta}\del_{\alpha}\del_{\beta}]\phi\big\|_{L^2(\Mscr_s)},
$$
which, thanks to \eqref{eq4-29-11-2022-M}, leads us to
\begin{equation}\label{eq7-29-11-2022-M}
\frac{d}{ds}\Fenergy_{\kappa,c}^{p,k}(s,\phi) \lesssim C_1\eps s^{-1}\Fenergy_{\kappa,c}^{p,k}(s,\phi) + \sum_{\ord(Z)\leq p\atop \rank(Z)\leq k}\|[Z,H^{\alpha\beta}\del_{\alpha}\del_{\beta}]\phi\|_{L^2(\Mscr_s)}.
\end{equation}
Taking into account the commutator estimates in \eqref{eq8-29-11-2022-M} and \eqref{eq3-30-11-2022-M}, we 
thus have 
\begin{equation}\label{eq3-01-12-2022-M}
\frac{d}{ds}\Fenergy_{\kappa,c}^{p,k}(s,\phi)\leq
\begin{cases}
\displaystyle
C_{N,\delta}(C_1\eps)^2 s^{-5/4} 
+C_{N,\delta} C_1\eps \sum_{k_1=0}^ks^{-1+2k_1\theta}\Fenergy_{\kappa,c}^{p,k-k_1}(s,\phi), & k\leq N-4,
\\
{\aligned
&C_{N,\delta}(C_1\eps)^2 s^{-1/2} 
+  C_{N,\delta} C_1\eps\sum_{k_1=0}^{N-4}s^{-1+2k_1\theta}\Fenergy_{\kappa,c}^{p,k-k_1}(s,\phi)
\\
& \quad +C_{N,\delta} C_{N,\delta}\sum_{k_1=0}^{k-N+3} s^{-1/2+2k_1\theta}\Fenergy_{\kappa,c}^{k-k_1-1}(s,u),
\endaligned
}\quad & \hskip-.cm k\geq N-3.
\end{cases}
\end{equation}

%---------------------------------------------------------------------

\paragraph{Low-order estimates.} 

Consider first the range $p\leq N-4$, therefore $k \leq p \leq N-4$. Thus 
in view of the energy inequality \eqref{eq2-01-12-2022-M} and \eqref{eq3-01-12-2022-M}, we find 
\begin{equation}\label{eq1-02-12-2022-M}
\aligned
\Fenergy_{\kappa}^{N-4}(s,u)
& \leq C_0\eps + C_{N,\delta}(C_1\eps)^2,  &&& k \leq p \leq N-4, 
\\
\Fenergy_{\kappa,c}^{N-4,k}(s\phi)
& \leq (C_0\eps +5 C_{N,\delta}(C_1\eps)^2)s^{(2k+1)\theta}, &&&k \leq p \leq N-4, 
\endaligned
\end{equation}
provided 
\begin{equation}\label{eq4-01-12-2022-M}
C_NC_1\eps \leq \theta,\quad C_0/C_1 + 5C_N C_1\eps \leq 2\theta.
\end{equation}

%---------------------------------------------------------------------

\paragraph{High-order estimates.}

We consider next the range $p\geq N-3$, and also rely on 
 \eqref{eq2-01-12-2022-M} and \eqref{eq3-01-12-2022-M}. However, the system of inequalities is 
 more involved. We first treat the case $k\leq N-4$, as follows: the system \eqref{eq2-01-12-2022-M} and \eqref{eq3-01-12-2022-M} reduces to
$$
\aligned
\frac{d}{ds}\Fenergy_{\kappa}^{p,k}(s,u)& \leq  C_{N,\delta}(C_1\eps)^2s^{-5/4}
 +  C_{N,\delta}C_1\eps s^{-3/2} \mathcalboondox {A}^{p,k}(s,\phi),
\\
\frac{d}{ds}\Fenergy_{\kappa,c}^{p,k}(s,\phi)\leq & C_{N,\delta} (C_1\eps)^2 s^{-5/4} 
 + C_{N,\delta} C_1\eps s^{-1} \mathcalboondox {A}^{p,k}(s,\phi),
\endaligned
$$
in which 
\be
 \mathcalboondox {A}^{p,k}(s,\phi) :=\sum_{k_1=0}^ks^{2k_1\theta}\Fenergy_{\kappa,c}^{p,k-k_1}(s,\phi). 
\ee
By induction and Gronwall's inequality, we obtain 
\begin{equation}\label{eq2-02-12-2022-M}
\aligned
\Fenergy_{\kappa}^{N,k}(s,u)
& \leq C_0\eps + 5C_{N,\delta}(C_1\eps)^2,
&&& p\geq N-3, 
\\ 
\Fenergy_{\kappa,c}^{N,k}(s,\phi)
& \leq \big(C_0\eps + 5C_{N,\delta}(C_1\eps)^2\big)s^{(2k+1)\theta},
&&& p\geq N-3, 
\endaligned
\end{equation}
provided 
\begin{equation}\label{eq5-01-12-2022-M}
(C_0/C_1) + 5C_{N,\delta}C_1\eps\leq \min\{1/4N, 2\theta\},\quad C_{N,\delta}C_1\eps \leq \theta.
\end{equation}

On the other hand, for the range $k\geq N-3$ we write  
$$
\aligned
\mathcalboondox {A}^{p,k}(s,\phi) =& \sum_{k_1=0}^{N-4}s^{2k_1\theta}\Fenergy_{\kappa,c}^{N,k}(s,\phi) 
+ \sum_{k_1=N-3}^{k}s^{2k_1\theta}\Fenergy_{\kappa,c}^{N,k-k_1}(s,\phi)
\\
& \lesssim  (N-4)(C_0\eps + C_N(C_1\eps)^2)s^{(2N+1)\theta} 
+ \mathcalboondox {B}^k(s,\phi), 
\endaligned
$$
in which 
\be
\mathcalboondox {B}^{k}(s,\phi) := \sum_{k_1=N-3}^{k}s^{2k_1\theta}\Fenergy_{\kappa,c}^{N,k-k_1}(s,\phi).
\ee
Substituting this result into \eqref{eq2-01-12-2022-M} and \eqref{eq3-01-12-2022-M}, 
provided $(2N+1)\theta\leq 1/4$ we obtain 
$$
\aligned
\frac{d}{ds}\Fenergy_{\kappa}^{N,k}(s,u)& \leq  C_{N,\delta}' (C_1\eps)^2s^{-5/4}
+ C_{N,\delta}'C_1\eps s^{-3/2} \mathcalboondox {B}^{k}(s,\phi),
\\
\frac{d}{ds}\Fenergy_{\kappa,c}^{N,k}(s,\phi)& 
\leq  C_{N,\delta}'(C_1\eps)^2 s^{-1/2} 
 + C_{N,\delta}'C_1\eps s^{-1/2}\mathcalboondox {C}^k(s,u) .
\endaligned
$$
where 
\be
\mathcalboondox {C}^k(s,u):= \sum_{k_1=N-3}^{k} s^{2k_1\theta}\Fenergy_{\kappa}^{k-k_1-1}(s,u). 
\ee
We especially observe that, by definition, $\mathcalboondox {C}^{N-3}(s,u) = 0$. Proceeding again by induction, we arrive at
$$
\aligned
\Fenergy_{\kappa}^{N,k}(s,u)
& \leq (C_0\eps + 5C_{N,\delta}'(C_1\eps)^2)s^{2k\theta},\quad
\\
\Fenergy_{\kappa,c}^{N,k}(s,\phi)
& \leq (C_0\eps + 5 C_{N,\delta}'(C_1\eps)^2)s^{1/2+2k\theta},
\endaligned
$$ 
so that
\begin{equation}\label{eq3-02-12-2022-M}
\aligned
\Fenergy_{\kappa}^{N,k}(s,u)& \leq  (C_0\eps + 5C_{N,\delta}'(C_1\eps)^2)s^{2(k-(N-3)+1)\theta},
&&& k\geq N-3, 
\\
\Fenergy_{\kappa}^{N,k}(s,\phi)& \leq  (C_0\eps + 5C_{N,\delta}'(C_1\eps)^2)s^{2(k-(N-3))\theta},
&&& k\geq N-3, 
\endaligned
\end{equation}
provided 
\begin{equation}\label{eq4-02-12-2022-M}
C_0/C_1 + 5C_{N,\delta}'C_1\eps \leq \theta/2,  
\qquad
C_{N,\delta}'C_1\eps\leq \theta.
\end{equation}
If we now impose that the constants satisfy 
\be
2N\theta\leq \delta, \quad C_0/C_1\leq 1/4\theta, \quad \eps\leq \frac{2C_1-C_0}{10\max\big( C_{N,\delta},C_{N,\delta'}\big)},
\ee
together with the previous inequalities 
\eqref{eq4-01-12-2022-M}, \eqref{eq5-01-12-2022-M} and \eqref{eq4-02-12-2022-M}, then \eqref{eq1-02-12-2022-M}, \eqref{eq2-02-12-2022-M} and \eqref{eq3-02-12-2022-M} lead to the conclusion
\eqref{eq2-27-11-2022-M}-\eqref{eq3-27-11-2022-M}. This completes the bootstrap argument.
 
%=================================================================================== 

\small

\paragraph{Acknowledgments.}
 
This work was done in parts when PLF was a visiting research fellow at the Courant Institute for Mathematical Sciences, New York University, and a visiting professor at the School of Mathematical Sciences, Fudan University, Shanghai. The work of YM was supported by a Special Financial Grant from the China Postdoctoral Science Foundation under the grant number NSFC 11601414.
% 2017-T100732. 
 
%====================================================================================

\vskip.15cm

\bibliography{references}

\begin{thebibliography}{99} 

\bibitem{Bieri} 
{\sc L. Bieri,}
An extension of the stability theorem of the Minkowski space in general relativity,
J. Differential Geom. 86 (2010), 17--70. 

\bibitem{Bigorgne2} 
{\sc L. Bigorgne, D. Fajman, J. Joudioux, J. Smulevici, and M. Thaller,}
Asymptotic Stability of Minkowski Space-Time with non-compactly supported massless Vlasov matter, 
Arch. Ration. Mech. Anal. 242 (2021), 1--147.   

\bibitem{YCB}
{\sc Y. Choquet-Bruhat},
{\sl General relativity and the Einstein equations}, Oxford Math. Monograph,
Oxford Univ. Press, 2009.

\bibitem{CK}
 {\sc D. Christodoulou and S. Klainerman,}
{\sl The global nonlinear stability of the Minkowski space,}
Princeton Math. Ser. 41, 1993.

\bibitem{DIPP}
 {\sc S. Dong, P.G. LeFloch, and Z. Wyatt,}
Global evolution of the U(1) Higgs Boson: nonlinear stability and uniform energy bounds, 
Annals Henri Poincar\'e 22 (2021), 677--713. 

\bibitem{FJS} 
{\sc D. Fajman, J. Joudioux, and J. Smulevici,}
A vector field method for relativistic transport equations with applications, 
Analysis \& PDE 10 (2017) 1539--1612.

\bibitem{FJS2} 
{\sc D. Fajman, J. Joudioux, and J. Smulevici,}
Sharp asymptotics for small data solutions of the Vlasov-Nordstr\"om system in three dimensions,
Preprint ArXiv:1704.05353. 

\bibitem{FJS3} 
{\sc D. Fajman, J. Joudioux, and J. Smulevici,}
The stability of the Minkowski space for the Einstein-Vlasov system,
Anal. PDE 14 (2021), 425--531. 

\bibitem{Foures} 
{\sc Y. Four\'es-Bruhat,}
Th\'eor\`emes d'existence pour certains syst\`emes d'\'equations aux d\'eriv\'ees partielles non-lin\'eaires,
Acta Math. 88 (1952), 42--225. 

\bibitem{HintzVasy1}
{\sc P. Hintz and A. Vasy,} 
The global non-linear stability of the Kerr-de Sitter family of black holes,
Acta Math. 220 (2018), 1--206. 

\bibitem{HintzVasy2}
{\sc P. Hintz and A. Vasy,}  
Stability of Minkowski space and polyhomogeneity of the metric,
Ann. PDE 6 (2020), no. 1, Paper No. 2, 146 pp. 

\bibitem{HuneauStingo}
{\sc C. Huneau and A. Stingo,}
Global well-posedness for a system of quasi-linear wave equations on a product space, 
Preprint ArXiv:2110.13982. 

\bibitem{IfrimStingo}
{\sc M. Ifrim and A. Stingo,} 
Almost global well-posedness for quasi-linear strongly coupled wave-Klein-Gordon systems in two space dimensions, 
Preprint ArXiv:1910.12673.

\bibitem{IP} 
{\sc A.D. Ionescu and B. Pausader,} 
Global solutions of quasi-linear systems of Klein-Gordon equations in 3D, 
J. Eur. Math. Soc. 16 (2015), 2355--2431.

\bibitem{IP-two} 
{\sc A.D. Ionescu and B. Pausader,} 
On the global regularity for a wave-Klein-Gordon coupled system, 
Acta Math. Sin. 35 (2019), 933--986. 

\bibitem{IP3} 
{\sc A.D. Ionescu and B. Pausader,} 
{\it 
The Einstein-Klein-Gordon coupled system: global stability of the Minkowski solution,} 
Princeton University Press, Princeton, NJ, 2021.  

\bibitem{KauffmanLindblad} 
{\sc C. Kauffman and H. Lindblad,}
Global stability of Minkowski space for the Einstein-Maxwell-Klein-Gordon system in generalized wave coordinates, 
Preprint ArXiv:2109.03270.

\bibitem{Klainerman85}
{\sc S. Klainerman,}
Global existence of small amplitude solutions to nonlinear Klein-Gordon equations in four spacetime dimensions,
Comm. Pure Appl. Math. 38 (1985), 631--641.

\bibitem{Klainerman87}
{\sc S. Klainerman,}
Remarks on the global Sobolev inequalities in the Minkowski space $\RR^{n+1}$,
Comm. Pure Appl. Math. 40 (1987), 111--117.

\bibitem{PLF-YM-book} {\sc P.G. LeFloch and Y. Ma,}
{\sl The hyperboloidal foliation method,}  World Scientific Press, 2014.

\bibitem{PLF-YM-CRAS}{\sc P.G. LeFloch and Y. Ma},
The global nonlinear stability of Minkowski spacetime for the Einstein equations in presence of massive fields, 
 C.R. Acad. Sc. Paris 354 (2016), 948--953.

\bibitem{PLF-YM-one}{\sc P.G. LeFloch and Y. Ma},
The global nonlinear stability of Minkowski space for self-gravitating massive fields. The wave-Klein-Gordon model,
Comm. Math. Phys. 346 (2016), 603--665.

\bibitem{PLF-YM-two}{\sc P.G. LeFloch and Y. Ma},
{\sl The global nonlinear stability of Minkowski space for self-gravitating massive fields,} 
World Scientific Press, 2018. 

\bibitem{PLF-YM-main}
{\sc P.G. LeFloch and Y. Ma},
Nonlinear stability of self-gravitating massive fields, 
Preprint ArXiv:171210045. 

\bibitem{PLF-YM-lambda1}
{\sc P.G. LeFloch and Y. Ma}, 
Einstein-Klein-Gordon spacetimes in the harmonic near-Minkowski regime,
Portugal. Math. (2022). 

\bibitem{PLF-TCN} {\sc P.G. LeFloch and T.-C. Nguyen,} 
The seed-to-solution method for the Einstein equations and the asymptotic localization problem, 
Preprint ArXiv:1903.00243.

\bibitem{PLF-HW} {\sc P.G. LeFloch and C.-H. Wei,} 
Boundedness of the total energy of relativistic membranes evolving in a curved spacetime, 
J. Differential Equations 265 (2018), 312--331. 

\bibitem{LR1} {\sc H. Lindblad and I. Rodnianski,}
Global existence for the Einstein vacuum equations in wave coordinates,
Comm. Math. Phys. 256 (2005), 43--110.

\bibitem{LR2} {\sc H. Lindblad and I. Rodnianski,}
The global stability of Minkowski spacetime in harmonic gauge,
Ann. of Math. 171 (2010), 1401--1477.

\bibitem{LTay} {\sc H. Lindblad and M. Taylor,}  
Global stability of Minkowski space for the Einstein--Vlasov system in the harmonic gauge, 
Preprint ArXiv:1707.06079.

\bibitem{Ma} {\sc Y. Ma,} 
Global solutions of nonlinear wave-Klein-Gordon system in one space dimension,  
Nonlinear Anal. Theor. 191 (2020), 111--641.

\bibitem{YM-twoD} {\sc Y. Ma,} 
Global solutions of nonlinear wave-Klein-Gordon system in two spatial dimensions: a prototype of strong coupling case, 
J. Differ. Equations 287 (2021), 236--294. 

\bibitem{Smulevici} {\sc J. Smulevici,} 
Small data solutions of the Vlasov-Poisson system and the vector field method, 
Ann. PDE 11 (2016), 11--66.

\bibitem{Tataru96} {\sc D. Tataru,} 
Strichartz estimates in the hyperbolic space and global existence for the semi-linear wave equation,
Trans. Amer. Math. Soc. 353 (2001), 795--807.  

\bibitem{Wang} {\sc Q. Wang}, 
An intrinsic hyperboloid approach for Einstein Klein-Gordon equations,  
J. Differential Geom. 115 (2020), 27--109.  

\bibitem{WWYW2} {\sc W.W.Y. Wong,} 
A commuting vector field approach to some dispersive estimates,
Arch. Math. (B\"asel) 110 (2018), 273--289. 

\end{thebibliography}

% \addcontentsline{toc}{section}{\large References}

\pagestyle{plain}

\end{document}